\documentclass{article}

\usepackage[english]{babel}
\usepackage{amsmath}
\usepackage{enumerate}
\usepackage{amsthm} 

\usepackage{thm-restate}
\usepackage[a4paper,top=3cm,bottom=3cm,left=3cm,right=3cm,marginparwidth=1.75cm]{geometry}

\usepackage[numbers]{natbib}
\usepackage{pifont}

\usepackage{amsmath}

\usepackage{amsfonts}
\usepackage{bbm}
\usepackage{graphicx}
\usepackage[colorlinks=true, allcolors=blue]{hyperref}
\usepackage{mathtools}
\usepackage{breqn}
\usepackage{float}
\usepackage{xcolor}
\setlength{\parindent}{0em} 
\setlength{\parskip}{1em} 

\newtheorem{theorem}{Theorem}[section]
\newtheorem{lemma}[theorem]{Lemma}

\newtheorem{proposition}[theorem]{Proposition}

\newtheorem{corollary}[theorem]{Corollary}

\newtheorem{remark}[theorem]{Remark}
\newtheorem{example}[theorem]{Example}

\newtheorem{definition}[theorem]{Definition}

\numberwithin{equation}{section}

\def\E{{\mathbb{E}}}
\newcommand{\R}{{\mathbb R}}
\newcommand{\Mid}{{\ \Big|\ }}

\definecolor{blue0}{RGB}{0,77,153} 
\definecolor{red0}{RGB}{179,0,77} 
\definecolor{green0}{RGB}{134,219,76} 
\definecolor{gray0}{RGB}{84,97,110}

\usepackage{mathtools}
\usepackage{authblk}
\usepackage{comment}
\usepackage{bm}
\usepackage{xurl}
\mathtoolsset{showonlyrefs}
\typeout{get arXiv to do 4 passes: Label(s) may have changed. Rerun}

\title{{Fourier-Laplace transforms} in polynomial Ornstein-Uhlenbeck volatility models}
\begin{document}

\author[1]{Eduardo Abi Jaber\thanks{eduardo.abi-jaber@polytechnique.edu. The first author is grateful for the financial support from the Chaires FiME-FDD, Financial Risks, Deep Finance \& Statistics at Ecole Polytechnique.}}
\author[2,3]{Shaun (Xiaoyuan) Li\thanks{{xiaoyuan.li@axa-im.com. The second author acknowledges the financial support {from AXA Investment Managers}}}}
\author[1]{Xuyang Lin\thanks{xuyang.lin@polytechnique.edu. \\{\quad\; We would like to thank Alessandro Bondi, Louis-Amand Gérard, Camille Illand, Sergio Pulido and Ning Tang for fruitful discussions.}}}
\affil[1]{Ecole Polytechnique, CMAP}
\affil[2]{AXA Investment Managers}
\affil[3]{Université Paris 1 Panthéon-Sorbonne, CES}

\maketitle
{\begin{abstract}
We consider the Fourier-Laplace transforms of a {broad} class of polynomial Ornstein-Uhlenbeck (OU) volatility models, including the well-known Stein-Stein, Schöbel-Zhu, one-factor Bergomi, and the recently introduced Quintic OU models motivated by the SPX-VIX joint calibration problem. We show the connection between the joint {Fourier-Laplace} functional of the log-price and the integrated variance, and the solution of an infinite dimensional Riccati equation. Next, under some non-vanishing conditions of the Fourier-Laplace transforms, we establish an existence result for such Riccati equation and we provide a discretized approximation of the joint characteristic functional that is exponentially entire. On the practical side, we develop a numerical scheme to solve the stiff infinite dimensional Riccati equations and demonstrate the efficiency and accuracy of the scheme for pricing SPX options and volatility swaps using Fourier and Laplace  inversions, with specific examples of the Quintic OU and the one-factor Bergomi models and their calibration to real market data.
\end{abstract}
}

\begin{description}
\item[JEL Classification:] G13, C63, G10.
\item[Keywords:] Stochastic volatility, Derivative pricing, Fourier methods, Riccati equations, SPX-VIX calibration
\end{description}

{\section{Introduction}

Fourier inversion techniques hold a pivotal role in  stochastic volatility modeling, particularly in the context of option pricing and hedging (\citet*{andersen2000jump,carr1999option,eberlein2010analysis,fang2009novel,lewis2001simple,lipton2001mathematical}). They offer the dual advantage of significantly reducing computational time while maintaining a remarkable degree of accuracy, especially when compared to standard Monte Carlo methods. When it comes to model calibration, the applicability of Fourier methods is paramount, as thousands of derivatives across various maturities and strikes need to be evaluated simultaneously in real time.

Despite their numerical advantage, Fourier techniques have traditionally been confined to specific continuous 
stochastic volatility models where the characteristic function of the log-price is known in (semi)-closed form. These models usually exhibit Markovian affine structures in the sense of \citet*{duffie2003affine} such as the renowned Heston \cite{heston1993closed},  Stein-Stein \cite{stein1991stock} and Schöbel-Zhu \cite{schobel1999stochastic} models, 
together with some of their non-Markovian Volterra counterparts (\citet*{abi2019lifting,abi2022characteristic,abi2019affine, cuchiero2020generalized,el2019characteristic, gatheral2019affine}). The key ingredient in all these models is to compute the characteristic function of the log-price by solving a specific system of deterministic  Riccati  equation.  

More recently, Fourier techniques have also found applications in Signature volatility  models in \citet*{abi2024signature} and \citet*{cuchiero2023joint}, where the volatility process is modeled as a linear functional of the path-signature of semi-martingales (e.g.~a Brownian motion). In such models, certain characteristic functionals have been related to  non-standard infinite dimensional system of Riccati ordinary differential equation (ODE). However, there exists no general theory regarding the existence of solutions for such equations, except for the specific result in   \citet*[Proposition 6.2]{cuchiero2023signature} which provides the existence of a solution to the infinite dimensional ODE for the case of the characteristic function of powers of a single Brownian motion modulo a non-vanishing condition of the characteristic function.  The primary challenge comes from the intricate questions on analyticity of the logarithm of the characteristic function. Furthermore, numerically solving these equations poses challenges due to their stiffness and complexity. This forms our primary motivation:  to establish theoretical results within a specific framework for a larger class of Riccati equations related to integrated quantities of power series of Ornstein-Uhlenbeck processes and to develop more suitable numerical schemes.

We demonstrate that Fourier techniques can be effectively extended to a  broad class of flexible models previously considered infeasible as they fall beyond the conventional class of affine diffusions, including the celebrated one-factor Bergomi model (\citet{bergomi2005smile,dupire1992arbitrage}) that has been shown to fit well to the SPX smiles
and the recently introduced Quintic Ornstein-Uhlenbeck (OU) model of \citet*{jaber2022quintic}, which has demonstrated remarkable capabilities in fitting jointly the SPX-VIX volatility surface for  maturities between one week to three months supported by extensive empirical studies on more than 10 years of data in \citet*{abi2024volatility,abi2022joint}.

We refer to the class of models covered in this paper as the class of polynomial OU models, where the volatility of the log-price, denoted by $\sigma$, is defined as a power series of an Ornstein-Uhlenbeck process. Our 
three main theoretical results regarding the joint Fourier-Laplace functional of the log-price and integrated variance process are:
\begin{enumerate}[(i)]
    \item A verification result on the expression of the joint Fourier-Laplace  functional in terms of a solution to an infinite dimensional system of Riccati ODE in Theorem~\ref{thm1.1},
    \item The existence of a solution to such a system of ODE modulo a non-vanishing condition of the joint characteristic functional in Theorem~\ref{thm1.2},
    \item An approximation procedure for the joint Fourier-Laplace functional in terms of exponentially entire expressions in Theorem~\ref{T:Discrete}.
\end{enumerate}
 On the practical side, we devise a numerical scheme  to  solve this system of infinite dimensional (stiff) Riccati equations making the application of Fourier pricing for models such as the one-factor Bergomi and the quintic OU models usable in practice. We demonstrate the efficiency and accuracy of the scheme for pricing SPX options and volatility swaps using Fourier and Laplace inversions. We also successfully calibrate  using Fourier techniques the Quintic OU and the one-factor Bergomi models on real market data to highlight the stability and robustness of our numerical method across a wide range of realistic parameters values. We provide a Python notebook implementation here: \url{https://colab.research.google.com/drive/1VCVyN1qQmLgOWjOy4fbWftyDqEQdm5n5?usp=sharing}.

The paper is organised as follows: in Section \ref{intro_poly_ou_model}, we introduce the class of polynomial OU volatility models. In Section \ref{S3}, we present our three main theoretical results. In Section \ref{S:numerical}, we design a numerical scheme for solving the infinite dimensional Riccati ODE and test our scheme on pricing SPX vanilla options and volatility swaps for the one-factor Bergomi and quintic OU volatility models of various maturities and calibrate both models using real market data. Proofs of theorems discussed in Section \ref{S3} are collected in Section \ref{section5} onwards.
}

{\textbf{Notations.}
For a  power series $q$, we denote by $q_k$ the coefficients of $x^k$, i.e.~$q(x)=\sum_{k=0}^{\infty}q_kx^k$. We denote by $|q|$ the \textbf{absolute power series} of $q$, i.e.~$|q|(x) = \sum_{k=0}^{\infty} |q_k|x^k$. We remark  that $|q|_k=|q_k|$. It is well-known that if $p$ and $q$ have an infinite radius of convergence, then {   $p(x)q(x)=\sum_{k=0}^\infty(p*q)_kx^k$ also has an infinite radius of convergence, with $(p*q)_k:=\sum_{l=0}^k p_lq_{k-l}$.} Unless stated otherwise, we will assume that all power series in this paper have infinite radius of convergence. Moreover, we specify that the term \textit{continuity} means \textit{joint continuity} when applied to a function taking $(t,x)$ as variables. For example, a function $f:[0,T]\times \mathbb{R} \to \mathbb{C}$ is $C^\infty$ in $x$ if  all its space derivatives in $x$ exist and are continuous in $(t,x)$. We also adopt the following notations for the partial derivatives: $f_t=\frac{\partial f}{\partial t}, f_x=\frac{\partial f}{\partial x}, f_{xx}=\frac{\partial^2 f}{\partial x^2}$ when they exist. We say that a function { $f:\R\to \mathbb{C}$} is (real)-entire on $\R$, if it is equal to a  power series $q$ (with infinite radius of convergence), that is  for all $x\in\R$,  $f(x)=\sum_{k=0}^\infty q_kx^k.$  We say that it is (complex)-entire when $\R$ is replaced by $\mathbb{C}$. Clearly, any real-entire function admits a complex-entire extension. When there is no-ambiguity, we will simply use the word entire.   We say that a function does not vanish if it is non-zero at each point of its domain.

\section{Polynomial Ornstein-Uhlenbeck volatility models}\label{intro_poly_ou_model}

We consider the  class of polynomial Ornstein-Uhlenbeck (OU) models for a stock price $S$ with the stochastic volatility process $\sigma$ expressed as a power series of an OU process $X$:
\begin{equation}
  \begin{aligned}
    d S_t &= S_t \sigma_tdB_t, \quad  S_0 >0,\\
    \sigma_t &= g_0(t)  p(X_t), \quad p(x) = \sum_{k= 0}^{\infty} p_k x^k,\\
    dX_t &= (a+bX_t)dt + c dW_t, \quad X_0 \in \R,\\
\end{aligned}\label{polynomial_model}
  \end{equation}
with $a,b, p_k \in \mathbb R$, $c\neq 0$, $ B=\rho W + \sqrt{1-\rho^2} W^{\perp}$ and $\rho \in [-1,1]$. Here $(W,W^{\perp})$ is a two-dimensional Brownian motion on a risk-neutral filtered probability space $(\Omega, \mathcal F,(\mathcal F_t)_{t\geq 0}, \mathbb Q )$ satisfying the usual conditions. 
 The deterministic bounded input curve   $g_0:[0,T] \to \mathbb R_{{+}}$ allows the model to match certain term structures of volatility, e.g.~the forward variance curve since we have for $g_0(t) := \sqrt{\xi_0(t) / \E[p^2(X_t)]}$:
\begin{align}\label{eq:fwdvarcalib}
    \mathbb E\left[ \int_0^t \sigma^2_s ds \right] = \int_0^t \xi_0(s) ds, \quad t\geq 0.
\end{align}
The real-valued coefficients $(p_k)_{k\geq 0}$ are such that the power series $p$ has infinite radius of convergence, i.e.~$\sum_{k=0}^{\infty} |p_k| |x|^k<\infty$ for all $x\in \R$ and $\int_0^T \E[p^2(X_s)]ds < \infty$ so that the stochastic integral is well-defined. This is the case for instance when $p$ is a finite polynomial  or  the exponential function. These two specifications already  provide several interesting models used in practice:
\begin{itemize}
    \item Stein-Stein \cite{stein1991stock} and Schöbel-Zhu \cite{schobel1999stochastic} model:  $p(x) = x$;
    \item Bergomi model \cite{bergomi2005smile,dupire1992arbitrage}: 
    $p(x) = \exp(x)$;
    \item Quintic OU model \cite{jaber2022quintic}: $p(x) =p_0 + p_1x + p_3x^3 + p_5x^5, \quad p_0, p_1, p_3, p_5\geq 0$. 
\end{itemize}

We will consider the following class of power series $p$ for which $\int_0^T \E[p^2(X_s)]ds<\infty$.

\begin{definition}
\label{2.3}
A power series $p:x\mapsto \sum_{k=0}^{\infty}p_kx^k$ is said to be $\textbf{negligible to double factorial}$ if the power series $\sum_{k=0}^{\infty}(k-1)!!p_kx^k$ has an infinite radius of convergence, i.e.
\[ \limsup_{k\rightarrow\infty}(|p_k|(k-1)!!)^{\frac{1}{k}}=0,
\]
with the convention $(-1)!!=0!!=1$.
\end{definition}

By Definition \ref{2.3}, a power series $p$ negligible to double factorial also has infinite radius of convergence. This allows us to show that our class of polynomial Ornstein-Uhlenbeck volatility  models \eqref{polynomial_model}  is well-posed in Proposition~\ref{def1}. For this  we need a simple lemma about the absolute power series which will be useful later.

\begin{lemma}
\label{B.6}
{_}
\vspace{-0.4cm}
\begin{enumerate}[(i)]
    \item for all $x\in \R$, $|p(x)|\leq |p|(|x|)$ and $|p|$ is monotonically increasing on $\R^+$; 
    \item for all $x,y,z\geq 0$, $|p|(x+y+z)\leq (|p|(3x)+|p|(3y)+|p|(3z)).$
\end{enumerate}
\end{lemma}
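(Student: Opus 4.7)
The plan is to prove both parts by direct manipulation of the defining power series, exploiting only the triangle inequality, non-negativity of $|p_k|x^k$ on $\R^+$, and the elementary bound $(x+y+z)^k \leq 3^k(x^k+y^k+z^k)$ for $x,y,z\geq 0$.

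For part (i), I will start from the absolute convergence of $p$ at every real point (which holds by hypothesis of infinite radius of convergence). The bound $|p(x)|\leq |p|(|x|)$ then follows by applying the triangle inequality termwise to $p(x)=\sum_{k\geq 0} p_k x^k$ and noting that $|p_k x^k|=|p_k|\,|x|^k$. For monotonicity of $|p|$ on $\R^+$, I will take $0\leq x\leq y$ and observe that every summand $|p_k|x^k \leq |p_k|y^k$ is non-negative, so the inequality survives termwise summation.

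For part (ii), the core observation is that for $x,y,z\geq 0$ we have $x+y+z\leq 3\max(x,y,z)$, which combined with the termwise inequality
\begin{equation}
\max(x,y,z)^k \leq x^k+y^k+z^k, \qquad k\geq 0,
\end{equation}
gives $(x+y+z)^k \leq 3^k(x^k+y^k+z^k) = (3x)^k+(3y)^k+(3z)^k$. Multiplying by $|p_k|\geq 0$ and summing over $k$, using that all three series $|p|(3x),|p|(3y),|p|(3z)$ converge (infinite radius of convergence), yields the claimed inequality $|p|(x+y+z)\leq |p|(3x)+|p|(3y)+|p|(3z)$.

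There is no real obstacle here: the lemma is a routine consequence of absolute convergence and elementary scalar inequalities. The only mild care needed is to justify rearranging and term-by-term comparison of the series, which is standard under the infinite radius of convergence assumption imposed globally in the \textbf{Notations} paragraph.
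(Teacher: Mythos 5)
Your proof is correct and follows essentially the same approach as the paper: both reduce (ii) to the termwise scalar inequality $(x+y+z)^k \leq (3x)^k+(3y)^k+(3z)^k$ and then multiply by $|p_k|\geq 0$ and sum. The only (inessential) difference is that the paper obtains that inequality via the power-mean/convexity bound $(x+y+z)^k\leq \tfrac{1}{3}((3x)^k+(3y)^k+(3z)^k)$, while you use the slightly more elementary $x+y+z\leq 3\max(x,y,z)$ together with $\max(x,y,z)^k\leq x^k+y^k+z^k$.
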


\begin{proof}
$(i)$ is obvious; for $(ii)$, it suffices to notice that $ (x+y+z)^k\leq \frac{1}{3}((3x)^k+(3y)^k+(3z)^k)\leq ((3x)^k+(3y)^k+(3z)^k)$.
\end{proof}

\begin{proposition}
\label{def1} 
 Let $g_0:[0,T]\to \R$ be a measurable and bounded function. Let $p$ be a power series with infinite radius of convergence such that 
 $p^2$ is negligible to double factorial.  Then, 
\begin{align}
    \label{eq:intp2} \int_0^T \mathbb E[p^2(X_t)]dt<\infty,
\end{align}
so that the stochastic integral $\int_0^{\cdot} \sigma_s dB_s$, with $\sigma_s = g_0(s)  p(X_s)$,  is well-defined  and 
\begin{align}\label{eq:Sexplicit}
S_t = S_0 \exp \left( 
-\frac  12 \int_0^t  \sigma_s^2 ds +  \int_0^t \sigma_s dB_s \right), \quad t\geq 0,    
\end{align}
is the unique strong solution to \eqref{polynomial_model}.
\end{proposition}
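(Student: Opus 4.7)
The plan is first to establish the integrability bound \eqref{eq:intp2}, after which the well-posedness of the stochastic integral and the explicit form \eqref{eq:Sexplicit} follow by standard arguments. Solving the linear OU SDE explicitly, $X_t$ is Gaussian with mean $m_t$ and variance $v_t^2$, both continuous (hence bounded) in $t$ on $[0,T]$. Setting $q:=p^2$ and applying Lemma~\ref{B.6}(i) pointwise, $q(X_t)=p(X_t)^2\leq |q|(|X_t|)$, so by monotone convergence
\begin{equation*}
\mathbb{E}\bigl[p^2(X_t)\bigr]\leq \sum_{k\geq 0}|q_k|\,\mathbb{E}\bigl[|X_t|^k\bigr].
\end{equation*}

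Next I would establish the central moment estimate
\begin{equation*}
\mathbb{E}\bigl[|X_t|^k\bigr]\leq (k-1)!!\,(|m_t|+v_t)^k.
\end{equation*}
Writing $X_t=m_t+v_tZ$ with $Z\sim\mathcal N(0,1)$, the triangle inequality and binomial expansion yield $|X_t|^k\leq \sum_{j=0}^k\binom{k}{j}|m_t|^{k-j}v_t^j|Z|^j$; then taking expectation and using the classical identity $\mathbb{E}[|Z|^j]\leq(j-1)!!$ together with the monotonicity $(j-1)!!\leq(k-1)!!$ for $j\leq k$ gives the bound. Inserting it into the previous display,
\begin{equation*}
\mathbb{E}\bigl[p^2(X_t)\bigr]\leq \sum_{k\geq 0}|q_k|(k-1)!!\,(|m_t|+v_t)^k,
\end{equation*}
which is finite and uniformly bounded in $t\in[0,T]$ because $p^2$ is negligible to double factorial (Definition~\ref{2.3}) and $|m_t|+v_t$ is bounded on $[0,T]$. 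Integrating in $t$ gives \eqref{eq:intp2}.

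Finally, since $g_0$ is bounded, $\int_0^T\mathbb{E}[\sigma_s^2]ds<\infty$, so $\int_0^{\cdot}\sigma_s\,dB_s$ is a well-defined square-integrable martingale. The linear SDE $dS_t=S_t\sigma_t\,dB_t$ then admits a unique strong solution given by the stochastic exponential \eqref{eq:Sexplicit}, as verified by applying It\^o's formula to $\log S_t$.

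The principal obstacle is the matching between the Gaussian moment growth $(k-1)!!$ and the coefficient growth allowed by the negligible-to-double-factorial assumption; this is precisely the reason Definition~\ref{2.3} is formulated in terms of double factorials. Any cruder Gaussian moment bound (e.g.~$\mathbb{E}[|Z|^k]\leq k!$) would force a strictly stronger summability condition on $p$ and would rule out canonical examples such as $p=\exp$ underlying the Bergomi model.
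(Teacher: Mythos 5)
Your proposal is correct, and the core ingredients are the same as the paper's: the Gaussianity of $X_t$, the absolute-power-series bound $p^2(X_t)\leq |p^2|(|X_t|)$ from Lemma~\ref{B.6}(i), the moment identity $\mathbb{E}[|Z|^j]\leq(j-1)!!$, and the negligible-to-double-factorial hypothesis. The difference is in how you handle the non-centered part of $X_t$. The paper decomposes $X_t=e^{bt}X_0+w(t)+\widetilde W_t$ into three terms, applies the cruder inequality $|p^2|(x+y+z)\leq|p^2|(3x)+|p^2|(3y)+|p^2|(3z)$ from Lemma~\ref{B.6}(ii), and then treats the deterministic and Gaussian pieces separately (the factor $3^k$ being harmless by the scaling stability in Lemma~\ref{lemB.12}(iii)). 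You instead write $X_t=m_t+v_tZ$ with $m_t,v_t$ bounded on $[0,T]$, expand $(|m_t|+v_t|Z|)^k$ by the binomial theorem, and use the monotonicity $(j-1)!!\leq(k-1)!!$ for $j\leq k$ (which does hold, by a short induction on $n!!\geq(n-1)!!$) to collapse the double-factorial weight to a single outer factor, giving the clean bound $\mathbb{E}[|X_t|^k]\leq(k-1)!!\,(|m_t|+v_t)^k$. Your route bypasses Lemma~\ref{B.6}(ii) entirely and produces a slightly sharper and more compact estimate; the paper's route is more modular and reuses lemmas it needs elsewhere (the factor-3 splitting reappears in the proof of Proposition~\ref{B.19}). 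Both are valid; your version is arguably the cleaner self-contained proof of this particular proposition, while the paper's is better aligned with the machinery developed for the later existence theorem. Your final paragraph on well-posedness of the stochastic integral and the explicit solution $S$ matches the paper's (brief) treatment.
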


\begin{proof} 
Under \eqref{eq:intp2}, it is straightforward to obtain that the unique strong solution $S$ to \eqref{polynomial_model} is given by \eqref{eq:Sexplicit}. It suffices to prove \eqref{eq:intp2}. For this,  we note that the explicit solution of $X$ is given by
\begin{align}\label{def}
X_{t}=e^{{b}t} X_{0}+a\int_{0}^{t}e^{{b} (t-s)} d s+c\int_{0}^{t} e^{{b}(t-s)} d W_{s}.
\end{align}
We set $w(t) :=a\int_{0}^{t}e^{{b} (t-s)} d s$, which is a deterministic continuous function of $t$, and  $\widetilde W_t := c\int_{0}^{t} e^{{b}(t-s)}  d W_{s}$, which is a Gaussian random variable $\sim \mathcal{N}(0, c^2\frac{e^{2bt}-1}{2b})$ for $t\in[0,T]$. Applying Lemma  \ref{B.6} yields
\begin{align}
\left \vert \int_{0}^{T} |p^2|\left(e^{bs}X_0+w(s)+\widetilde{W}_s\right) d s \right \vert
&\leq\int_{0}^{T} |p^2|\left(e^{bs}|X_0|+|w(s)|+|\widetilde{W}_s|\right) d s\\
&\leq\int_{0}^{T} \sum_{k=0}^{\infty}|p^2|_k\left((3e^{bs}|X_0|)^k+|3w(s)|^k+|3\widetilde{W}_s|^k\right) ds,
\end{align}
and since $e^{bs}\leq e^{|b|T}$ with $w(s)$ bounded in $[0,T]$, there exists a constant $C_T\geq 0$ such that
\begin{align}
\int_{0}^{T} \sum_{k=0}^{\infty}|p^2|_k\left((3e^{bs}|X_0|)^k+|3w(s)|^k+|3\widetilde{W}_s|^k\right) ds\leq\int_{0}^{T} \sum_{k=0}^{\infty}|p^2|_k\left(|3e^{|b|T}|X_0|^k+(3C_T)^k+|{3\widetilde{W}_s}|^k\right) d s.
\end{align}
By assumption, $p^2$ has an infinite radius of convergence, then so does $|p^2|$, therefore
\begin{align}
\sum_{k=0}^{\infty}|p^2|_k|3e^{|b|T}|X_0|^k<\infty, \quad \sum_{k=0}^{\infty}|p^2|_k(3C_T)^k<\infty,
\end{align}
and thus we only need to prove $$ \int_0^T \mathbb E[|p^2|({|3\widetilde{W}_t|})]dt<\infty.$$
The random variable $3\widetilde W_t$ is Gaussian with bounded variance in $[0,T]$, so there exists a constant $C>0$ such that $\mathbb{E}[{|3\widetilde{W}_t|}^k]= \mathbb{E}[|CZ|^k]$ for all $t\in[0,T]$, where $Z$ is a standard normal variable. 
Given $\mathbb{E}[|Z|^k]\leq (k-1)!!$, hence
$$\mathbb{E}[|p^2|(C|Z|)]\leq \sum_{k=0}^\infty |p^2|_k(k-1)!!C^k < \infty,$$
since $p^2$ is negligible to double factorial. This completes the proof.
\end{proof}

\begin{remark}
In Proposition \ref{def1}, $p^2$ needs to be negligible to double factorial.  As we will see later in Lemma \ref{lemB.12} below, the square of a power series negligible to double factorial is also negligible to double factorial. Hence $p$ negligible to double factorial is sufficient for Proposition~\ref{def1} to hold.
\end{remark}

\begin{example} Polynomials of finite degree (e.g.~the Stein-Stein and the Quintic OU models) and the exponential function (e.g.~the one-factor Bergomi model) are clearly negligible to double factorial. For the exponential function $p(x) = e^{\varepsilon x}, \varepsilon >0$, it suffices to observe that $\lim_{k\rightarrow\infty}(\frac{\varepsilon^k(k-1)!!}{k!})^{\frac{1}{k}}=\lim_{k\rightarrow\infty}(\frac{\varepsilon^k}{k!!})^{\frac{1}{k}}=0$. However,  the function $p(x)=e^{\varepsilon x^2}$ is not negligible to double factorial. Indeed,  $\limsup_{k\rightarrow\infty}(|p_k|(k-1)!!)^{\frac{1}{k}}=\sqrt{2\varepsilon}$.
In particular, {if $b>0$}, then the expectation of $p(X_t)$ does not exist for all $t\geq \frac{1}{2c^2 \epsilon}$.
\end{example}
}

\section{The joint Fourier-Laplace transform}\label{S3}
Our goal is to compute the joint Fourier-Laplace transform of the log-price and integrated variance 
$$\mathbb{E} \left[ \exp\left(\int_t^T g_1(T-s) d\log S_s + \int_t^T g_2(T-s) \sigma_s^2 ds\right) \Mid  \mathcal{F}_t \right],   \quad t\leq T,$$
for some complex-valued functions $g_1,g_2: [0,T]\to \mathbb C$.

\begin{remark}
\label{g1g2} 
If $g_1,g_2$ are measurable functions $[0,T]\to \mathbb C$, such that $\Re(g_1)=0,\Re(g_2)\leq 0$, and $p^2$ is negligible to double factorial, then the exponential above exists and with modulus of at most $1$, so that the conditional expectation is well-defined.
\end{remark}

It follows from  the Markov property of $X$ that the above conditional expectation can be reduced to computing   the deterministic measurable function $F:[0,T]\times \R \to \mathbb C$ given by  
\begin{align}\label{eq:defF}
F(t,x):= \mathbb{E} \left[  \exp\left(\int_t^T g_1(T-s) d\log S_s + \int_t^T g_2(T-s) \sigma_s^2 ds\right) \Mid X_t =x  \right]. 
\end{align}

We will present three main results related to the computation of the characteristic functional $F$ in \eqref{eq:defF}. The first result can be seen as a verification result and uncovers an affine structure in infinite dimension in terms of the powers $(1,X, X^2, X^3,  \ldots)$ by making a connection with infinite dimensional Riccati deterministic equations (Theorem~\ref{thm1.1} and Corollary~\ref{Cor:Riccati}). The second result is concerned with the  existence of a solution to the infinite dimensional Riccati deterministic equation (Theorem~\ref{thm1.2}). The last result provides an approximation procedure for obtaining the characteristic functional (Theorem~\ref{T:Discrete}).

\subsection{A verification result}

Our first main result  uncovers an affine structure in infinite dimension expressed in powers $(1,X, X^2, \ldots)$ and makes a connection with the following system of infinite dimensional deterministic Riccati 
equations:
    \begin{align}
        \psi_k'(t) &= \Big(g_2(t) + \frac{g_1(t)}{2}(g_1(t)-1)\Big)g_0^2(T-t) (p*p)_k \\
        &+ bk\psi_k(t) + a(k+1)\psi_{k+1}(t) + \frac{c^2(k+2)(k+1)}{2}\psi_{k+2}(t)\\
        &+ \frac{c^2}{2}(\widetilde \psi (t)*\widetilde \psi (t))_k + \rho g_1(t) g_0(T-t) c (p*\widetilde \psi(t))_k ,\quad \widetilde \psi_k(t): = (k+1)\psi_{k+1}(t), \label{eq:Ric}
        \\ &\\
        \psi_k(0) &
        = 0.\label{eq:Ric_init}
    \end{align}

\begin{theorem}
    \label{thm1.1}
     {Let $g_0:[0,T]\to \R$, $g_1,g_2:[0,T]\to \mathbb C$ be measurable and bounded functions. Let $p$ be a power series with infinite radius of convergence such that 
 $p^2$ is negligible to double factorial.}
Assume that there exists a continuously differentiable solution $(\psi_k)_{\geq 0}$ to the system of infinite dimensional Riccati equations \eqref{eq:Ric_init} such that the power series  $\sum_{k \geq 0}\sup_{t\in[0,T]} |\psi_k(t)|  x^k $   has an infinite radius of convergence. Define the process
\begin{align}\label{eq:U}
    U_t = \sum_{k\geq 0}\psi_k(T-t)X_t^k + \int_0^t g_1(T-s)d \log S_s + \int_0^t g_2(T-s) \sigma_s^2 ds.
\end{align}
Then the process $M:=\exp(U)$ is a local martingale. If in addition $M$ is a true martingale, then the following expression holds for the joint characteristic functional $F$ given in \eqref{eq:defF}:  
\begin{equation} \label{eq:charfun}
   F(t,x)
     = \exp\left(  \sum_{k\geq 0}\psi_k(T-t)x^k \right), \quad t \leq T.
\end{equation}
\end{theorem}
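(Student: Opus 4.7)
The strategy is to apply Itô's formula to $U_t$ from \eqref{eq:U} and show that the drift of $M_t := \exp(U_t)$ vanishes identically, making $M$ a local martingale. The verification formula \eqref{eq:charfun} then follows from the martingale identity $M_t = \mathbb{E}[M_T\mid \mathcal F_t]$ by invoking the boundary condition $\psi_k(0)=0$ and the Markov property of $X$.

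Set $f(t,x):=\sum_{k\geq 0}\psi_k(T-t)x^k$. The first and most delicate step is to establish that $f$ is $C^{1,2}$ on $[0,T]\times\mathbb R$ with derivatives computable term by term. The hypothesis that $\sum_k \sup_{t\in[0,T]}|\psi_k(t)|x^k$ has infinite radius of convergence, combined with the Weierstrass M-test, yields uniform convergence on compact sets and hence continuity of $f$, $f_x(t,x)=\sum_k \widetilde\psi_k(T-t)x^k$, and $f_{xx}(t,x)=\sum_k (k+2)(k+1)\psi_{k+2}(T-t)x^k$. For $f_t$, I would substitute the Riccati expression \eqref{eq:Ric} for $\psi_k'$ and bound each of its terms using the infinite radii of convergence of $|p|$, $|p*p|$, and $\sum_k |\psi_k|x^k$, together with Lemma~\ref{B.6}-type inequalities on absolute power series, to conclude that $\sum_k \sup_t|\psi_k'(t)|x^k$ has infinite radius of convergence as well. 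This justifies term-by-term $t$-differentiation and supplies the required $C^{1,2}$ regularity.

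Next, I would apply Itô's formula to $f(t,X_t)$ using $dX_t=(a+bX_t)dt+c\,dW_t$, and combine it with $d\log S_t=-\tfrac{1}{2}\sigma_t^2\,dt+\sigma_t\,dB_t$ and $B=\rho W+\sqrt{1-\rho^2}W^\perp$ to obtain a semimartingale decomposition of $U$. The quadratic variation of its martingale part produces the cross term $2\rho c\, f_x(t,X_t)g_1(T-t)\sigma_t$. Collecting the drift of $M_t$ via $dM_t=M_t(dU_t+\tfrac{1}{2} d\langle U\rangle_t)$, I would then use the Cauchy-product identities $\sigma_t^2=g_0^2(t)\sum_k(p*p)_k X_t^k$, $f_x^2(t,X_t)=\sum_k(\widetilde\psi(T-t)*\widetilde\psi(T-t))_k X_t^k$, and $p(X_t)f_x(t,X_t)=\sum_k(p*\widetilde\psi(T-t))_k X_t^k$, together with $x\,f_x(t,x)=\sum_k k\psi_k(T-t)x^k$, to rewrite the drift as $M_t\sum_k R_k(t) X_t^k$. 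Each coefficient $R_k(t)$ is precisely the difference between the two sides of \eqref{eq:Ric} evaluated at $T-t$ and therefore vanishes by hypothesis, so $M$ is a local martingale.

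Under the true-martingale assumption, $M_t=\mathbb{E}[M_T\mid\mathcal F_t]$. Since $\psi_k(0)=0$ forces $f(T,X_T)=0$, dividing by the $\mathcal F_t$-measurable factor $\exp(\int_0^t g_1(T-s)d\log S_s + \int_0^t g_2(T-s)\sigma_s^2 ds)$ and conditioning on $X_t=x$ via the Markov property of $X$ yields \eqref{eq:charfun}. The main obstacle is the regularity step: one must verify that $f$ is a genuine $C^{1,2}$ function and that the Cauchy-product rearrangements hold as equalities of actual functions rather than only formal series; once this is in place, the drift cancellation and the final conditioning step are routine.
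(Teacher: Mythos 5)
Your proposal follows exactly the same route as the paper: establish $C^{1,2}$ regularity and term-by-term differentiability of $f(t,x)=\sum_k\psi_k(T-t)x^k$ (the paper's Lemma \ref{lemma5.1}, proved by substituting the Riccati right-hand side into $\psi_k'$ to control $\sum_k\sup_t|\psi_k'(t)|x^k$, just as you sketch), apply Itô's formula to $U$, collect the drift of $M$ using the Cauchy-product identities, match the coefficients of $X_t^k$ against \eqref{eq:Ric} to get a local martingale, and finally use the true-martingale assumption together with $\psi_k(0)=0$ and the Markov property of $X$ to conclude \eqref{eq:charfun}. The argument is correct and not a genuinely different approach.
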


\begin{proof}
The proof is given in Section~\ref{section5}.    
\end{proof}

{ \begin{remark}
 Theorem~\ref{thm1.1} is in the spirit of \cite[Theorem 5.5]{cuchiero2023signature} which provides a similar verification result for the characteristic function of (real)-entire functions of  solutions to stochastic differential equations with (real)-entire coefficients.   Contrary to  \cite[Theorem 5.5]{cuchiero2023signature}, Theorem~\ref{thm1.1} deals with  time-integrated quantities.
\end{remark}
}

A possible strategy for obtaining the representation \eqref{eq:charfun} would involve verifying the assumptions outlined in Theorem \ref{thm1.1}. These assumptions include ensuring the existence of a solution for the system of Riccati equations \eqref{eq:Ric}-\eqref{eq:Ric_init}, such that the power series $(\psi_k)_{k\geq 0}$ has infinite radius of convergence; together with proving that the local martingale $M$ is a true martingale. The latter can typically be achieved by arguing, for instance, that $\sum_{k\geq 0}\Re( \psi_k(t) )x^k\leq 0$ to obtain that $M$ is uniformly bounded by $1$ whenever  $g_1,g_2$ are purely imaginary for instance. 

In the specific case of the Stein-Stein model, i.e.~when $p$ is an affine function, these assumptions are comparatively easier to confirm, as highlighted in the next example.

{\begin{example} In the case of the classical Stein-Stein model \cite{stein1991stock}, the volatility process process is defined as:
\[
\sigma_t = g_0(t)X_t.
\]
This is equivalent to \eqref{polynomial_model} by setting $p_1 =1$ and $p_k=0$ for $k\neq 1$. Notice the convolution term $(p*p)_k = 1$ for $k=2$ and zero otherwise. In this case, the infinite dimensional Riccati equations \eqref{eq:Ric_init}-\eqref{eq:Ric} reduce to the following:
    \begin{align}
    \psi_0'(t) &= a\psi_1(t)+c^2\psi_2(t)+\frac{c^2}{2}\psi_1^2(t),\\
    \psi_1'(t) &= b\psi_1(t)+2a\psi_2(t)+2c^2\psi_1(t)\psi_2(t) + \rho g_1(t) g_0(T-t) c \psi_1(t),\\
        \psi_2'(t) &= \Big(g_2(t) + \frac{g_1(t)}{2}(g_1(t)-1)\Big)g_0^2(T-t)\\
        &+ 2b\psi_2(t)+ 2c^2 \psi_2^2(t) + 2\rho g_1(t) g_0(T-t) c \psi_2(t), \label{eq:Ric_stein}
        \\&\\
        \psi_k(0) &
        = 0, \quad  k\in\{0,1,2\},\label{eq:Ric_init_stein}
        \\
         \psi_k & \equiv 0,  \quad k\geq 3,
    \end{align}
which is a system of standard finite-dimensional Riccati equations whose existence is well-known whenever $g_1,g_2$ are such that 
$$\Re\Big(g_2(t) + \frac{g_1(t)}{2}(g_1(t)-1)\Big)\leq 0, \quad t\leq T.$$
 Hence the characteristic functional of the classical Stein-Stein model is affine in $(1,X,X^2)$:
\begin{equation} \label{eq:charfun_stein}
   F(t,x)
     = \exp\left(\psi_0(T-t) + \psi_1(T-t)x + \psi_2(T-t)x^2 \right), \quad t \leq T,
\end{equation}
for which $\psi_0,\psi_1$ and $\psi_2$ can even be solved explicitly when  $g_0, g_1$ and $g_2$ are constants, see \cite{lord2006rotation}.
\end{example}
}

 Obtaining the representation \eqref{eq:charfun}  under the Stein-Stein model can be attributed to the finite number of terms, i.e.~$(\psi_0,\psi_1,\psi_2)$ of the Riccati equations, involved in the sum. However, when dealing with an infinite sum, a notable challenge arises in proving that the log of the characteristic functional $\log F$ is entire in  the variable $x$ on $ \R$. In Section \ref{S_existence}, we show how to generate a solution for the system of infinite-dimensional Riccati equations {modulo a non-vanishing condition of $\log F$}.  {In Section \ref{S_arpprox}, we provide an expression for the characteristic functional using approximation arguments where the approximations are  entire functions thanks to the} Gaussianity of the process $X$.

In Section \ref{S:numerical}, we numerically illustrate the validity of Fourier pricing in the context of the Bergomi and Quintic OU models using the representation \eqref{eq:charfun}. Readers interested in the numerical implementation can jump directly to Section \ref{S:numerical}.

\subsection{Existence for the Riccati equations}\label{S_existence}

Our second main result generates a solution to the infinite dimensional system of Riccati equations \eqref{eq:Ric}-\eqref{eq:Ric_init} by differentiating the logarithm of the characteristic functional $F$: 
$$\psi_{k}(t):=\left.\frac{1}{k !} \partial_{x}^{k} \log F({T-t}, x)\right|_{x=0},\quad  t \geq 0.
$$
This requires the logarithm of a complex-valued function, see Appendix \ref{A:complexlog} for its precise definition.

\begin{theorem}\label{thm1.2} 
Fix $g_0:[0,T]\to \R$, $g_1,g_2:[0,T]\to \mathbb C$ continuously differentiable such that $ \Re(g_1)=0$ and $\Re(g_2) \leq 0$. Let $p$ be a power series negligible to double factorial.  Then, $F$ in \eqref{eq:defF} is well-defined for any $t\leq T$, $x\in \R$ and
 $(t,x) \mapsto F(t,x)$ is continuous.  If in addition  $F$ does not vanish on $[0,T]\times \R$,   then $\log F$ can be defined as in Definition \ref{defB.2}. Furthermore,  $\log F$ is $C^\infty$ in $x$ and $C^1$ in $t$. In particular, the family of functions 
\begin{align}\label{eq:famriccati}
\psi_{k}(t)=\left.\frac{1}{k !} \partial_{x}^{k} \log F({T-t}, x)\right|_{x=0}, \quad t \geq 0, \quad k\geq 0,    
\end{align}
solves the system of Riccati ODEs  \eqref{eq:Ric}-\eqref{eq:Ric_init}. 
\end{theorem}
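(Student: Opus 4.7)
The plan is to first establish the regularity claims on $F$, then derive the Riccati system by Taylor expanding a PDE for $\log F$ at $x=0$. Throughout I would rely on the affine dependence $X_s^{t,x}=e^{b(s-t)}x+\beta_s^t$ with $\beta_s^t$ a Gaussian piece independent of $x$, so that the entire structure of $p$ translates into entire dependence in $x$ pathwise.

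For well-definedness I would rewrite $d\log S_s=\sigma_s dB_s-\tfrac{1}{2}\sigma_s^2 ds$; since $\Re(g_1)=0$, the only real contribution to the exponent is $\int_t^T \Re(g_2(T-s))\sigma_s^2 ds\le 0$, so the integrand is bounded by $1$ in modulus. Joint continuity of $F$ then follows from dominated convergence with the trivial bound $1$. The key step is showing $F$ is $C^\infty$ in $x$ by differentiating under the expectation iteratively: by Faà di Bruno, $\partial_x^n e^Y$ is $e^Y$ times a polynomial in derivatives $\partial_x^j Y$, each of which involves $p^{(j)}(X_s^{t,x})$ together with a factor $e^{bj(s-t)}$. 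Mimicking the Gaussian moment argument of Proposition~\ref{def1} with Lemma~\ref{B.6}, and using that $p$ is negligible to double factorial (so that the series $\sum_k|p^{(j)}_k|(k-1)!!\,C^k$ converges for every $C$), yields an integrable dominating bound uniformly for $x$ in compact sets. For $C^1$ in $t$ I would perform a Brownian time-shift $s=t+u$, so that the dependence of the integrand on $t$ is through the smooth scalar functions $g_0(t+u),g_1(T-t-u),g_2(T-t-u)$ and the upper limit $T-t$; their continuous differentiability combined with dominated convergence gives $\partial_t F$ continuous. Since $F$ is continuous and nonvanishing, $\log F$ is well-defined through the complex logarithm of Definition~\ref{defB.2} and inherits the regularity.

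Finally, $M_t:=\exp(A_t)F(t,X_t)$ with $A_t:=\int_0^t g_1(T-s)d\log S_s+\int_0^t g_2(T-s)\sigma_s^2 ds$ is a bounded martingale by the Markov property and $|F|\le 1$. Applying It\^o to $F(t,X_t)$ (now $C^{1,2}$) and to $\exp(A_t)$, noting that the quadratic variation $d[A,A]_t=g_1(T-t)^2\sigma_t^2 dt$ combines with the drift of $A$ to yield the $\tfrac{g_1(g_1-1)}{2}$ coefficient, and equating the drift of $M$ to zero gives a pointwise PDE for $F$; dividing by $F$ converts it into
\begin{align}
u_t+u_x(a+bx)+\tfrac{c^2}{2}(u_{xx}+u_x^2)+\Bigl(g_2+\tfrac{g_1(g_1-1)}{2}\Bigr)g_0^2 p^2+\rho c\, g_1 g_0\, p\, u_x=0
\end{align}
for $u=\log F$, with time arguments in $g_0,g_1,g_2$ suppressed. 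Taking $n$ successive $x$-derivatives at $x=0$ and using $\tfrac{1}{n!}\partial_x^n(fg)|_{x=0}=(f*g)_n$ turns the products $p^2$, $p\,u_x$ and $u_x^2$ into the convolutions $(p*p)_n$, $(p*\widetilde\psi)_n$ and $(\widetilde\psi*\widetilde\psi)_n$ in \eqref{eq:Ric}, while the linear part contributes $a(n+1)\psi_{n+1}+bn\psi_n+\tfrac{c^2(n+1)(n+2)}{2}\psi_{n+2}$; the initial condition \eqref{eq:Ric_init} follows from $F(T,\cdot)\equiv 1$. The main obstacle is the uniform $C^\infty$ control on $F$ in $x$, which the negligible-to-double-factorial property is precisely designed to resolve by compensating the $(k-1)!!$ Gaussian moment growth in all dominating sums.
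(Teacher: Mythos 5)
Your high-level plan — exploit the affine dependence $X_s^{t,x}=e^{b(s-t)}x+\beta_s^t$, differentiate under the expectation, derive a nonlinear PDE for $\log F$ via It\^o applied to the martingale $M_t=\exp(A_t)F(t,X_t)$, and Taylor-expand at $x=0$ — is the same blueprint as the paper's, and your final PDE and the way the convolution terms $(p*p)_k$, $(p*\widetilde\psi)_k$, $(\widetilde\psi*\widetilde\psi)_k$ emerge from its Taylor coefficients are correct. But there is a genuine gap at the heart of the regularity argument: the exponent $Y$ in $F(t,x)=\E[e^Y\,|\,X_t=x]$ contains the \emph{stochastic} integral $\int_t^T g_1(T-s)g_0(s)\,p(X_s)\,dB_s$, and this is not a pathwise functional of $(X_s)_{s\le t\le T}$. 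Your proposal to iterate Fa\`a di Bruno and apply dominated convergence implicitly treats $\partial_x^j Y$ as if it were an ordinary parametrized Lebesgue integral bounded pathwise by a power series in $|x|$ and moments of the OU increments. For a stochastic integral this fails: $\partial_x^j Y$ is itself a stochastic integral, defined only as an $L^2$-limit, with no deterministic dominating bound of the form $|q_j|(|x|)+|q_j|(\widetilde W^*_T)$, so the ``integrable dominating bound uniformly on compacts'' you invoke is not available. The same obstruction infects your $C^1$-in-$t$ argument: after time-shifting, the dependence on $t$ still enters through a stochastic integral against $dB$, and ``smooth scalar coefficients plus dominated convergence'' does not control its $t$-regularity.

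The paper's proof is built precisely around removing this obstruction. It first uses the Romano–Touzi conditioning on $\mathcal F_t\vee\mathcal F_T^W$ to integrate out the $W^\perp$-stochastic integral (it is conditionally Gaussian, contributing the $\tfrac12(1-\rho^2)g_1^2 g_0^2 p^2$ term), and then applies It\^o to an antiderivative of $p$ (Lemma \ref{lem3.1}) to convert the remaining $W$-stochastic integral into Lebesgue integrals $\int_t^T u_i(s)p_i(X_s)\,ds$ plus boundary terms $v(T)q(X_T)-v(t)q(X_t)$; this is Lemma \ref{2.16}. Only after this reduction is $F$ written as the expectation of a genuinely pathwise, $C^\infty$-in-$x$ functional of the OU path, and only then can the double-factorial machinery (Lemma \ref{lemB.12}, Proposition \ref{B.19}, the notion of ``estimable'' families) deliver the required dominating bounds for differentiation under the expectation. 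Likewise, the paper does not obtain $f_t$ by a naive time-shift plus dominated convergence; it computes the infinitesimal generator $\lim_{r\to 0^+}\tfrac1r\E^x[f(t_0,X_r)-f(t_0,x)]$ two ways (It\^o, and the Markov/Feynman–Kac decomposition with Lemma \ref{expz}) and equates them to \emph{deduce} the existence and continuity of $f_t$. Finally, a small additional point: asserting that $\log F$ ``inherits the regularity'' of $F$ also needs an argument (the paper's Lemma \ref{C1Cinf}), since the complex logarithm defined by the lifting lemma must be shown to transfer $C^\infty$-in-$x$ and $C^1$-in-$t$ through the covering map $\exp$.
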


\begin{proof}
The proof is given in Section \ref{section6}.
\end{proof}


Theorem~\ref{thm1.2} establishes the existence of a solution to the Riccati ODEs \eqref{eq:Ric}-\eqref{eq:Ric_init} when the coefficients are real as shown in the following corollary. 

\begin{corollary}\label{Cor:Riccati}
    Fix $g_0:[0,T]\to \R$, $g_1=0$ and $g_2:[0,T]\to \mathbb \R_-$ continuously differentiable. Let $p$ be a power series negligible to double factorial.  Then,   $\log F$ is $C^\infty$ in $x$ and $C^1$ in $t$ and the  family of functions $(\psi_{k})_{k\geq 0}$ given by  \eqref{eq:famriccati}
solves the system of Riccati ODEs  \eqref{eq:Ric}-\eqref{eq:Ric_init}. 
\end{corollary}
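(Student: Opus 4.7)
The plan is to derive Corollary~\ref{Cor:Riccati} as a direct consequence of Theorem~\ref{thm1.2}, and so the only nontrivial task is to verify the non-vanishing hypothesis on $F$ in the present setting. The regularity hypotheses of Theorem~\ref{thm1.2} are immediate: $g_1 \equiv 0$ is smooth with $\Re(g_1)=0$, $g_2$ is continuously differentiable with $\Re(g_2)=g_2\leq 0$, and $p$ is negligible to double factorial by assumption. Hence $F$ in \eqref{eq:defF} is well-defined and continuous on $[0,T]\times\R$ by the first conclusion of Theorem~\ref{thm1.2}.

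For the non-vanishing, I would observe that with $g_1\equiv 0$ the expression for $F$ collapses to
\begin{align}
F(t,x) = \mathbb{E}\!\left[\exp\!\left(\int_t^T g_2(T-s)\,\sigma_s^2\,ds\right) \Mid X_t=x\right].
\end{align}
Since $g_2(\cdot)\leq 0$ is real and $\sigma_s^2 \geq 0$ almost surely, the integrand inside the conditional expectation is a real-valued random variable taking values in $(0,1]$. The conditional expectation is therefore itself real and strictly positive, i.e.\ $F(t,x)\in(0,1]$ for every $(t,x)\in[0,T]\times\R$. In particular $F$ does not vanish on $[0,T]\times\R$, so the non-vanishing hypothesis of Theorem~\ref{thm1.2} is satisfied.

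The conclusion now follows by a direct invocation of Theorem~\ref{thm1.2}: $\log F$ is well-defined in the sense of Definition~\ref{defB.2}, is $C^\infty$ in $x$ and $C^1$ in $t$, and the family of coefficients $(\psi_k)_{k\geq 0}$ defined by \eqref{eq:famriccati} solves the infinite-dimensional Riccati system \eqref{eq:Ric}--\eqref{eq:Ric_init}. There is essentially no obstacle here beyond recognizing that strict positivity of the conditional expectation automatically upgrades the general non-vanishing assumption of Theorem~\ref{thm1.2} into a verifiable property in this real-coefficient regime; no additional analytic work is required.
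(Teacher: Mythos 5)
Your proof is correct and follows exactly the same route as the paper's: reduce $F$ to the Laplace transform of the integrated variance when $g_1\equiv 0$, observe that $g_2\leq 0$ and $\sigma_s^2\geq 0$ force $F>0$ so the non-vanishing hypothesis of Theorem~\ref{thm1.2} holds trivially, and then invoke Theorem~\ref{thm1.2}.
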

\begin{proof}
In this case, $F$ reduces to the Laplace transform of the integrated variance:
$$ F(t,x) = \E\left[ \exp\left(\int_{t}^T g_2(T-s)\sigma_s^2ds\right) \Mid  X_t = x\right], $$
which shows that $F>0$ and  that it trivially satisfies the non-vanishing condition on $[0,T]\times \R$ in Theorem~\ref{thm1.2}, so that an application of  Theorem~\ref{thm1.2} yields the result. 
\end{proof}

{\begin{remark}
Theorem~\ref{thm1.2} is in the spirit of \cite[Proposition 6.2]{cuchiero2023signature}, which provides the existence of a solution to the infinite dimensional ODE for the case of the characteristic function of powers of a single Brownian motion modulo   similar assumptions.  In contrast, Theorem~\ref{thm1.2} deals with more involved time-integrated quantities which requires a more delicate analysis.
\end{remark}
}

Although Theorem \ref{thm1.2} does not establish that $\log F(t,\cdot)$ is entire, its analyticity can be  inferred from the properties of solutions to parabolic partial differential equations (PDEs).

\begin{remark}
    In order to prove Theorem \ref{thm1.2}, we establish the following PDE for $f(t,x)=F(t,x)\exp(v(t)q(x))$ in Theorem of \ref{thmB.1}:
    $$\begin{cases}
      f_t(t,x)+f_x(t,x)(a+bx)+\frac{1}{2}c^2f_{xx}(t,x)+f(t,x)\sum_{i=1}^3u_i(t)p_i(x)=0,\\
      f(T,x)=\exp(v(T)q(x)).
    \end{cases}$$
where $v,u_i$ are {continuous} and $p_i$ are analytic functions.
Following \cite[Theorem 6.2 in Section II.2.]{eidelman1969parabolic},   given that functions $u_i(t)$ are {continuous} in $[0,T]$, $a+bx, p_i(x)$ are bounded, Hölder continuous and analytic in variable $x$ within any open bounded set of $\R$, one would obtain that $f(t,\cdot)$ is analytic. Notice that $\exp(v(t)q(\cdot))$ is also analytic, then so is $F(t,\cdot)$.     Since analyticity is a local property, if $F$ does not vanish, then $\log F(t,\cdot)$ is also analytic. Therefore, we can deduce a local representation: for any $t\leq T$,  there exists $r_t>0$  such that $$ F(t,x) = \exp\left( \sum_{k= 0}^\infty \psi_k(T-t) x^k  \right),  \quad  x \in (-r_{t},r_{t}),$$
where the $(\psi_k)_{k\geq 0}$ are defined by \eqref{eq:famriccati}.  In our case, both $p_i$ and $q$ are not only analytic but also entire. Extending the proof of analyticity using PDE techniques to establish that  $F(t, \cdot)$ is entire might be possible{, and that will lead to a global representation as in \eqref{eq:charfun}}. However, this needs a more delicate analysis at the PDE level, diverging from our probabilistic approach {that allowed us to obtain  a global representation  in terms of approximations by exponentially entire functions, see Theorem~\ref{T:Discrete} below}. Such a question holds independent interest for PDEs in its own right.
\end{remark}

{





}

\subsection{The Fourier-Laplace transform by approximation}\label{S_arpprox}

Our third main result provides an approximation procedure for the characteristic functional $F$. The main idea is to approximate the Riemann sum on the sample paths of the Ornstein-Uhlenbeck process $X$ to exploit an underlying Gaussian density in finite dimension.

For this, we need first to get rid of the stochastic integrals and express them in terms of functions of $X_T$ and Lebesgue's integrals on $(X_{s})_{s\leq T}$. This is done by combining Itô's Lemma and the Romano-Touzi conditioning trick \cite{romano1997contingent} in the next Lemma.

\begin{lemma}
\label{2.16}
For $p$  negligible to double factorial, $g_0:[0,T]\to \R$, $g_1,g_2:[0,T]\to \mathbb C$ continuously differentiable with ${ \Re(g_1)=0}, 
 \Re(g_2)\leq 0$,  
there exists power series $q,p_i,i=1,2,3$ with infinite radius of convergence, and functions $v,u_i,i=1,2,3$ such that:  
\begin{align}
F(t,x)=\mathbb{E} \left[  \exp\left(\int_t^T \sum_{i=1}^3u_i(s)p_i(X_s)ds+v(T)q(X_T)-v(t)q(X_t)\right) \Mid X_t=x \right].
\end{align}
In addition, $u_i,v$ are continuous on $[0,T]$, $ \Re(u_1)\leq 0, \Re(u_2),\Re(u_3),\Re(v)=0$, $p_1=p^2$ such that $p_i,q$ are all negligible to double factorial.
\end{lemma}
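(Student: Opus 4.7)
The plan is to eliminate both stochastic integrals appearing in $\int_t^T g_1(T-s)\,d\log S_s$ so that the exponent becomes a Lebesgue integral of power series in $(X_s)_{s\in[t,T]}$, plus boundary terms at $t$ and $T$. First, expand $d\log S_s = -\tfrac12 \sigma_s^2\,ds + \rho\sigma_s\,dW_s + \sqrt{1-\rho^2}\,\sigma_s\,dW^\perp_s$ and handle the $W^\perp$--part via the Romano--Touzi conditioning trick: since $\sigma$ is adapted to the filtration generated by $W$ and $W^\perp$ is independent of $W$, the random variable $\sqrt{1-\rho^2}\int_t^T g_1(T-s)\sigma_s\,dW^\perp_s$ is, conditionally on $\mathcal{F}^W$, a centered Gaussian with variance $(1-\rho^2)\int_t^T g_1^2(T-s)\sigma_s^2\,ds$. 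The hypothesis $\Re(g_1)=0$ is essential here: it makes $g_1^2$ real and non-positive, so the conditional Laplace transform is well-defined and the tower property yields
\[
F(t,x) = \mathbb{E}\Big[\exp\Big(\textstyle\int_t^T \tilde u(s)\sigma_s^2\,ds + \rho\int_t^T g_1(T-s)\sigma_s\,dW_s\Big)\,\Big|\,X_t=x\Big],
\]
with $\tilde u(s) := g_2(T-s) - \tfrac12 g_1(T-s) + \tfrac{1-\rho^2}{2}g_1^2(T-s)$ and $\Re(\tilde u)\leq 0$.

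I would then remove the remaining $dW$--integral by applying It\^o's formula to $s\mapsto v(s) q(X_s)$, choosing $q$ and $v$ so that the diffusion coefficient matches exactly. The natural choice $q(x):=\int_0^x p(y)\,dy$ and $v(s):=\rho\, g_1(T-s)g_0(s)/c$ gives $c\,v(s) q'(X_s) = \rho g_1(T-s)\sigma_s$, so integrating $d[v(s)q(X_s)]$ on $[t,T]$ and solving for the stochastic integral leads to
\[
\rho\!\int_t^T\!\! g_1(T-s)\sigma_s\,dW_s = v(T)q(X_T)-v(t)q(X_t) - \int_t^T\!\!\Big[v'(s)q(X_s) + v(s)\big((a+bX_s)p(X_s)+\tfrac{c^2}{2}p'(X_s)\big)\Big]ds.
\]
Substituting back gives the announced representation with $p_1:=p^2$, $p_2:=q$, $p_3(x):=(a+bx)p(x)+\tfrac{c^2}{2}p'(x)$, $u_1(s):=g_0^2(s)\tilde u(s)$, $u_2:=-v'$, $u_3:=-v$, all continuous thanks to the $C^1$ hypothesis on $g_0,g_1,g_2$.

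The structural claims are then a short verification. The sign conditions are immediate: $\Re(u_1)\leq 0$ uses $\Re(g_2)\leq 0$ together with $\Re(g_1^2)=-(\Im g_1)^2\leq 0$; $v$, $v'$ (hence $u_2,u_3$) are purely imaginary because $g_1$ is, giving $\Re(v)=\Re(u_2)=\Re(u_3)=0$. The negligibility-to-double-factorial claims for $q$ and $p_3$ reduce to elementary $\limsup$ bookkeeping on the shifted coefficient sequences $(p_{k-1}/k)$, $(p_{k-1})$ and $((k+1)p_{k+1})$, together with the estimate $(k-1)!!/(k-2)!! = O(\sqrt{k})$ and $k^{1/k}\to 1$; the corresponding claim for $p_1=p^2$ is the forthcoming closure lemma for squares of such series. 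The main delicacy of the argument is the Romano--Touzi step: with a complex-valued integrand $g_1\sigma$ one must carefully justify the conditional Laplace-transform computation and the associated Fubini-style interchange, and it is precisely the assumption $\Re(g_1)=0$ that keeps every exponential bounded along the way and so preserves integrability.
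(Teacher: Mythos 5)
Your proposal follows essentially the same route as the paper: you expand $d\log S$, apply the Romano--Touzi conditioning on $\mathcal F^W$ to remove the $W^\perp$ integral (yielding the extra $\tfrac{1-\rho^2}{2}g_1^2\sigma^2$ term), and then apply It\^o's formula to $s\mapsto v(s)q(X_s)$ to trade the remaining $dW$-integral for a Lebesgue integral plus boundary terms --- which is exactly the content of the paper's auxiliary Lemma~\ref{lem3.1} applied with $h(s)=\rho g_1(T-s)g_0(s)$, $r_2=\tfrac1c\int_0^\cdot p$, $r_1=-\tfrac1c\big((a+bx)p+\tfrac{c^2}{2}p'\big)$. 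Your identifications of $u_i,p_i,v,q$ agree with the paper's up to a harmless relabeling of $p_2\leftrightarrow p_3$ and a factor $c$ shifted between $v$ and $q$, and the sign and double-factorial checks are as the paper carries them out via Lemma~\ref{lemB.12}.
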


\begin{proof}
    The proof and the precise expressions of $u_i,v,p_i,q$ are given in Appendix \ref{sectionA}. 
\end{proof}

We now introduce $F_n$ the discretized version of $F$.
\begin{definition}
\label{defFn}
For $\mu_n=\frac{T}{n}\sum_{i=0}^{n-1} \delta_{\frac{iT}{n}}$, 
 where $\delta_{t}$ is the Dirac mass at a point $t$, we define the function
\begin{align}
F_n(t,x):=\mathbb{E} \left[ \exp\left( \int_t^{T} \sum_{i=1}^3u_i(s)p_i(X_s)\mu_n(ds)+v(T)q(X_{T})-v(t)q(X_{t})\right) \Mid  X_t=x \right].
\end{align}
\end{definition}

\begin{definition}\label{extension}
Given  an entire function $g$  on $\R$, we define the extension of $g$ to the complex plane by  $g_c(z)=\sum_0^\infty \frac{g^{(k)}(0)}{k!}z^k, z\in \mathbb{C}$. 
\end{definition} 

\begin{remark}
Given an entire function $g$ on $\R$, the power series $\sum_{k=0}^{\infty}\frac{g^{(k)}(0)}{k!}x^k$ has infinite radius of convergence, so does $\sum_{k=0}^\infty \frac{g^{(k)}(0)}{k!}z^k$, thus $g_c(z)$ is well-defined in $\mathbb{C}$, and is entire on $\mathbb{C}$.
\end{remark}

\begin{theorem}
\label{T:Discrete} Suppose that $p$ is negligible to double factorial, $g_0:[0,T]\to \R$, $g_1,g_2:[0,T]\to \mathbb C$ continuously differentiable such that $\Re(g_1)=0, \Re(g_2) \leq 0$, then $F_n(t,\cdot)$ as specified in Definition~\ref{defFn} is well-defined and entire in $x \in \mathbb R$. If in addition, the extension of $F_n(t,\cdot)$ to the complex plane  $(F_n(t,\cdot))_c$ {does not vanish} for{ all $n \in \mathbb{N}$} and $t\leq T$, {then, $\log F_n$ is entire in $x \in \mathbb R$ such that 
  \begin{equation}
 F_n(t,x)
     = \exp\left(  \sum_{k\geq 0}\psi_{n,k}(T-t)x^k \right), \quad t \leq T,
\end{equation}
and 
\begin{equation}
 F(t,x) = {\lim_{n\to \infty} F_n(t,x)},
  \end{equation}
where for  {all $n \in \mathbb{N}$,} the family of functions $\psi_{n,k}$ is defined by  
\begin{align}
    \psi_{n,k}(t) &= \phi_{n,k}(t) - \rho g_1(t)g_0(T-t)\frac{p_{k-1}}{ck},  \quad k\geq 1,\\ \psi_{n,0}(t)&=\phi_{n,0}(t),
\end{align}
where the family $\phi_{n,k}$ 
solves a (step-wise) `discretized' system of  Riccati ODEs:}
$$
\begin{aligned}
\phi_{n,k}^{\prime}(t) & =b k \phi_{n,k}(t)+a(k+1) \phi_{n,k+1}(t)+\frac{c^{2}(k+2)(k+1)}{2} \phi_{n,k+2}(t) \\
& \quad +\frac{c^{2}}{2}(\widetilde{\phi}_n(t) * \widetilde{\phi}_n(t))_{k}, \quad \widetilde{\phi}_{n,k}=(k+1)\phi_{n,k+1},  \quad \frac{nt}{T}\notin \{0, 1,2,\ldots,n\}\\\phi_{n,k}(t) & ={ \lim_{s\rightarrow t^-}\phi_{n,k}(s)}+\frac{T}{n}\sum_{i=1}^3u_i(T-t)(p_i)_{k}, \quad  \frac{nt}{T}\in \{1,2,\ldots,n\}\\
\phi_{n,k}(0) & =v(T)q_k
\end{aligned}
$$
\end{theorem}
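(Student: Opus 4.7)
The plan is to decompose the proof into four stages: entirety of $F_n(t,\cdot)$; the exponential representation via a globally defined entire logarithm; derivation of the discretized Riccati system by a Feynman–Kac/PDE argument; and pathwise Riemann-sum convergence $F_n\to F$.

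First, I would show entirety. Since $\mu_n$ charges only the finitely many grid points $t_i = iT/n \in [t,T)$, iterating the Markov property of $X$ writes $F_n(t,x)$ as a finite-dimensional Gaussian integral over $(X_{t_i},X_T)$, whose integrand is a product of factors $\exp(\tfrac{T}{n}u_i(t_j)p_i(X_{t_j}))$ and $\exp(v(T)q(X_T)-v(t)q(x))$. Each conditional mean is affine in $x$, so for each Gaussian realisation the integrand is entire in $x$; the negligible-to-double-factorial hypothesis on $p_i,q$, via Lemma~\ref{B.6} as in the proof of Proposition~\ref{def1}, furnishes a uniform integrable majorant on compact complex disks, and a Morera/dominated-convergence argument then confirms that $(F_n(t,\cdot))_c$ is entire on $\mathbb{C}$. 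If in addition this extension does not vanish, simple connectedness of $\mathbb{C}$ yields a single-valued entire logarithm, and its Taylor expansion at $x=0$ supplies coefficients $\psi_{n,k}(T-t)$ realising the claimed exponential formula on $\R$.

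For the Riccati derivation I would set $\widetilde F_n(t,x):=F_n(t,x)\exp(v(t)q(x))$, which removes the boundary term $-v(t)q(X_t)$ from the integrand. On each open stratum $(t_i,t_{i+1})$ between consecutive atoms of $\mu_n$, Feynman–Kac gives the linear backward PDE
\begin{align}
\partial_t \widetilde F_n + (a+bx)\partial_x \widetilde F_n + \tfrac{c^2}{2}\partial_{xx}\widetilde F_n = 0,
\end{align}
together with the multiplicative jump $\widetilde F_n(t_i^-,x) = \widetilde F_n(t_i,x)\exp(\tfrac{T}{n}\sum_i u_i(t_i)p_i(x))$ and terminal condition $\widetilde F_n(T,x)=\exp(v(T)q(x))$. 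Writing $\log\widetilde F_n(t,x) = \sum_k \phi_{n,k}(T-t)x^k$ and substituting, the identity $\partial_{xx}\widetilde F_n/\widetilde F_n = \partial_{xx}G + (\partial_x G)^2$ produces the linear terms $bk\phi_{n,k}+a(k+1)\phi_{n,k+1}+\tfrac{c^2(k+2)(k+1)}{2}\phi_{n,k+2}$ and the quadratic convolution $\tfrac{c^2}{2}(\widetilde\phi_n*\widetilde\phi_n)_k$; taking logarithms turns the multiplicative jump into the additive rule for $\phi_{n,k}$; and the terminal condition supplies $\phi_{n,k}(0)=v(T)q_k$. The affine relation $\psi_{n,k}(t)=\phi_{n,k}(t)-\rho g_1(t)g_0(T-t)p_{k-1}/(ck)$ for $k\geq 1$ is then read off from the explicit expressions of $v,q$ given in the proof of Lemma~\ref{2.16}, in which $v(t)q(x)$ is, up to a function of $t$ alone absorbed in $\phi_{n,0}$, a primitive of $\rho g_1(t)g_0(T-t)p(x)/c$.

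Finally, for $F_n\to F$ I would argue pathwise: for almost every $\omega$, continuity of $X_\cdot(\omega)$ and of $s\mapsto u_i(s)$ makes $s\mapsto u_i(s)p_i(X_s(\omega))$ continuous on $[0,T]$, so the Riemann-sum convergence $\int_t^T\sum_i u_i(s)p_i(X_s)d\mu_n(s)\to \int_t^T\sum_i u_i(s)p_i(X_s)ds$ is classical; the sign conditions $\Re(u_1)\leq 0$, $\Re(u_2)=\Re(u_3)=\Re(v)=0$ from Lemma~\ref{2.16} bound the integrand by $1$ in modulus, so dominated convergence concludes. The main obstacle is the Riccati derivation: one must justify the term-by-term coefficient matching for infinite power series and control the radius of convergence of $\sum_k \phi_{n,k}(T-t)x^k$ uniformly in $t$ between jumps, so that the formal substitution into the PDE is genuinely rigorous. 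The uniform-in-complex-$x$ Gaussian tail estimate in the entirety stage is also delicate but proceeds along the template of Proposition~\ref{def1}.
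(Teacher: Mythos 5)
Your proposal matches the paper's proof in all essential respects: entirety of $F_n(t,\cdot)$ is obtained by reducing to a finite-dimensional Gaussian integral (the paper isolates this as Lemma~\ref{C.6}); the single-valued entire logarithm comes from the non-vanishing of $(F_n(t,\cdot))_c$ on the simply connected plane (the paper's Lemma~\ref{hololift}); the jump PDE for $f_n(t,x)=F_n(t,x)\exp(v(t)q(x))$ is established via a discretized Feynman--Kac computation (Theorem~\ref{C.2}) and yields the step-wise Riccati system (Theorem~\ref{C.3}); the affine reparametrization $\psi_{n,k}=\phi_{n,k}-v(T-\cdot)q_k$ is read off from the explicit $v,q$ of Lemma~\ref{2.16}; and $F_n\to F$ follows from pathwise Riemann-sum convergence plus the trivial bound $|\cdot|\leq 1$ (Lemma~\ref{C.5}).

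One small but worth-noting nuance concerns the obstacle you flag at the end. You worry about controlling the radius of convergence of $\sum_k\phi_{n,k}(T-t)x^k$ uniformly in $t$ between jumps so that ``formal substitution into the PDE'' is rigorous. The paper sidesteps uniform-in-$t$ control entirely by decoupling two issues: the \emph{derivation} of the $\phi_{n,k}$-ODEs is done purely locally, by taking $\partial_x^k$ of the PDE for $g_n(t,x)=\log f_n(T-t,x)$ at $x=0$ (using only $C^\infty$ in $x$, $C^1$ in $t$ on each stratum, plus the interchange-of-partials Lemma~\ref{B.26}), with the convolution term coming from the Leibniz rule applied to $g_x^2$; while the \emph{global} exponential representation $F_n(t,x)=\exp(\sum_k\psi_{n,k}(T-t)x^k)$ follows, separately and for each fixed $t$, from the fact that $\log F_n(t,\cdot)$ is entire (Lemma~\ref{hololift}), which automatically gives infinite radius of convergence of its Taylor series at $0$. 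So no uniform-in-$t$ estimate is actually needed; the two steps you combined are kept independent in the paper, which is cleaner. Apart from this, and a cosmetic point that $q_0=0$ so $\psi_{n,0}=\phi_{n,0}$ exactly rather than ``up to a $t$-dependent constant absorbed in $\phi_{n,0}$,'' the proposal is sound.
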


\begin{proof}
The proof is given in Section \ref{section7}. 
\end{proof}

\begin{remark}
    One of the  features of Theorem~\ref{T:Discrete} is to obtain that the  functions $F_n(t,\cdot)$ are entire in $x\in \mathbb R$ with no additional assumptions. Unfortunately,  $\log F_n(t,\cdot)$ inherits this property only when  the extension of $F_n(t,\cdot)$  to the complex plane $(F_n(t,\cdot))_c$ does not vanish, see Lemma~\ref{hololift}. The requirement of $F_n$ being non-vanishing   on the real line is not enough,  see Remark~\ref{remarkB.4}. For this reason, we could not  obtain a  Corollary  of Theorem~\ref{T:Discrete}  without the non-vanishing condition to deal with  Laplace transforms as we did  in Corollary~\ref{Cor:Riccati}. 
\end{remark}



\section{Numerical illustration}\label{S:numerical}

\subsection{Numerical scheme for high-dimensional Riccati equations}\label{numerical_schemes}

In this section, we show how to solve the infinite dimensional Riccati equations $\psi$ in \eqref{eq:Ric}-\eqref{eq:Ric_init} numerically for derivative pricing and hedging. The idea is to solve a truncated version of the ODE by assuming $\psi_k \approx 0, k>M$ for some integer $M$.

Even the truncated version of the infinite dimensional system of Riccati equations can be extremely difficult to solve. Standard techniques such as the Runge–Kutta methods usually fail to solve the ODE,  especially when the parameters $b$ and $c$ in \eqref{polynomial_model} are large. For example, see \cite[Figures 2 and 3]{cuchiero2023signature} for other attempts to a similar problem.

We present a customized algorithm that combines variation of constants and the implicit Euler scheme. For this section alone, all matrix and vector indices will start from 0. With some abuse of notation, we re-write the Riccati ODE \eqref{eq:Ric_init} in the following matrix form:
\begin{equation}
  \begin{aligned}
    \psi'(t) &= P(t) + A \psi(t) + Q \psi(t) + \frac{c^2}{2} (\widetilde \psi(t)* \widetilde \psi(t)) + l(t)N \psi(t),
  \end{aligned}\label{ricc_matrix_form}
\end{equation}
with $\psi(t)$ denoting the vector $(\psi_0(t), \psi_1(t), \ldots)^\top$ and $\psi'(t)$ the vector of element-wise derivative of $\psi(t)$. $P(t)$ is a vector with it's $k^{th}$ element being $\Big(g_2(t) + \frac{g_1(t)}{2}(g_1(t)-1)\Big)g_0^2(T-t) (\alpha*\alpha)_{k}$ for $k \in \{0,1,\ldots \}$. $A$ is a diagonal matrix with its diagonal $A_{k,k} = bk$. $Q$ is an upper triangular matrix with $Q_{k,k+1}=a(k+1)$, $Q_{k,k+2}=c^2(k+1)(k+2)/2$ and zero elsewhere. The term $(\widetilde \psi(t)* \widetilde \psi(t))$ denotes the discrete convolution of the vectors $\widetilde \psi(t)$, with its $k^{th}$ element given by $(\widetilde \psi(t)* \widetilde \psi(t))_{k}$. Finally, $N$ is a matrix with $N_{j,k} = kp_{j+1-k}$ where $p_m$ denotes the power series coefficient such that $p_{m} = 0$ whenever $m<0$, with $l(t)$ collecting the term $c\rho g_1(t) g_0(T-t)$.

Variation of constants gives:
\begin{equation}
\begin{aligned}
    \psi(t_{i+1}) &= e^{A\Delta t} \psi(t_i) + \int_{t_i}^{t_{i+1}} e^{A (t_{i+1} -s )} P(s) ds + \int_{t_i}^{t_{i+1}} e^{A (t_{i+1} -s )} Q \psi(s) ds\\
    &+\frac{c^2}{2}\int_{t_i}^{t_{i+1}} e^{A (t_{i+1} -s )}(\widetilde \psi(s)* \widetilde \psi(s)) ds + \int_{t_i}^{t_{i+1}} e^{A (t_{i+1} -s )}l(s)N\psi(s) ds,
\end{aligned}\label{var_constants}
\end{equation}
with $\Delta = t_{i+1}-t_{i}$. By truncating the vector $\psi$ up to some level $M$ and assuming $(\psi_k)_{k>M} = 0$, we approximate the solution of $\psi(t_{i+1})$ in \eqref{var_constants} by the following quasi-implicit scheme:
\begin{align*}
    \psi(t_{i+1}) &\approx e^{A\Delta t} \psi(t_i) + G_{\Delta} P(t_i) + G_{\Delta} Q \psi(t_{i+1})\\
    &+\frac{c^2}{2} G_{\Delta} (\widetilde \psi(t_{i})* \widetilde \psi(t_{i+1})) + G_{\Delta} l(t_i)N\psi(t_{i+1}),
\end{align*}
where $G_{\Delta} = \int_{t_i}^{t_{i+1}} e^{A (t_{i+1} -s )}ds$ is a diagonal matrix with $G_{k,k} = (e^{b(k-1)\Delta}-1)/(b (k-1)), 1<k\leq M+1$ and $G_{1,1} = \Delta$. Similarly, $Q$ and $N$ are now $(M+1) \times (M+1)$ matrices as defined above, with $\psi$ and $P(t_i)$ both vectors of dimension $M+1$ .

For the quadratic term $(\widetilde \psi(t_{i})* \widetilde \psi(t_{i+1}))$, we define
\[
R^{(i)}\widetilde \psi(t_{i+1}) := (\widetilde \psi(t_{i})* \widetilde \psi(t_{i+1})), 
\]
where $R^{(i)}$ is a $(M+1) \times (M+1)$ matrix with $R^{(i)}_{j,k} = (j+2-k)k\psi_{{}_{j+2-k}}(t_i), (\psi_{{}_{z}}(t_i))_{z<0}=0$. We can now solve $\psi$ iteratively by:
\begin{align*}
   \psi(t_{i+1}) &\approx  J^{-1} \Big(e^{A\Delta t} \psi(t_i) + G_{\Delta} P(t_i)\Big),
\end{align*}
where $J = \bigg(I-G_{\Delta}\Big(\frac{c^2}{2}R^{(i)} + l(t_i)N + Q \Big) \bigg)$. { Detailed implementation of our algorithm can be found in a Python notebook here: \url{https://colab.research.google.com/drive/1VCVyN1qQmLgOWjOy4fbWftyDqEQdm5n5?usp=sharing}.} 

\begin{lemma}
    Assume that $\Re(\psi) \leq 0$, the matrix $J$ is invertible.
\begin{proof}
Since $\Re(\psi) \leq 0$, the matrix $G_{\Delta}\Big(\frac{c^2}{2}R^{(i)} + l(t_i)N + Q \Big) \bigg)$ is a upper triangular matrix, where the real part of the its diagonal is less than zero. Thus $J$ is also an upper triangular matrix with non-zero diagonal elements, thus completes the proof.
\end{proof}
\end{lemma}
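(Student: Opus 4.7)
The strategy is to argue that $J$ is triangular with all diagonal entries non-zero, which immediately yields invertibility since $\det J$ then reduces to the product of the diagonal entries. My plan is to verify the support pattern of $K := \frac{c^2}{2}R^{(i)} + l(t_i)N + Q$, identify its diagonal, and use the sign assumption $\Re(\psi) \leq 0$ to control that diagonal.

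First I would analyse the support of each summand of $K$. The matrix $Q$ is strictly upper triangular by definition, since its only nonzero entries are at $(k,k+1)$ and $(k,k+2)$. For $R^{(i)}_{j,k} = (j+2-k)\,k\,\psi_{j+2-k}(t_i)$ and $N_{j,k} = k\,p_{j+1-k}$, I would note that the prefactor $k$ forces the first column to vanish, while the conventions $\psi_z = 0$ for $z<0$ and $p_m=0$ for $m<0$ constrain the other admissible indices. After premultiplying by the diagonal matrix $G_\Delta$, the support pattern of $G_\Delta K$ coincides with that of $K$, so the matrix inherits whatever triangularity $K$ has. This is the step I expect to be the main obstacle: one has to check carefully, with consistent index conventions, that the convolution-induced blocks $R^{(i)}$ and $N$ really fit into an upper triangular envelope and do not produce non-trivial subdiagonal contributions.

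Next I would read off the diagonal entries. The contribution of $Q$ is zero; from $R^{(i)}$ the entry is $R^{(i)}_{j,j} = 2j\psi_2(t_i)$, and from $N$ it is $N_{j,j} = j p_1$. Hence
\[
K_{j,j} \;=\; c^2 j\,\psi_2(t_i) \;+\; l(t_i)\,j\,p_1 ,
\]
and $(G_\Delta K)_{j,j} = G_{j,j} K_{j,j}$ with $G_{j,j}>0$. In the Fourier--Laplace setting one has $l(t) = c\rho g_1(t)g_0(T-t)$ with $\Re(g_1)=0$, so $l(t_i)\,j\,p_1$ is purely imaginary; combining this with the hypothesis $\Re(\psi_2(t_i)) \leq 0$ yields $\Re(K_{j,j}) \leq 0$, and therefore $\Re\big((G_\Delta K)_{j,j}\big) \leq 0$.

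Finally, I would conclude: the diagonal entries of $J = I - G_\Delta K$ satisfy $\Re(J_{j,j}) = 1 - \Re((G_\Delta K)_{j,j}) \geq 1 > 0$, so none of them vanishes. Since $J$ is upper triangular (by the first step) with non-zero diagonal, $\det J = \prod_j J_{j,j} \neq 0$ and $J$ is invertible. The entire argument is short once the structural step is in place, so the only substantive work is the index bookkeeping that certifies the upper-triangular envelope of $R^{(i)}$ and $N$.
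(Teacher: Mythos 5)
Your proposal follows the same route as the paper: assert that $K := \frac{c^2}{2}R^{(i)} + l(t_i)N + Q$ (hence $G_\Delta K$) is upper triangular and then read off the diagonal of $J = I - G_\Delta K$. Your diagonal computation $K_{j,j} = c^2 j\,\psi_2(t_i) + l(t_i)\,j\,p_1$ is correct, and invoking $\Re(g_1)=0$ to render $l(t_i)\,j\,p_1$ purely imaginary, combined with $\Re(\psi_2)\leq 0$, gives $\Re(K_{j,j})\leq 0$ and $\Re(J_{j,j})\geq 1$; you are actually slightly more explicit here than the paper, which relies only on $\Re(\psi)\leq 0$ without surfacing the $l(t_i)p_1$ term.

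The triangularity assertion, however — which you flag as the ``main obstacle'' and leave unverified — is false in the generality the lemma claims. The formulas $R^{(i)}_{j,k} = (j+2-k)\,k\,\psi_{j+2-k}(t_i)$ and $N_{j,k} = k\,p_{j+1-k}$ are supported on $1\leq k\leq j+1$: the first column vanishes and there is one superdiagonal, but there are genuine below-diagonal entries. Concretely, $R^{(i)}_{2,1} = 3\psi_3(t_i)$ and $N_{2,1} = p_2$ lie strictly below the diagonal and are generically nonzero (for the one-factor Bergomi model $p_2=\eta^2/8\neq 0$, and $\psi_3$ vanishes identically only in the Stein--Stein case). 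The structural reason is the forward coupling in the Riccati system: $(\widetilde\psi*\widetilde\psi)_k$ and $(p*\widetilde\psi)_k$ involve $\psi_1,\ldots,\psi_{k+1}$, so lower-index components feed into higher-index equations. Hence $K$ is lower-Hessenberg with a zero first column, not upper triangular, and the determinant argument based solely on the diagonal does not yield invertibility. The paper's own proof asserts the same upper-triangularity without verification, so the gap is not merely bookkeeping you postponed — it is a real obstruction, except in the Stein--Stein specialization where all $\psi_k$, $k\geq 3$, vanish and the Hessenberg band collapses to triangular.
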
 \label{lemma_matrix_invertibility}

Matrix $Q$ contains very large coefficients resulting from the term $\Big(c^2(k+2)(k+1)/2\Big)$ when $k$ is large. This term introduces numerical instability so we  capped $(k+2)(k+1)$ to some level $k_{max}^2$ to ensure the scheme does not blown up.

We first test our algorithm on the term $\E[\exp(-\frac{W_t^4}{4!})]$ considered in \cite[Figures 2 and 3]{cuchiero2023signature},  which can be expressed by the solution of a particular case of the Riccati ODE in \eqref{eq:Ric}-\eqref{eq:Ric_init} with initial condition $\psi_4(0) = -1/4!$, $(a,b,g_1, g_2) = 0$ and $c=1$. The reference value  can be computed via a numerical integration quadrature with respect to the Gaussian density, allowing us to evaluate our algorithm's performance. Figure \ref{fig:brownian_example} illustrate numerical convergence in terms of $M$, with step size $n = 100$ and $k_{max} = 15$ fixed.  {We clearly observe the instability of the Runge-Kutta scheme with increasing truncation $M$, whereas the scheme we propose remains stable with the increase of $M$.}

  \begin{figure}[H]
    \centering    \includegraphics[width=0.9\textwidth]{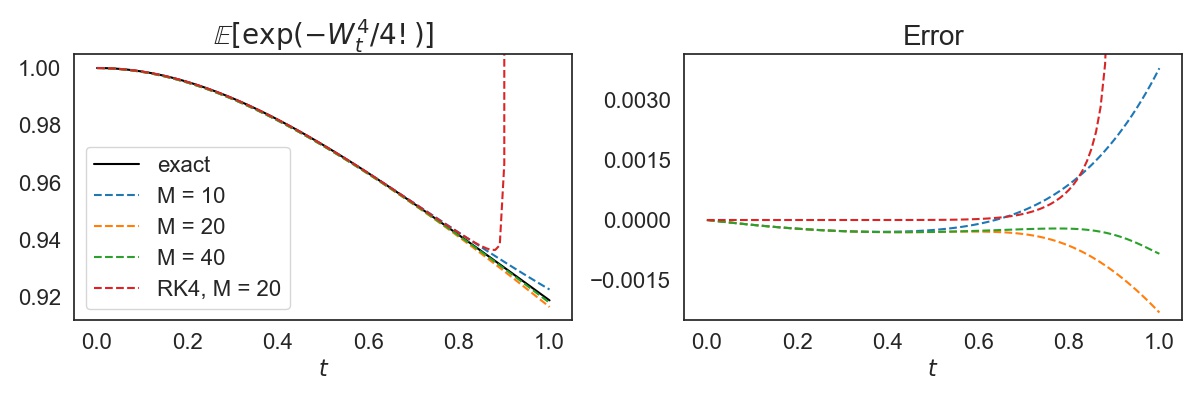}%
    \caption{Numerical solution of $\E[\exp(-\frac{W_t^4}{4!})]$ with different values of $M$ using our algorithm vs. the exact solution. For comparison the explicit Runge-Kutta 4 solution with $M=20$ are added in red.}
    \label{fig:brownian_example}
  \end{figure}

\subsection{Pricing SPX derivatives via Fourier}

Our numerical scheme gives us direct access to the characteristic function of $\log S$ to price derivatives via Fourier inversion techniques, {using the expression \eqref{eq:charfun} with $g_2(t)=0$ and $g_1(t) = iu$, for $u\in \mathbb R$}. Specifically, we show how one can price European options via Fourier techniques for the Quintic Ornstein-Uhlenbeck Model \cite{jaber2022quintic} and the one-factor Bergomi model \cite{bergomi2005smile,dupire1992arbitrage}. This also serves as a  numerical validation of the representation~\eqref{eq:charfun}. 

{There are several Fourier inversion techniques available in the literature to price European-style Vanilla call and put options. Here we adapt the pricing formula suggested in \cite{lewis2001simple} which involves only one integral to evaluate:
\begin{equation}\label{call_price_lewis}
C_t(S_t, K, T) := \E\left[(S_T-K)^{+} \vert \mathcal{F}_t\right] = S_t-\frac{K}{\pi}\int_0^\infty \Re\left[e^{\left(iu+\frac{1}{2}\right)k_t} \varphi \left(u-\frac{i}{2} \right)\right]\frac{du}{u^2+\frac{1}{4}},
\end{equation}
where $C_t(S_t,K,T)$ denotes the European call option price with strike $K$ and maturity $T-t$, 
$k_t:=\log(K/S_t)$ is the log-moneyness and $\varphi(u):=F(t,x)$ the Fourier-Laplace transform of $\log({S_T}/{S_t})$ by fixing $g_1 \equiv iu$ and $g_2 \equiv 0$. We use the the representation of $F(t,x)$ in \eqref{eq:charfun} and compute the improper integral numerically via the Gauss-Laguerre quadrature, which has been demonstrated to be efficient, see \cite{abi2024signature,hurd2010fourier}.

To speed up the computation of $C_t$, we add a control variate to reduce the number of evaluation of $\varphi(u)$ for different $u$:
\begin{equation}\label{call_price_control}
C_t(S_t, K, T) = \widehat C_t(S_t, K, T) - \frac{K}{\pi}\int_0^\infty \Re\left[e^{\left(iu+\frac{1}{2}\right)k_t} \left (\varphi \left(u-\frac{i}{2} \right) - \widehat \varphi \left(u-\frac{i}{2} \right) \right)
  \right]\frac{du}{u^2+\frac{1}{4}},
\end{equation}
where $\widehat C_t(S_t, K, T)$ is the appropriate call price of the control variate and $\widehat \varphi \left(u\right)$ is the Fourier-Laplace transform of $\log({S_T}/{S_t})$ of the control variate.

In this section, we choose the Heston model as the control variate, which admits a closed form characteristic function that is also affine in its state variables, see \cite{heston1993closed}. The Heston model parameters can be efficiently selected via a standard optimization algorithm such that the difference between $\varphi \left(u-\frac{i}{2} \right)$ and $\widehat \varphi \left(u-\frac{i}{2} \right)$ is minimized. Of course, other control variates with explicit characteristic functions such as the Black \& Scholes model can also be used.
}

\subsubsection{ Quintic Ornstein-Uhlenbeck volatility model}
The volatility process $\sigma_t$ under the Quintic OU Model takes the form of:

\begin{equation}\label{quintic_model}
  \begin{aligned}
    \sigma_t &= \sqrt{\frac{\xi_0(t)}{\E \left[p(X_t)^2\right]}}p(X_t), \quad       p(x) =p_0 + p_1 x + p_3 x^3 + p_5 x^5,\\
      X_t &= \varepsilon^{\alpha} \int_0^t e^{\alpha\varepsilon^{-1}(t-s)} dW_s,
  \end{aligned}
  \end{equation}
  with $\varepsilon>0$ and $\alpha\leq 0$. The non-negative coefficients $p_0,p_1,p_3,p_5\geq 0$ $(p_2=p_4 = 0)$ ensure {a negative leverage effect as well as the martingality of $S$ whenever $\rho\leq0$,} see \cite{jaber2022quintic} . The particular parametrization with $X$ means $a = 0, b = \alpha \varepsilon^{-1}$ and $c = \varepsilon^{\alpha}$ from \eqref{polynomial_model}. This model has shown to produce remarkable joint fits to both SPX and VIX implied volatility surfaces \cite{abi2022joint, jaber2022quintic}.

{Since $X$ is an OU process which can be simulated exactly, one could be tempted to use Monte Carlo to estimate the SPX derivatives with appropriate control variates. However, the calibrated values of $\varepsilon$ are usually very small and  $\alpha$ is negative, pushing the model effectively into a fast regime with large mean reversion of order $\alpha \varepsilon^{-1}$ and large vol of vol $\varepsilon^\alpha$. It is known the standard Euler-scheme for pricing SPX derivatives can reproduce large estimation bias for longer maturities $T$ due to the highly erratic paths of $\sigma$, requiring finer step size or other asymptotic approximation techniques \cite{fouque2003multiscale,fouque2016second}. Pricing via Fourier methods hence presents an attractive alternative given the increased efficiency and accuracy.} To demonstrate, we choose the following parameters: 
$\rho = -0.65$, $\alpha = -0.6$, $(p_0, p_1, p_3, p_5) = (0.01,1,0.214,0.227)$, $\xi_0(t) = 0.025$, $\varepsilon=1/52$ {which are typical values one can expect from calibrating the model to SPX and VIX smiles from \cite{jaber2022quintic}}. Figure \ref{quintic_ou_riccati} shows the convergence of SPX implied volatility of different maturities as the truncation level $M$ increases:

  \begin{figure}[H]
    \centering    \includegraphics[width=0.8\textwidth]{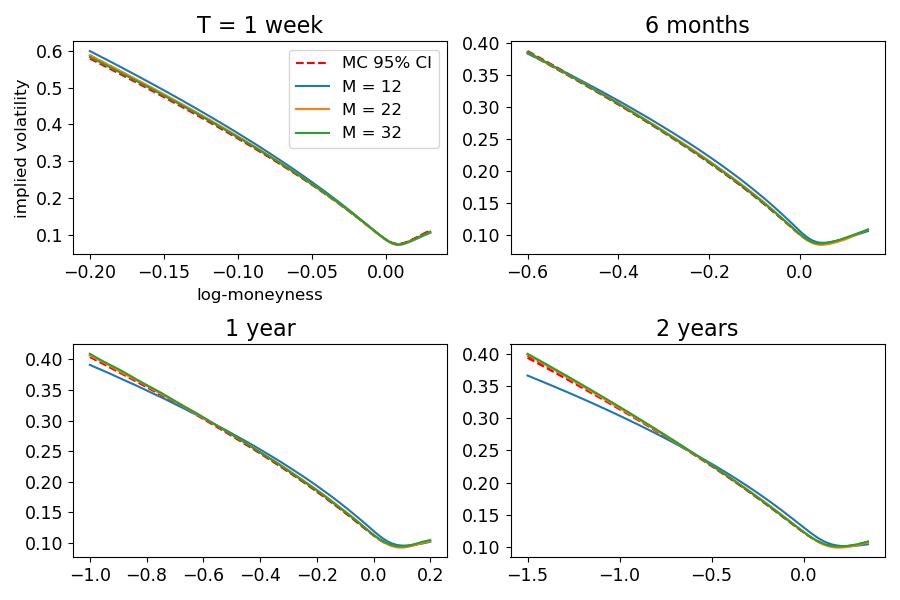}%
    \caption{SPX implied volatility of different maturities in the quintic OU model, computed with our algorithm with different level $M$. Dotted red lines are Monte-Carlo 95\% interval computed with 500,000 simulations and $n=10,000$ number of steps per maturity slice.}\label{quintic_ou_riccati}
  \end{figure}




\subsubsection{One-factor Bergomi model}
The one-factor Bergomi model \cite{bergomi2005smile,dupire1992arbitrage} assumes $\sigma_t$ to be log-normal:
\begin{equation}\label{bergomi_one-factor}
  \begin{aligned}
     \sigma_t &= \sqrt{\xi_0(t)} \exp{\left(\frac{1}{2}\eta X_t-\frac{1}{4}  \eta^2 \E (X_t) \right)},\\
      X_t &= \varepsilon^{\alpha} \int_0^t e^{\alpha\varepsilon^{-1}(t-s)} dW_s,
  \end{aligned}
  \end{equation}
with $\varepsilon>0$ and $\alpha \leq 0$. Similar to quintic OU model before,  we have $a = 0, b = \alpha\varepsilon^{-1}$ and $c = \varepsilon^{\alpha}$. We now approximate the exponential as a truncated sum up to level $N$:
\begin{equation}
  \begin{aligned}
 \widetilde \sigma_t &= \sqrt{\frac{\xi_0(t)}{\E[p(X_t)^2]}}p(X_t),\quad p(X_t) = \sum_{k=0}^{N}p_k X_t^k,\quad p_k =  \frac{\eta^k}{2^k k!},
  \end{aligned}
  \end{equation}
where $\widetilde \sigma_t$ in converges to $\sigma_t$ in \eqref{bergomi_one-factor} when sending $N \rightarrow \infty$. We now fix $\rho = -0.7, \varepsilon = 1/52, \alpha = -0.7, \eta = 1.2, \xi_0(t) = 0.025$ for the numerical experiment and set $N = 8$. Figure \ref{bergomi_ou_riccati} shows that our numerical scheme quickly converges to Monte-Carlo estimates of the original one-factor Bergomi model:

  \begin{figure}[H]
    \centering    \includegraphics[width=0.8\textwidth]{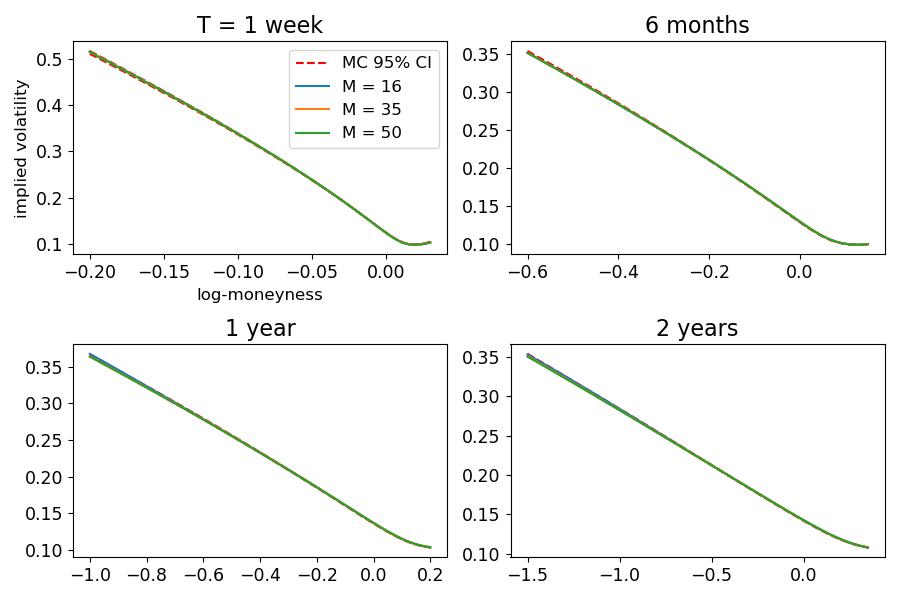}%
    \caption{SPX implied volatility of different maturities in the one-factor Bergomi model, computed with our algorithm with different level $M$. Dotted red lines are Monte-Carlo 95\% interval computed with 500,000 simulations and $n=10,000$ number of steps per maturity slice.}\label{bergomi_ou_riccati}
  \end{figure}

{\subsection{Pricing \texorpdfstring{$q$}{Lg}-volatility swaps via Laplace inversion}

The $q$-volatility swap rate $K_q$ is defined by:
\[
K_q := \E \left[\left(\frac{1}{T}\int_0^T \sigma_s^2 ds\right)^q\right], \quad q\in [0,1].
\]
For the case of a standard volatility swap (i.e. $q=1/2$), one can price the swap rate $K^q$ via the following inverse Laplace transform \cite{schurger2002laplace}:
\[
K_{\frac{1}{2}} = \E \left[\left(\frac{1}{T}\int_0^T \sigma_s^2 ds\right)^{1/2}\right] = \frac{1}{2\sqrt{\pi}}\int_0^{\infty}\frac{1-\tilde F(u)}{u^{3/2}}du,
\]
where $\tilde F(u)=F(t,x)$ from \eqref{eq:defF} by setting $g_1 \equiv 0$ and $g_2(t) \equiv -u/T$ using the presentation \eqref{eq:charfun}. {Again, we can  accelerate the computation via a control variate, for example the Black \& Scholes control variate:
\[
K_{\frac{1}{2}} = \sigma_{BS} + \frac{1}{2\sqrt{\pi}}\int_0^{\infty}\frac{\widetilde F_{BS}(u)-\tilde F(u)}{u^{3/2}}du,
\]
with $F_{BS}(u) = \exp(-u\sigma_{BS}^2)$ where $\sigma_{BS}$ is an arbitrary level of volatility that can be fixed upfront.

Compared to the previous section, we are even more confident {of our numerical scheme} thanks to Corollary~\ref{Cor:Riccati}.}
For demonstration purposes, we use the same model parameters for the Quintic Ornstein-Uhlenbeck and the one-factor Bergomi model as per the previous section, and adopt a parametric forward variance curve in the form of $\xi_0(t) = V_0e^{-kt} + V_\infty(1-e^{-kt})$ with $V_0 = 0.025, k = 5$ and $V_\infty=0.06$.

Figure \ref{fig:vol_swaps} shows the volatility swaps of the two models computed by inverse Laplace transform with truncation level $M = 32$ vs. that computed by Monte-Carlo with 400,000 simulations and 10,000 steps: 

  \begin{figure}[H]
    \centering
\includegraphics[width=0.5\textwidth]{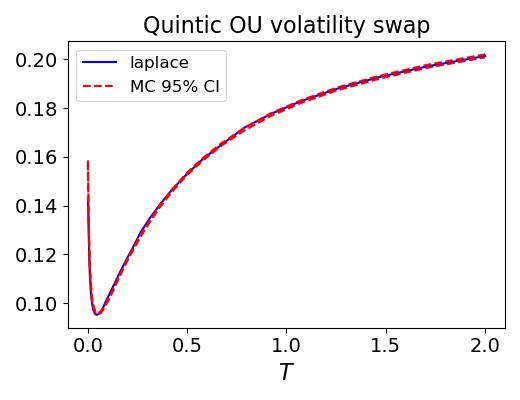}%
\includegraphics[width=0.5\textwidth]{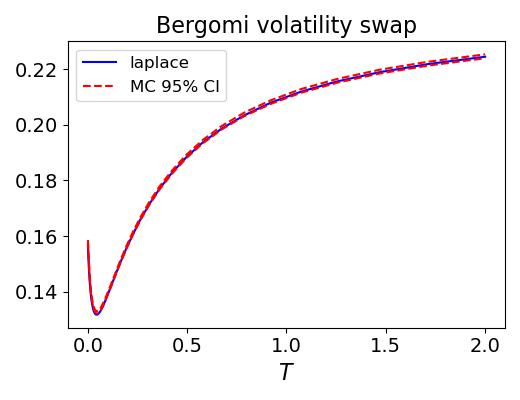}
\vspace{-0.9cm}
\caption{Volatility swaps under the Quintic OU model (left) and the one-factor Bergomi model (right) for maturities up to 2 years, using the same parameter values as in the previous section.} \label{fig:vol_swaps}
  \end{figure}
}

{\subsection{Model calibration to market data via Fourier}

The family of polynomial OU volatility models allows fast pricing of VIX futures and VIX options via numerical integration of the payoff with respect to the standard Gaussian density, see \cite{jaber2022quintic}. Together with fast Fourier pricing, this opens doors to joint calibration to SPX and VIX smiles. This section also serves the purpose of highlighting the stability of our numerical discretization scheme combined with Fourier inversion in a calibration procedure where a large number of evaluations of the characteristic function are needed for a wide range of realistic model parameters and Fourier variables.

In a nutshell, the calibration of a model involves minimizing the mean square error between prices coming from the model vs.~that from market data. Without going too much into the details, we first demonstrate the capability of the quintic OU model \eqref{quintic_model} to jointly calibrate two slices of maturities of SPX smiles together with one slice of maturity of the VIX smile, with calibrated parameters $\rho = -0.6763, \alpha = -0.6821, (p_0, p_1, p_3, p_5) = (0.0202, 1.3332, 0.0578, 0.0071)$ and fixed $\varepsilon = 1/52$: with $\xi_0(t)$ coming directly from market data:

  \begin{figure}[H]
    \centering
    \includegraphics[width=0.65\textwidth]{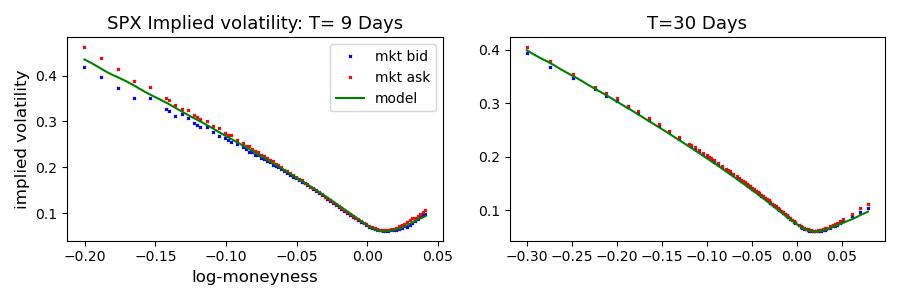}%
    \includegraphics[width=0.65\textwidth]{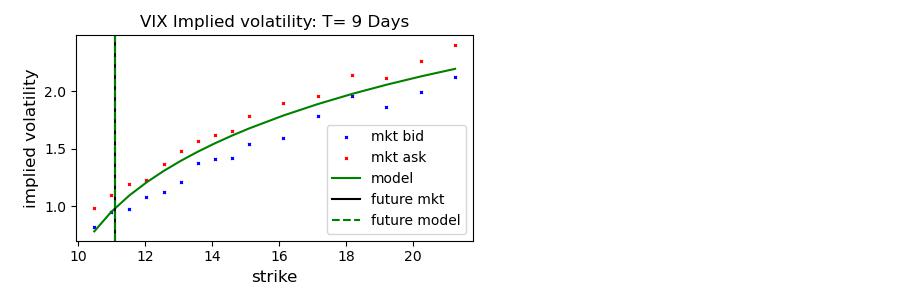}
    \vspace{-0.4cm}
    \caption{Quintic OU model (green lines) jointly calibrated to the SPX and VIX smiles (bid/ask in blue/red) on 23 October 2017 via Fourier using the Nelder-Mead optimization algorithm. The truncation level of the Riccati solver is set at $M=32$.  }\label{quintic_spx_calib}
  \end{figure}

Next, we showcase the calibration results of the one-factor Bergomi model \eqref{bergomi_one-factor} on four slices of maturities of SPX smiles between around 1 week to 3 months, with calibrated parameters $\eta = 1.1416, \rho = -0.6744, \alpha = -0.7377$ and fixed $\varepsilon = 1/52$:

  \begin{figure}[H]
    \centering    \includegraphics[width=0.8\textwidth]{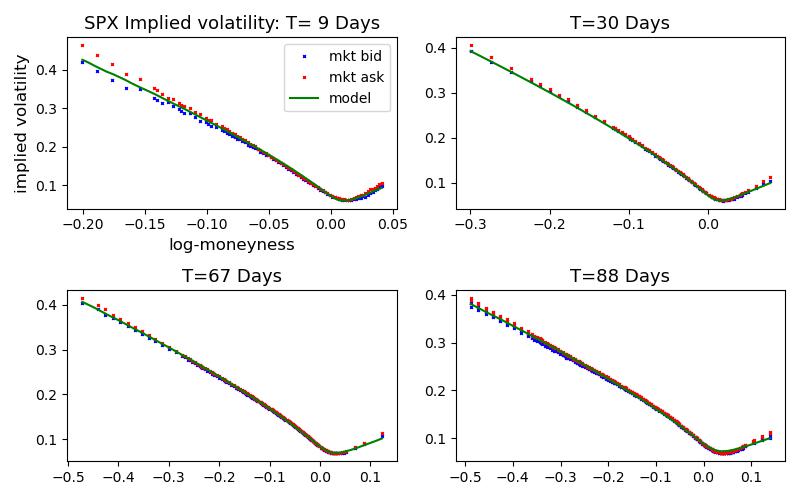}%
    \vspace{-0.3cm}
    \caption{One-factor Bergomi model (green lines) calibrated to the SPX smiles (bid/ask in blue/red) on 23 October 2017 via Fourier using the Nelder-Mead optimization algorithm. The truncation level of the Riccati solver is set at $M=32$. }\label{bergomi_calib}
  \end{figure}

The market data of SPX and VIX volatility surface is purchased from the CBOE website \url{https://datashop.cboe.com/}. For more calibration examples under the quintic OU and the one-factor Bergomi models, please refer to the the Appendix \ref{C:_more_calib_results}.

}



\section{Proof of Theorem \ref{thm1.1}}
\label{section5}
{We first introduce a lemma to justify the use of Itô's formula to the series $\sum_{k\geq 0}\psi_k(T-t)X_t^k$.

\begin{lemma}\label{lemma5.1}
Under the condition of Theorem \ref{thm1.1}, $h(t,x):=\sum_{k\geq 0}\psi_k(T-t)x^k$ is $C^2$ in $x$ and $C^1$ in $t$, with $h_{x}(t,x)=\sum_{k\geq 0}(k+1)\psi_{k+1}(T-t)x^{k}$, $h_{xx}(t,x)=\sum_{k\geq 0}(k+2)(k+1)\psi_{k+2}(T-t)x^{k}$, $h_t(t,x)=\sum_{k\geq 0}-\psi'_k(T-t)x^k$.
\end{lemma}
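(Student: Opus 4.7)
The strategy is to exploit the hypothesis that $\sum_{k\geq 0}\sup_{t\in[0,T]}|\psi_k(t)|\,x^k$ has infinite radius of convergence. Setting $M_k:=\sup_{t\in[0,T]}|\psi_k(t)|$, the Weierstrass M-test immediately gives uniform convergence of $\sum_{k\geq 0}\psi_k(T-t)x^k$ on every compact $[0,T]\times[-R,R]$, so that $h$ is jointly continuous on $[0,T]\times\R$, since each $\psi_k$ is continuous by assumption.

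The $x$-derivatives are handled by standard closure properties of infinite-radius power series. Multiplying the coefficients $M_k$ by any polynomial in $k$ preserves the condition $\sum_k (\text{poly}(k))M_k R^k<\infty$ for every $R>0$, hence the formal derivatives $\sum_{k\geq 0}(k+1)\psi_{k+1}(T-t)x^k$ and $\sum_{k\geq 0}(k+2)(k+1)\psi_{k+2}(T-t)x^k$ converge uniformly on compacts to continuous functions. The standard theorem on differentiation of uniformly convergent series then identifies these limits with $h_x$ and $h_{xx}$, giving the stated expressions and joint continuity.

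For $h_t$, I use the Riccati equation \eqref{eq:Ric} to rewrite $\psi_k'(T-t)$ as a sum of six explicit terms and argue that $\sum_{k\geq 0}\sup_{t\in[0,T]}|\psi_k'(t)|x^k$ again has infinite radius of convergence. The linear terms $bk\psi_k$, $a(k+1)\psi_{k+1}$ and $c^2(k+2)(k+1)\psi_{k+2}/2$ are controlled by polynomial-in-$k$ multiples of $M_k,M_{k+1},M_{k+2}$, as above. The source term $\bigl(g_2+g_1(g_1-1)/2\bigr)g_0^2(p*p)_k$ is absolutely bounded (in $t$) by $|p|*|p|$, which has infinite radius of convergence since $p$ does. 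The two convolution terms $(\widetilde\psi(t)*\widetilde\psi(t))_k$ and $(p*\widetilde\psi(t))_k$ are dominated coefficient-wise by $|\widetilde M|*|\widetilde M|$ and $|p|*|\widetilde M|$, with $\widetilde M_k:=(k+1)M_{k+1}$, and both convolutions are power series with infinite radius of convergence by the product formula recalled in the Notations section. Summing these bounds yields the desired infinite-radius majorant for $\psi_k'$, so that $\sum_{k\geq 0}(-\psi_k'(T-t))x^k$ converges uniformly on compacts to a jointly continuous function, which we may identify as $h_t$ by the standard theorem on term-by-term differentiation.

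The main obstacle is verifying that the nonlinear (convolution) pieces of the Riccati right-hand side are dominated by infinite-radius power series uniformly in $t$; once one notes that $\widetilde M$ itself has the infinite-radius property and that products of infinite-radius series remain infinite-radius (Notations), the rest of the argument reduces to Weierstrass-type bookkeeping.
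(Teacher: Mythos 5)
Your proof is correct and follows essentially the same strategy as the paper's: both establish a Weierstrass-type majorant $M_k=\sup_{t\in[0,T]}|\psi_k(t)|$ for $h$, observe that multiplying coefficients by polynomials in $k$ preserves the infinite radius of convergence to handle $h_x$ and $h_{xx}$, and then feed the Riccati right-hand side term by term through the same machinery (using the stability of infinite-radius series under convolution products) to bound $\sup_t|\psi_k'(t)|$ and justify term-by-term differentiation in $t$.
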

\begin{proof}
First, $|\psi_k(T-t)x^{k}|\leq \sup_{t\in[0,T]}|\psi_k(t)| |x|^k$, and $\sum_{k\leq 0}\sup_{t\in[0,T]}|\psi_k(t)| |x|^k<\infty$. So $h(t,x)=\sum_{k\geq 0}\psi_k(T-t)x^{k}$ is well defined. In addition, if we restrict $x$ in a bounded set of $\R$, then $h^n(t,x):=\sum_{k=0}^{n}\psi_k(T-t)x^{k}$ is continuous and converges uniformly to $h(t,x)$ when $n\rightarrow \infty$, so $h(t,x)$ is continuous. 

Notice that the domain of convergence for $\sum_{k\leq 0}\sup_{t\in[0,T]}|\psi_k(t)| x^k<\infty$ is $\R$, then so is $\sum_{k\geq 0}\psi_k(T-t)x^{k}$. So $h_x(t,x),h_{xx}(t,x)$ are well-defined, continuous and have the expression as the statement for the same reason as for $h(t,x)$.

To treat $h_t(t,x)$, we should at first prove that $\sum_{k} \sup_{t\in[0,T]}|\psi'_k(t)|  x^k$ has also an infinite radius of convergence.
  
Indeed, using the Riccati expression \eqref{eq:Ric}, we only have to check that we have an infinite radius of convergence for 
\begin{equation}\label{eqatk} \sum_{k\geq0}\sup_{t\in[0,T]}|a_{t,k}|x^k,  
\end{equation}

where $a_{t,k}$ is among   $(p*p)_k,k\psi_k(t),(k+1)\psi_{k+1},\frac{(k+1)(k+1)}{2}\psi_{k+1}, (\widetilde \psi (t)*\widetilde \psi (t))_k, (p*\widetilde \psi(t))_k$. 

By assumption and Cauchy–Hadamard Theorem, we know that $\limsup_{n\rightarrow \infty}(\sup_{t\in[0,T]}|\psi_n(t)|)^{\frac{1}{n}}=0$ and thus we  obtain that when $a_{t,x}=k\psi_k(t),(k+1)\psi_{k+1}=\widetilde \psi (t)_k,\frac{(k+1)(k+2)}{2}\psi_{k+2}$, \eqref{eqatk} has an infinite radius of convergence. 

Also, for two power series $\sum_{n=0}^\infty a_nx^n$ and $\sum_{n=0}^\infty b_nx^n$ both with infinite radius of convergence, their product $\sum_{n=0}^\infty (a*b)_nx^n$  also has an infinite radius of convergence. And notice that $(p*p)_k, (\widetilde \psi (t)*\widetilde \psi (t))_k, (p*\widetilde \psi(t))_k$ are dominated by $(|p|*|p|)_k,(\sup_{t\in[0,T]}|\widetilde \psi (t)|*\sup_{t\in[0,T]}|\widetilde \psi (t)|)_k,(|p|*\sup_{t\in[0,T]}|\widetilde \psi (t)|)_k$. Thus when $a_{t,k}$ is among $(p*p)_k, (\widetilde \psi (t)*\widetilde \psi (t))_k, (p*\widetilde \psi(t))_k$, \eqref{eqatk} has also an infinite radius of convergence. Therefore $\sum_{k} \sup_{t\in[0,T]}|\psi'_k(t)|  x^k$ has an infinite radius of convergence, so that $\sum_{k\geq 0}-\psi'_k(T-t)x^k$ also has infinite radius of convergence  and converges uniformly when $x$ is in a bounded subset of $\R$ and also on $t \in [0,T]$. So $\sum_{k\geq 0}-\psi'_k(T-t)x^k$ is well-defined and also continuous.

Next, notice that $h^n_t(t,x)=\sum_{k=0}^{n}-\psi'_k(T-t)x^k$ and given the the uniform convergence of both series $\sum_{k\geq 0}-\psi'_k(T-t)x^k$ and $\sum_{k\geq 0}\psi_k(T-t)x^k$, we deduce that $h_t(t,x)=(\sum_{k\geq 0}\psi_k(T-t)x^k)_t=\sum_{k\geq 0}-\psi'_k(T-t)x^k.$

\end{proof}
}
\begin{proof}[Proof of Theorem \ref{thm1.1}]

We first notice that since the  power series  $\sum_{k} |\psi_k(t)|  x^k $  has an infinite radius of convergence,  $U_t$ is well-defined for all $t \leq T$. 

With Lemma \ref{lemma5.1}, we can now apply Itô's formula on the semimartingale $M$:


\begin{equation}
  \begin{aligned}
\frac{dM_t}{M_t} &= dU_t + \frac{1}{2}\langle U \rangle_t,\\
dU_t &= g_1(T-t)d\log S_t + g_2(T-t)\sigma_t^2dt - \sum_k\psi_k'(T-t)X_t^k dt \\
&{ + \sum_k (k+1)\psi_k(T-t)X_t^{k}dX_t+\sum_k\frac{1}{2} (k+2)(k+1)\psi_k(T-t) X_t^{k}d \langle X \rangle_t}.\\
  \end{aligned}\label{ito_form1}
  \end{equation}

Plugging \eqref{polynomial_model} into \eqref{ito_form1}, we have:

\begin{equation}
  \begin{aligned}
    dU_t &= \Bigg(\big(g_2(T-t)-\frac{g_1(T-t)}{2}\bigg)\sigma_t^2 - \sum_k\psi_k'(T-t)X_t^k\\ &+ \sum_k a (k+1)\psi_k(T-t)X_t^{k} + \sum_k b k \psi_k(T-t)X_t^k + \sum_k c^2 \frac{(k+2)(k+1)}{2}\psi_k(T-t) X_t^{k} \Bigg)dt\\
    &+ \bigg(\sum_k c (k+1) \psi_k(T-t)X_t^{k} + \rho g_1(T-t)\sigma_t\bigg) dW_t + \sqrt{1-\rho^2} g_1(T-t)\sigma_t dW_t^{\perp},\\
    d\langle U \rangle_t &= \bigg(\sum_k c k\psi_k(T-t)X_t^{k} + \rho g_1(T-t)\sigma_t\bigg)^2 dt + (1-\rho^2)g_1^2(T-t)\sigma_t^2dt.\\
  \end{aligned}
  \end{equation}

Applying the Cauchy product on the power series $\sum_k p_k X_t^k$ leads to

\begin{equation}
  \begin{aligned}
\frac{dM_t}{M_t}&= \Big( \sum_k\Big[ (g_2(T-t)-\frac{g_1(T-t)}{2})g_0^2(t) (p*p)_k - \psi_k'(T-t) + a(k+1)\psi_{k+1}(T-t)\\
    &+ bk\psi_k(T-t) + \frac{c^2(k+2)(k+1)}{2}\psi_{k+2}(T-t)\\
    &+ \frac{c^2}{2}(\widetilde \psi (T-t)*\widetilde \psi (T-t))_k + \frac{g_1^2(T-t)g_0^2(t)}{2}(p*p)_k\\
    &+ \rho g_1(T-t) g_0(t) c (p*\widetilde \psi(T-t))_k\Big] X_t^k \Big)dt\\
    &+ \bigg(\sum_k c k\psi_k(T-t)X_t^{k} + \rho g_1(T-t)\sigma_t\bigg) dW_t + \sqrt{1-\rho^2} g_1(T-t)\sigma_t dW_t^{\perp}.
  \end{aligned}
  \end{equation}
$M$ is  a local martingale if and only if it has zero drift (i.e. $dt$ part is zero a.s.). This is true for all values of $X$ by assumption from \eqref{eq:Ric}, so that $M$ is a local martingale. Moreover, if $M$ is a true martingale, we have
\begin{equation}
  \begin{aligned}
     &\mathbb{E} \left[ \left. \exp\left(\int_t^T g_1(T-s) d\log S_s + \int_t^T g_2(T-s) \sigma_s^2 ds\right) \right| \mathcal{F}_t \right]\\
     &=\exp\left(  \sum_{k\geq 0}\psi_k(T-t)X_t^k \right), \quad t \leq T.
    \end{aligned}
\end{equation}
which follows from the martingality of $M$ and the assumption of the initial condition of $\psi_k$ from \eqref{eq:Ric_init}. This completes the proof.
\end{proof}

\section{Proof of Theorem \ref{thm1.2}}
\label{section6}
To prove Theorem \ref{thm1.2}, we adopt the representation of $F$ as mentioned in Lemma \ref{2.16}:
$$F(t,x)=\mathbb{E} \left[ \exp\left(\int_t^T \sum_{i=1}^3u_i(s)p_i(X_s)ds+v(T)q(X_T)-v(t)q(X_t)\right) \Mid  X_t=x \right],$$
{where the functions  $p_1=p^2,p_2,p_3,q$ are all negligible to double factorial and {$(u_1,u_2,u_3,v)$ are continuous such that} $ \Re(u_1)\leq 0, \Re(u_2)=\Re(u_3)=\Re(v)=0$}. Consider the following function $f$ defined by:
\begin{align}
f(t,x):&=\mathbb{E} \left[ \exp\left(\int_t^T \sum_{i=1}^3u_i(s)p_i(X_s)ds+v(T)q(X_T)\right)\Mid  X_t=x \right]\\
&=F(t,x)\exp(v(t)q(x)),  \label{eq:defsmallf}
\end{align} 
for all $t\leq T$ and $x\in \R$.
We recall that $f$ is well-defined, since
$$
\Re\left(\int_t^T \sum_{i=1}^3u_i(s)p_i(X_s)ds+v(T)q(X_T)\right)=\int_t^T \Re \left( u_1(s)p^2(X_s)\right)ds\leq 0.
$$
Our proof is composed of five parts:
\begin{itemize}
\item In Subsection \ref{section6.0}, we start by deriving some  properties of power series negligible to double factorial  which will be used later.
    \item 
In Subsection \ref{section6.1}, we prove the regularity  in $x$ of $f$ given by \eqref{eq:defsmallf}, i.e.~$f$ is $C^\infty$ in $x$ such that the partial derivatives  $\frac{\partial^kf}{\partial x^k}$ are bounded in a sense in  preparation for the Feynman-Kac formula in Subsection \ref{section6.2}.
\item In Subsection \ref{section6.2}, we prove that $f$ solves the associated  PDE coming from Feynman-Kac formula.
In particular, we  prove that $\frac{\partial f}{\partial t}$ indeed exists.
\item In Subsection \ref{section6.3}, we introduce a system of ODE and obtain a solution by comparing the derivatives of all orders of both sides of Feynman-Kac formula at $x=0$.
\item In Subsection \ref{section6.4}, we prove that the system of ODE in Theorem \ref{thm1.2} has a solution, via a system of ODE introduced in Subsection \ref{section6.3}.
\end{itemize}


 \subsection{Properties of power series  negligible to double factorial} \label{section6.0}
We collect some properties of power series that are negligible to double factorial as defined per Definition \ref{2.3}.


Given a power series $p(x)=\sum_{k=0}^{\infty}p_kx^k$, we define the set $\mathcal A_D(p)$ as the $\R$-algebra generated by all higher-order derivatives $\{p,p',p''\ldots\}$ of $p$:
\begin{align}\label{eq:defAD}
    \mathcal {A}_D(p):=  \left\{\sum_{s=0}^{l}c_{s}f_{s,1}f_{s,2}\cdots f_{s,m_s}: l\in \mathbb{N},\ m_s\in\mathbb{N},\ c_s\in \R,\ f_{s,i}\in \{p,p',p''\ldots \} \text{ for every }i\text{ and }s\right\}.
\end{align}
 Notice that if $m_s=0$, then by convention the product $f_{s,1}f_{s,2}\cdots f_{s,m_s}=1$.

\begin{remark}
This definition comes from the calculations of the successive partial derivatives $F_x,F_{xx},\ldots$
For example, in the simple case when $g_0=1,g_1=0,g_2=-1$, by setting $Z=\exp \left( -\int_{0}^{T-t} p^2\left(X_s\right) d s\right)$, we have $F(t,x)=\mathbb{E}\left[Z|X_0 =x\right]$ by the Markov property of $X$. Since we can write $X_s=e^{bs}X_0+Y_s$, where $Y_s$ does not depend on $X_0$, this means formally, $F_x=\mathbb{E}\left[Z_x|X_0=x\right]$, where $Z_x=-Z\int_{0}^{T-t} 2e^{bs}pp'\left(X_s\right) d s$ and $pp'\in \mathcal{A}_D(p)$ appears. Higher order derivatives like $Z_{xx},Z_{xxx}$ will also formally generate the elements in $\mathcal{A}_D(p)$. 
\end{remark}

Now we prove some properties of power series which are negligible to double factorial. For the lemma below, we recall the convention $(-1)!!=(0)!!=1$. Notice that although the proof is a little cumbersome, the essential idea is the observation that $\ell!!$ behaves like
$(\lfloor \frac{\ell}{2} \rfloor )!2^{\frac{\ell}{2}}$ and $\sum_{k=0}^\ell {\ell \choose {k}} =2^\ell$.
\begin{lemma}
\label{l3.6}
There exists a constant $C>0$, such that for all $\ell\in \mathbb{N}$, we have that $$\sum_{k=0}^{\ell}\frac{1}{(k-1)!!(\ell-k-1)!!}\leq \frac{C{\ell}^32^{\frac{\ell}{2}}}{(\ell-1)!!}.$$ 
\end{lemma}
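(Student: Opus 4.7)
The essential observation is that the ratio $(\ell-1)!!/[(k-1)!!(\ell-k-1)!!]$ can be converted, via the identities $(2n)!!=2^n n!$ and $(2n-1)!!=(2n)!/(2^n n!)$, into a ratio of ordinary binomial coefficients. The plan is therefore to split the summation according to the parities of $\ell$ and $k$, perform the conversion in each case, and then use elementary binomial bounds to conclude.

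Concretely, assume first that $\ell=2m$ is even. For even $k=2j$ with $0\leq j\leq m$, the substitution yields
$$
\frac{(2m-1)!!}{(2j-1)!!(2m-2j-1)!!}=\frac{\binom{2m}{2j}}{\binom{m}{j}},
$$
while for odd $k=2j+1$ with $0\leq j\leq m-1$, a similar computation gives a term proportional to $\binom{2m}{m}\binom{m-1}{j}/4^m$ times $m$. For the even-$k$ piece I would write $\binom{2m}{2j}/\binom{m}{j}=\binom{m}{j}\binom{2m}{m}/[\binom{2j}{j}\binom{2m-2j}{m-j}]$ and apply the well-known bounds $\binom{2m}{m}\leq 4^m$ together with $\binom{2n}{n}\geq 4^n/(2n+1)$ to obtain
$$
\frac{\binom{2m}{2j}}{\binom{m}{j}}\leq \binom{m}{j}(2j+1)(2m-2j+1)\leq (2m+1)^2\binom{m}{j}.
$$
Summing over $j$ and using $\sum_{j=0}^m\binom{m}{j}=2^m$ gives an even-$k$ contribution bounded by $(2m+1)^2 2^m$. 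The odd-$k$ sum is handled analogously and contributes only $O(m\,2^m)$.

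The case $\ell=2m+1$ is entirely parallel. For $k=2j$ one obtains a summand $2^{2j}\binom{m}{j}/\binom{2j}{j}$, bounded by $(2j+1)\binom{m}{j}\leq (2m+1)\binom{m}{j}$ via $\binom{2j}{j}\geq 4^j/(2j+1)$; and for $k=2j+1$ one gets the symmetric expression $2^{2m-2j}\binom{m}{j}/\binom{2m-2j}{m-j}$, bounded similarly. Summing gives a bound of order $\ell\,2^{(\ell-1)/2}$ in this case.

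Collecting the two cases yields the uniform bound
$$
\sum_{k=0}^{\ell}\frac{(\ell-1)!!}{(k-1)!!(\ell-k-1)!!}\leq C\,\ell^{2}\,2^{\ell/2},
$$
which is in fact stronger than the $C\ell^{3}2^{\ell/2}$ claimed in the statement. The only mildly delicate point is the bookkeeping across the four parity subcases and making sure the factorial conversions treat the boundary values $(-1)!!=0!!=1$ correctly at $k=0$ and $k=\ell$; once the parity split is laid out, every inequality reduces to the two textbook binomial estimates $\binom{2n}{n}\leq 4^n$ and $\binom{2n}{n}(2n+1)\geq 4^n$, so there is no substantial analytical obstacle.
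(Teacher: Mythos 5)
Your argument is correct, and the claimed identities check out: for $\ell=2m$, $k=2j$ one indeed has $(2m-1)!!/[(2j-1)!!(2m-2j-1)!!]=\binom{2m}{2j}/\binom{m}{j}=\binom{m}{j}\binom{2m}{m}/[\binom{2j}{j}\binom{2m-2j}{m-j}]$, and the Wallis-type bounds $4^n/(2n+1)\leq\binom{2n}{n}\leq4^n$ then give the stated estimate. The route is genuinely different from the paper's. The paper works directly with the double factorials: after peeling off the two boundary terms it rewrites the inner sum as $\sum_{k=0}^{\ell-2}1/[k!!(\ell-2-k)!!]$, compares odd-index terms to their even-index neighbours (paying a factor $\ell$), and then recognises $\sum_{k_1}1/[k_1!(\ell_1-1-k_1)!]=2^{\ell_1-1}/(\ell_1-1)!$; for odd $\ell$ it reduces to the even case and pays another factor of $\ell$ in passing from $(\ell-2)!!$ to $(\ell-1)!!$, which is where the $\ell^3$ comes from. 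Your reformulation in terms of $\binom{2m}{m}$ and the central-binomial two-sided bound is more systematic, handles each parity class in a single stroke, and gives the sharper $\ell^2$ uniformly (the paper only achieves $\ell^2$ for even $\ell$). The only thing you owe the reader in a full write-up is the symmetric odd-$\ell$, odd-$k$ computation you declare ``handled similarly'', and the boundary terms at $j=0$ and $j=m$ which, with the convention $(-1)!!=0!!=1$, sit cleanly inside the binomial-coefficient formulas; both are routine.
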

\begin{proof}
For $k=0,\ell$, we have $\frac{1}{(k-1)!!(\ell-k-1)!!}=\frac{1}{(\ell-1)!!}$. So we need only to prove that for a constant $C>0$,
\begin{align}
\sum_{k=1}^{\ell-1}\frac{1}{(k-1)!!(\ell-k-1)!!}=\sum_{k=0}^{\ell-2}\frac{1}{k!!(\ell-2-k)!!}\leq \frac{C{\ell}^32^{\frac{\ell}{2}}}{(\ell-1)!!}.
\end{align}
\underline{Case 1}: $\ell=2\ell_1$ is an even number. If $k=2k_1$ is an even number, $k!!(\ell-2-k)!!=2^{\ell_1-1}k_1!(\ell_1-1-k_1)!$. If $k=2k_1+1$ is an odd number, $k!!(\ell-2-k)!!=(2k_1+1)!!(\ell-3-2k_1)!!\geq \frac{(2k_1+1)!!(\ell-1-2k_1)!!}{\ell-1-2k_1}\geq \frac{(2k_1)!!(\ell-2-2k_1)!!}{\ell}=\frac{2^{\ell_1-1}k_1!(\ell_1-1-k_1)!}{\ell}$. Therefore,
\begin{align}
\sum_{k=0}^{\ell-2}\frac{1}{k!!(\ell-2-k)!!} &\leq \sum_{k_1=0}^{\ell_1-1}\frac{\ell+1}{2^{\ell_1-1}k_1!(\ell_1-1-k_1)!} =\frac{\ell+1}{(\ell_1-1)!}=\frac{(\ell+1)2^{\ell_1-1}}{(\ell-2)!!}\\ &\leq  \frac{(\ell+1)(\ell-1)2^{\ell_1-1}}{(\ell-1)!!} \leq \frac{\ell^2 2^{\frac{\ell}{2}}}{(\ell-1)!!}.
\end{align}

\underline{Case 2}: $\ell=2\ell_1+1$ is an odd number, then
\begin{align}
&\sum_{k=0}^{\ell-2}\frac{1}{k!!(\ell-2-k)!!} \leq \sum_{k=0}^{\ell-3}\frac{1}{k!!(\ell-3-k)!!}+\frac{1}{(\ell-2)!!},
\end{align}
and since $\ell-3$ is even, as the Case 1, we have, for a constant $C>0$:
\begin{align}
\sum_{k=0}^{\ell-3}\frac{1}{k!!(\ell-3-k)!!}+\frac{1}{(\ell-2)!!}\leq\frac{(\ell-1)^22^{\frac{\ell-1}{2}}}{(\ell-2)!!}+\frac{1}{(\ell-2)!!}\leq \frac{C\ell^32^{\frac{\ell}{2}}}{(\ell-1)!!}.
\end{align} 
Therefore there exists constant $C$ such that for all $\ell$, $\sum_{k=0}^{\ell-2}\frac{1}{k!!(\ell-2-k)!!}\leq \frac{C\ell^32^{\frac{\ell}{2}}}{(\ell-1)!!}$.
\end{proof}


\begin{lemma}
\label{lemB.12}
The  following statements are true.
\begin{enumerate}[(i)]
    \item If $p$ is negligible to double factorial, then so is $q(x)=\int_{0}^xp(y)dy$.
   
    \item  The set of power series which are negligible to double factorial is closed under addition, differentiation, scalar multiplication, and multiplication. Thus if $p$ is negligible to double factorial, then for any $q\in \mathcal{A}_D(p)$, $q$ is also negligible to double factorial.
    \item If $p(x)=\sum p_kx^k$ is  negligible to double factorial, then so is $q(x)=\sum p_kc^kx^k$, where $c$ is a constant.
    \item If $p$ is negligible to double factorial, then so is $|p|$.
\end{enumerate}
\end{lemma}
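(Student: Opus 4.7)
The plan is to reduce each of the four statements to verifying the limsup criterion $\limsup_{k\to\infty}(|q_k|(k-1)!!)^{1/k}=0$ for the appropriate coefficient sequence $(q_k)$. Three of the four parts are essentially bookkeeping, and only the multiplication clause of $(ii)$ requires real work, leaning on Lemma~\ref{l3.6}.

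First I would dispose of $(iii)$ and $(iv)$, which are immediate: in $(iii)$ one has $|q_k|(k-1)!!=|c|^k\,|p_k|(k-1)!!$, so taking the $k$-th root yields the factor $|c|$ times a sequence already known to tend to $0$; in $(iv)$ the coefficient is $|p_k|$ itself, so the criterion is literally the same as for $p$. For $(i)$, $q_k=p_{k-1}/k$ for $k\ge 1$, and writing $|q_k|(k-1)!!=\frac{1}{k}|p_{k-1}|(k-1)!!$ together with the elementary bound $(k-1)!!\le k\,(k-2)!!$ gives $|q_k|(k-1)!!\le |p_{k-1}|(k-2)!!$; taking $k$-th roots and a change of index reduces this to the hypothesis on $p$.

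For $(ii)$, closure under addition and scalar multiplication is immediate from the subadditivity of $\limsup$. For differentiation, the key observation is the identity $(k+1)(k-1)!!=(k+1)!!$, which rewrites the coefficient of $p'$ weighted by $(k-1)!!$ as $|p_{k+1}|(k+1)!!$. After reindexing $m=k+1$, the problem becomes controlling $(|p_m|\,m!!)^{1/m}$ from $(|p_m|(m-1)!!)^{1/m}$; since the ratio $m!!/(m-1)!!$ grows only like $\sqrt{m}$ (by a Wallis-type estimate), its $m$-th root tends to $1$ and the desired limsup is still $0$.

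The main obstacle is the multiplication clause. Setting $a_k:=|p_k|(k-1)!!$ and $b_m:=|q_m|(m-1)!!$, the convolution coefficient weighted by $(\ell-1)!!$ can be written as
\[
|(p*q)_\ell|(\ell-1)!!\;\le\;\sum_{k=0}^{\ell} a_k\,b_{\ell-k}\cdot\frac{(\ell-1)!!}{(k-1)!!\,(\ell-k-1)!!}.
\]
Because $a_k^{1/k}\to 0$ and $b_m^{1/m}\to 0$, for any $\epsilon>0$ there exist constants $C_\epsilon,C_\epsilon'$ with $a_k\le C_\epsilon\epsilon^k$ and $b_m\le C_\epsilon'\epsilon^m$ uniformly in $k,m$. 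Factoring out $\epsilon^\ell$ and applying Lemma~\ref{l3.6} to the remaining combinatorial sum yields a bound of the form $C_\epsilon C_\epsilon'\,C\,\ell^{3}\,2^{\ell/2}\epsilon^\ell$. Taking the $\ell$-th root and letting $\ell\to\infty$ gives $\limsup\le\sqrt{2}\,\epsilon$, and since $\epsilon$ is arbitrary the limsup is $0$. The last assertion of $(ii)$ about $\mathcal{A}_D(p)$ then follows by induction on the definition of $\mathcal{A}_D(p)$ in \eqref{eq:defAD}, since it is built from $p$ and its derivatives using only the operations already shown to preserve negligibility to double factorial.
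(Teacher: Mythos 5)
Your proposal is correct and follows essentially the same route as the paper: each clause is reduced to the $\limsup$ criterion, the multiplication clause invokes Lemma~\ref{l3.6} together with the $a_k\le C_\epsilon\epsilon^k$ device, and the closure of $\mathcal{A}_D(p)$ follows by induction. The only cosmetic divergence is in the differentiation clause, where you invoke a Wallis-type $m!!/(m-1)!!\sim\sqrt m$ estimate, whereas the paper makes do with the cruder bound $(k+1)!!\le(k+1)\,k!!$; both give a ratio whose $k$-th root tends to $1$, so the argument goes through either way.
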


\begin{proof}
\text{ }
\begin{enumerate}[(i)]
    \item Note that $q(x)=\sum_{k=1}^{\infty}\frac{p_{k-1}}{k}x^k$. It follows that  
    \begin{align}
\limsup_{k\rightarrow\infty}(|q_k|(k-1)!!)^{\frac{1}{k}}&=\limsup_{k\rightarrow\infty}((|p_{k-1}|\frac{(k-1)!!}{k})^{\frac{1}{k-1}})^{\frac{k-1}{k}}\\ 
&\leq \limsup_{k\rightarrow\infty}((|p_{k-1}|(k-2)!!)^{\frac{1}{k-1}})^{\frac{k-1}{k}}=0,
    \end{align} 
    using that  $\frac{(k-1)!!}{k}\leq \frac{k!!}{k}=(k-2)!!$.
    \item Let  $p,q$ be negligible to double factorial and note that      $q'(x)=\sum_{k=0}^{\infty}(k+1)q_{k+1}x^k$. Then, 
    \begin{align}
    \limsup_{k\rightarrow\infty}(|q'_k|(k-1)!!)^{\frac{1}{k}}&=\limsup_{k\rightarrow\infty}((|q_{k+1}|(k+1)!!)^{\frac{1}{k+1}})^{\frac{k+1}{k}},
    \end{align}
then by $(k+1)!!\leq (k+1)k!!$,
    \begin{align}
        \limsup_{k\rightarrow\infty}(|q'_k|(k-1)!!)^{\frac{1}{k}} &\leq\limsup_{k\rightarrow\infty}((|q_{k+1}|k!!)^{\frac{1}{k+1}})^{\frac{k+1}{k}}(k+1)^{\frac{1}{k}}=0.
    \end{align}
    
    By $|a+b|^{\frac{1}{k}}\leq |a|^{\frac{1}{k}}+|b|^{\frac{1}{k}}$, $\limsup_{k\rightarrow\infty}(|q_k+p_k|(k-1)!!)^{\frac{1}{k}}=0$, and $\limsup_{k\rightarrow\infty}(|cq_k|(k-1)!!)^{\frac{1}{k}}=0$ for any constant $c$.

    Next,  we will show  that $pq$ is again negligible to double factorial. For any $\epsilon>0$, there exists $N$, such that for all $k>N$, we have that $(|q_k|(k-1)!!)\leq\epsilon^k,(|p_k|(k-1)!!)\leq\epsilon^k$. Therefore there exists constant $C_{\epsilon}$ such that $(|q_k|(k-1)!!)\leq C_{\epsilon}\epsilon^k,(|p_k|(k-1)!!)\leq C_{\epsilon}\epsilon^k$ for all $k$.    
    Thus by Lemma \ref{l3.6},
    \begin{align}
    &|(pq)_l|=|\sum_{k=0}^lp_{k}q_{l-k}|\\
    &\leq\sum_{k=0}^l|p_{k}||q_{l-k}|\leq \sum_{k=0}^l \frac{{C_{\epsilon}}^2\epsilon^l}{(k-1)!!(l-k-1)!!}\leq \frac{{C_{\epsilon}}^2C(\sqrt{2}\epsilon)^l l^3}{(l-1)!!},
    \end{align}
    thus $\limsup_{l\rightarrow\infty}(|(pq)_l|(l-1)!!)^{\frac{1}{l}}\leq \sqrt{2}\epsilon$. Since $\epsilon$ can be any arbitrary positive number, we obtain  $$\limsup_{l\rightarrow\infty}(|(pq)_l|(l-1)!!)^{\frac{1}{l}}=0.$$
    Therefore, $pq$ is also negligible to double factorial.
    \item Notice that $$\limsup_{k\rightarrow\infty}(|q_k|(k-1)!!)^{\frac{1}{k}}=|c|\limsup_{k\rightarrow\infty}(|p_k|(k-1)!!)^{\frac{1}{k}}=0.$$
    \item It is obvious, since the modulus of $p$ has no impact on the $\limsup_{k\rightarrow\infty}(|p_k|(k-1)!!)^{\frac{1}{k}}$.
\end{enumerate}
\end{proof}

\subsection{Regularity in space}
\label{section6.1}

In this subsection, we prove the regularity of $f$ in $x$ given by \eqref{eq:defsmallf}, i.e.~$f$ is $C^\infty$ in $x$ such that the partial derivatives  $\frac{\partial^kf}{\partial x^k}$ are bounded in a sense in  preparation for the Feynman-Kac formula in Subsection \ref{section6.2}.

\begin{theorem}\label{B.24}
The function $f$ given by \eqref{eq:defsmallf}  is $C^\infty$ in $x$, i.e. the partial derivatives $\frac{\partial^kf}{\partial x^k}$ exist and are continuous in $(t,x)$, for $k\in \mathbb N$. Furthermore, for each $k\in \mathbb N$, there exists a power series $q_k$ negligible to double factorial and  a constant $C_k$ independent of $(t,x)$, such that 
\begin{align}\label{eq:boundfk}
\left|\frac{\partial^kf}{\partial x^k}(t,x) \right|\leq |q_k|(|x|)+C_k, \quad t\in [0,T],  \;  x \in \R.  
\end{align}
\end{theorem}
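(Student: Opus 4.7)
The plan is to combine the Markov property of the Ornstein--Uhlenbeck process $X$ with formal differentiation under the expectation and Faà di Bruno's formula, then to control the resulting expressions using the algebraic closure properties of power series negligible to double factorial established in Lemma~\ref{lemB.12}. First I would use the explicit solution to write, for $s\geq t$,
\[
X_s^{t,x} = e^{b(s-t)}x + Y_{t,s}, \qquad Y_{t,s} := a\int_t^s e^{b(s-r)}\,dr + c\int_t^s e^{b(s-r)}\,dW_r,
\]
where $Y_{t,s}$ is independent of $x$ and Gaussian with mean and variance uniformly bounded in $(t,s)\in[0,T]^2$. Setting $h(t,x) := \int_t^T \sum_{i=1}^3 u_i(s)p_i(X_s^{t,x})\,ds + v(T)q(X_T^{t,x})$, we have $f(t,x) = \mathbb{E}[e^{h(t,x)}]$, and since $\Re(u_1)\le 0$, $\Re(u_2)=\Re(u_3)=\Re(v)=0$, the pointwise inequality $|e^{h(t,x)}| \le 1$ holds. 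This settles $k=0$ with $q_0\equiv 0$, $C_0=1$.

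For $k\geq 1$, since $x\mapsto X_s^{t,x}$ is affine, the map $x\mapsto h(t,x,\omega)$ is pathwise $C^\infty$ and Faà di Bruno's formula gives
\[
\partial_x^k e^{h(t,x)} = e^{h(t,x)} \sum_{\pi \in \mathcal P_k} \prod_{B \in \pi} \partial_x^{|B|} h(t,x),
\]
where $\mathcal P_k$ is the finite set of partitions of $\{1,\dots,k\}$ and
\[
\partial_x^{j} h(t,x) = \int_t^T \sum_{i=1}^3 u_i(s)\,e^{bj(s-t)}\,p_i^{(j)}(X_s^{t,x})\,ds + v(T)\,e^{bj(T-t)}\,q^{(j)}(X_T^{t,x}).
\]
Each $p_i^{(j)}$ and $q^{(j)}$ is negligible to double factorial by Lemma~\ref{lemB.12}~(ii), and by Lemma~\ref{B.6} applied with the three non-negative arguments $|e^{b(s-t)}x|$, $|Y_{t,s}|$, $0$,
\[
|p_i^{(j)}(X_s^{t,x})| \le |p_i^{(j)}|(3 e^{|b|T}|x|) + |p_i^{(j)}|(3|Y_{t,s}|) + |p_i^{(j)}(0)|,
\]
and analogously for $q^{(j)}$. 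Expanding the product in Faà di Bruno and using the multiplicative closure of the class of negligible to double factorial power series (Lemma~\ref{lemB.12}~(ii)--(iii)) yields a pointwise bound of the form
\[
|\partial_x^k e^{h(t,x)}| \le Q_k(|x|) + S_k(|x|)\,\Xi_k^{(t)}(\omega),
\]
where $Q_k, S_k$ are power series negligible to double factorial and $\Xi_k^{(t)}(\omega)\ge 0$ depends only on the Gaussian path $(Y_{t,s})_{s\in[t,T]}$.

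Taking expectations, the Gaussian tail estimate from the proof of Proposition~\ref{def1} (together with Hölder's inequality to handle products of $Y_{t,s}$-measurable quantities) gives $\sup_{t\in[0,T]}\mathbb{E}[\Xi_k^{(t)}] < \infty$, which yields the bound \eqref{eq:boundfk} after absorbing the finite moment into the constant $C_k$. Because the pointwise bound above is uniform for $x$ in compacts, the interchange $\partial_x^k \mathbb{E}[e^{h(t,\cdot)}] = \mathbb{E}[\partial_x^k e^{h(t,\cdot)}]$ is legitimized iteratively via dominated convergence applied to difference quotients (using the mean value theorem and the bound for the next higher derivative), and the same dominating function together with a.s.\ continuity of $(t,x)\mapsto \partial_x^k e^{h(t,x)}$ delivers the required joint continuity of $\partial_x^k f$ in $(t,x)$.

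The hardest step is the bookkeeping in Faà di Bruno: in each term of the expansion one must separate the $x$-dependence from the Gaussian factors $Y_{t,s}$ and keep the $x$-part inside a \emph{single} power series negligible to double factorial. This reduces, after repeated application of Lemma~\ref{B.6}~(ii), to the fact that finite products and scalar dilations of derivatives of $p_i$ and $q$ remain in the class---which is precisely the content of Lemma~\ref{lemB.12}.
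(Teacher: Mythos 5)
Your argument is essentially the paper's: both differentiate under the expectation iteratively, control the $k$-th derivative of the exponent via the double-factorial closure lemmas, and legitimize the interchange by dominated convergence applied to difference quotients. Two bookkeeping differences are worth recording. First, the paper does not invoke Faà di Bruno explicitly but builds $\partial_x^k f = \mathbb E[H_k Z]$ inductively through $H_{k+1} = \partial_x H_k + h\,H_k$, $H_0 = 1$, and shows each $H_k$ lies in the algebra $\mathcal A_x(h)$ generated by the derivatives $h_j$ of $h$ (Definition~\ref{B.20}, Lemma~\ref{B.23}); this is Faà di Bruno unrolled, so the content is identical. Second, to bound products of Gaussian functionals the paper introduces the \emph{estimable} class (Definition~\ref{estime}): pathwise bounds of the additive form $|q|(|x|) + |q|(\widetilde W^*_T)$ with $\widetilde W^*_T$ the running maximum, followed by Doob's $L^p$ inequality and the elementary estimate $(a+b)(c+d)\le a^2+b^2+c^2+d^2$ to close the class under products (Lemma~\ref{B.15}); you instead decompose $x$-dependence from Gaussian factors and invoke H\"older. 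That is viable, but note one imprecision: after expanding Faà di Bruno and applying Lemma~\ref{B.6}(ii) the pointwise bound is a finite sum $\sum_i Q_i(|x|)\Xi_i^{(t)}(\omega)$, not a single expression $Q_k(|x|) + S_k(|x|)\Xi_k^{(t)}(\omega)$ --- the cross terms do not factor as you wrote. This does not break the argument (take expectations termwise and absorb the finite moments into $q_k, C_k$), but it should be stated accurately. Your handling of continuity and of the mean-value/DCT step (which requires the dominating bound one order higher, so the induction runs a step ahead of itself) matches the paper's proof.
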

Before proving Theorem~\ref{B.24}, we first simplify the expression of $f$ in \eqref{eq:defsmallf} to get rid of the conditioning. Thanks to the Markovianity of $X$, we can write $f$ after a change of variable as:
\begin{align}
    f(t,x) &= \mathbb{E}\left[ Z(t,x) \right],\label{eq:smallf2}
\end{align}
with 
\begin{align}
Z(t,x) := \exp \bigg( &\int_{0}^{T-t}\sum_{i=1}^3u_i(t+s)p_i\left(e^{bs}x+w(s) +\widetilde{W}_s\right) d s\\& \quad \quad +v(T)q(e^{b(T-t)}x+w(T-t)+\widetilde{W}_{T-t})\bigg).\label{def_z}
\end{align}

Notice $X_t = e^{bt}x+w(t) +\widetilde{W}_t$ when $X_0 = x$, with $w(t):=a\int_{0}^{t} e^{{b} (t-u)}du$ and ${\widetilde{{W}}_{t}}:=c\int_{0}^{t} e^{{b}(t-u)}  d W_{u}$. From now on, we will consider mainly the representation \eqref{eq:smallf2} for $f(t,x)$.


The main idea for proving   Theorem~\ref{B.24} consists in taking  successive derivatives in $x$ inside the expectation above and applying the dominated convergence theorem. 
To illustrate this, we first calculate formally $\frac{\partial f}{\partial x}$.  Notice that
the derivative in $x$ for the terms inside the exponential of $Z(t,x)$ is:
\begin{align}
h(x)&:=\int_{0}^{T-t}\sum_{i=1}^3u_i(t+s) e^{bs}p_i'\left(e^{bs}x+w(s)+\widetilde{W}_s\right) d s\\
&+v(T) e^{b(T-t)}q'\left(e^{b(T-t)}x+w(T-t)+\widetilde W_{T-t}\right),\label{h_func}
\end{align}
so that one expects 
\begin{align}\label{eq:partialfx}
    \frac{\partial f}{\partial x}(t,x)=\mathbb{E}\left[h(x) Z(t,x)\right].
    \end{align}


Notice that $|Z(t,x)|\leq 1 $ {since the terms inside the exponential of $Z(t,x)$ have a non-positive real part as a result of Lemma \ref{2.16}}. Therefore we only need to bound $h(x+\theta\Delta x) $ for $\theta \in [0,1]$ and small enough $\Delta x \in \R$. Notice that $u_i,v$ are continuous and thus bounded on $[0,T]$, $e^{bs}$ is also bounded in $[0,T]$, $p_i',q'$ are negligible to double factorial by the statement $(ii)$ of Lemma \ref{lemB.12}. So we only need to bound the expressions of the following form 
\begin{align}\label{eq:quantitiesq}
    \int_{0}^{t} q\left(e^{bs}x+w(s)+\widetilde W_s\right) d s\quad \text{and} \quad q\left(e^{bt}x+w(t)+\widetilde W_t\right),
\end{align} where $q$ is any power series negligible to double factorial. 

First of all, we introduce a new definition and a lemma to help bound in a certain sense the quantities in \eqref{eq:quantitiesq}. We also introduce the notation $\widetilde{W}^*_t:=\sup_{s\leq t}|\widetilde{W}_s|$, which will be useful for applying Doob's inequality.

\begin{definition}\label{estime}
 We say that a family of processes $(M_t(x))_{t\leq T,x\in \R}$ is \textbf{estimable} if there exists a power series $q$ negligible to double factorial such that for all fixed $t\in[0,T]$ and $x\in \R$,
 $$|M_t(x)|\leq |q|(|x|)+|q|(\widetilde W^*_T),\quad  a.s.$$
\end{definition}

We will prove that the expressions in \eqref{eq:quantitiesq} are estimable in  Proposition \ref{B.19} below. 
For now, let us first prove that $|q|(|x|)+|q|(\widetilde W^*_T)$ is integrable via the following lemma.

\begin{lemma}
\label{B.15}
The set of estimable family of processes is closed under addition and multiplication. In addition, for all power series $q$ negligible to double factorial, $\mathbb{E}\left[|q|(\widetilde W^*_T)\right]<\infty$. Therefore, if $M_t(x)$ is estimable, then there exists a power series $q$ negligible to double factorial and a constant $C$, such that for all $t\in[0,T]$ and $x\in \R$, $\mathbb{E} [|M_t(x)|]\leq |q|(|x|)+C$.
\end{lemma}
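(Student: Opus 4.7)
The plan is to handle the three assertions in order. First, for closure under addition, if $M_t(x)$ and $N_t(x)$ are estimable with power series $q_M$ and $q_N$ respectively, then the triangle inequality gives
\[
|M_t(x) + N_t(x)| \leq (|q_M| + |q_N|)(|x|) + (|q_M| + |q_N|)(\widetilde{W}^*_T),
\]
and $|q_M| + |q_N|$ is still negligible to double factorial by Lemma~\ref{lemB.12}(ii), with non-negative coefficients, so the estimable bound is preserved. For closure under multiplication, I would expand the product of the two bounds and control each cross term via the Young-type inequality $ab \leq \tfrac12(a^2+b^2)$. Setting
\[
\tilde q := |q_M|\cdot|q_N| + \tfrac12\bigl(|q_M|^2 + |q_N|^2\bigr),
\]
which is again negligible to double factorial by Lemma~\ref{lemB.12}(ii) (sums, products and squares of such series), one obtains $|M_t(x) N_t(x)| \leq \tilde q(|x|) + \tilde q(\widetilde{W}^*_T)$, and $\tilde q = |\tilde q|$ since all its coefficients are non-negative, so estimability is preserved.

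Second, for the integrability claim $\mathbb{E}[|q|(\widetilde{W}^*_T)] < \infty$, I would use Tonelli (the coefficients $|q_k|$ are non-negative) to reduce to
\[
\mathbb{E}[|q|(\widetilde{W}^*_T)] = \sum_{k \geq 0} |q_k|\, \mathbb{E}\bigl[(\widetilde{W}^*_T)^k\bigr],
\]
and then control the moments of $\widetilde{W}^*_T$. Writing $\widetilde{W}_t = e^{bt} M_t$ with $M_t := c\int_0^t e^{-bs}\, dW_s$ a martingale satisfying $M_T \sim \mathcal{N}(0,\tau_T^2)$ for $\tau_T^2 = c^2(1-e^{-2bT})/(2b)$, one has $\widetilde{W}^*_T \leq e^{|b|T} M^*_T$. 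Doob's $L^p$-inequality together with the standard Gaussian moment bound $\mathbb{E}[|M_T|^k] \leq \tau_T^k(k-1)!!$, already used in Proposition~\ref{def1}, yields
\[
\mathbb{E}\bigl[(\widetilde{W}^*_T)^k\bigr] \leq C\, \hat\sigma^k\, (k-1)!!,
\]
for some constants $C, \hat\sigma > 0$ depending only on $b,c,T$ (with $\hat\sigma := e^{|b|T}\tau_T$ and the Doob constant $(k/(k-1))^k$ bounded by $e$ for $k$ large and absorbed in $C$). Plugging back yields $\mathbb{E}[|q|(\widetilde{W}^*_T)] \leq C \sum_k |q_k|(k-1)!!\, \hat\sigma^k < \infty$, where the finiteness is exactly the defining property of negligibility to double factorial from Definition~\ref{2.3}.

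The final conclusion is then immediate: for an estimable family bounded by $|q|(|x|) + |q|(\widetilde{W}^*_T)$, taking expectations and applying the previous step yields $\mathbb{E}[|M_t(x)|] \leq |q|(|x|) + C$ with $C := \mathbb{E}[|q|(\widetilde{W}^*_T)]$, uniformly in $t \in [0,T]$ and $x \in \mathbb{R}$. The main obstacle is the moment bound in the second step: one needs the growth of $\mathbb{E}[(\widetilde{W}^*_T)^k]$ to be controlled by $C^k(k-1)!!$ and no worse, so that summing against $|q_k|$ preserves the negligibility-to-double-factorial structure. This dovetails precisely with Definition~\ref{2.3}: the double-factorial convention is chosen to match the Gaussian moments of the driving noise, and the cost of passing from $|\widetilde{W}_T|$ to $\widetilde{W}^*_T$ via Doob is only a bounded multiplicative constant, which is harmless.
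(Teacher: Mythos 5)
Your proposal is correct and follows essentially the same route as the paper: closure under addition and multiplication via the elementary inequality $ab\leq\tfrac12(a^2+b^2)$ (the paper uses the equivalent $(a+b)(c+d)\leq a^2+b^2+c^2+d^2$), and integrability of $|q|(\widetilde W_T^*)$ by factoring $\widetilde W_t=e^{bt}M_t$, applying Doob's $L^p$ maximal inequality to the martingale $M$, and invoking the Gaussian moment bound $\mathbb{E}[|Z|^k]\leq(k-1)!!$ together with negligibility to double factorial. The only thing to be a little more explicit about is the edge cases $k\in\{0,1\}$, where the Doob constant $\bigl(\tfrac{k}{k-1}\bigr)^k$ is not available; the paper handles this by bounding $\mathbb{E}[\widetilde W_T^*]$ via $\tfrac12(1+\mathbb{E}[(\widetilde W_T^*)^2])$ and adjusting the low-order coefficients of $|q|$, which is the same ``absorb finitely many $k$ into $C$'' observation you gesture at.
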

\begin{proof}
For the first part of this lemma, recall that by the statements $(ii)$ and $(iv)$ of Lemma \ref{lemB.12}, the property of being  negligible to double factorial is closed under addition, multiplication, and taking absolute power series.
So the set of estimable family of process is obviously closed under addition.
For the multiplication, it suffices to use the basic inequality $(a+b)(c+d)\leq a^2+b^2+c^2+d^2$.

To show that 
$\mathbb{E}\left[|q|(\widetilde W^*_T)\right] < \infty$, recall $\widetilde W_t=e^{bt}\int_{0}^{t} e^{-bs} c d W_{s}$, where $\int_{0}^{t} e^{-bs} c d W_{s}$ is a true martingale. There is a positive constant $C_T$ such that $\frac{1}{C_T}<e^{bt}<C_T$ when $0\leq t\leq T$. We use $\overline{W}_t$ to denote $\int_{0}^{t} e^{-bs} c d W_{s}$ and $\overline{W}_t^*$ to denote its maximal process $\sup_{s\leq t}|\overline{W}_s|$.
By Doob's maximal inequality for $k\geq 2$, 
$$\mathbb{E}\left[(\widetilde W^*_T)^k\right]\leq C_T^k\mathbb{E}\left[(\overline{W}^*_T)^k\right]\leq C_T^k\left(\frac{k}{k-1}\right)^k\mathbb{E}\left[|{\overline{W}_T}|^k\right]\leq 4C_T^k \mathbb{E}\left[|{\overline{W}_T}|^k\right]\leq 4C_T^{2k} \mathbb{E}\left[|{\widetilde W_T}|^k\right].$$
For $k=1$, $$\mathbb{E}\left[\widetilde W^*_T\right]\leq\frac{1}{2}\left(1+\mathbb{E}\left[(\widetilde W^*_T)^2\right]\right)\leq 1+4C_T^4\mathbb{E}\left[({\widetilde W_T})^2\right].$$
Notice that  $\widetilde W_T$ is a centred normal distribution and that the constant  $C_T$ can be taken large enough so that the variance of ${\widetilde W_T}$ is smaller than $C_T^2$.
It follows that 
$\mathbb{E}\left[|{\widetilde W_T}|^k\right]\leq (C_T^2)^{\frac{k}{2}}(k-1)!!$. 
By modifying a little the coefficients $\vert q \vert_0, \vert q \vert_1, \vert q \vert_2$ in the power series $|q|$, which does not influence the conclusion, we obtain that $C=4\sum_{k=0}^{\infty}|q|_k(k-1)!!C_T^{3k}$ satisfies the requirements and is finite, since $q$ is  negligible to double factorial.
\end{proof}

\begin{proposition}
\label{B.19}
If $q$ is  negligible to double factorial, then the expressions in \eqref{eq:quantitiesq} are estimable, and 
$$
\mathbb{E}\left[\left|\int_{0}^{t} q\left(e^{bs}x+w(s)+\widetilde{W}_s\right)d s\right|\right]+\mathbb{E}\left[\left| q\left(e^{bt}x+w(t)+\widetilde{W}_t\right)\right|\right] \leq |\tilde{q}|(|x|)+C, \quad t\in[0,T], \; x\in \R,
$$ 
where $\tilde{q}$ is negligible to double factorial and only depends on $T,q$, and $C$ is a constant.
\end{proposition}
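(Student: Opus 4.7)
The approach is to first establish a deterministic pointwise majorant valid almost surely, then identify an explicit power series $\tilde q$ that controls both expressions in \eqref{eq:quantitiesq}, and finally extract the expectation estimate from Lemma~\ref{B.15}. The key ingredients are the two inequalities of Lemma~\ref{B.6} together with the closure properties of negligible-to-double-factorial power series collected in Lemma~\ref{lemB.12}.

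First I would bound the integrand pointwise. Since $e^{bs}\leq e^{|b|T}$ on $[0,T]$, $w$ is continuous hence bounded on $[0,T]$ by some constant $\|w\|_\infty$, and $|\widetilde W_s|\leq \widetilde W^*_T$ for $s\leq T$, applying Lemma~\ref{B.6}(i) and then Lemma~\ref{B.6}(ii) to the three non-negative terms yields, uniformly in $s\in[0,T]$,
\begin{equation*}
\bigl|q\bigl(e^{bs}x+w(s)+\widetilde W_s\bigr)\bigr|\leq |q|\bigl(3e^{|b|T}|x|\bigr)+|q|\bigl(3\|w\|_\infty\bigr)+|q|\bigl(3\widetilde W^*_T\bigr),\quad \text{a.s.}
\end{equation*}
Note that $|q|(3\|w\|_\infty)$ is a finite constant because $|q|$ has infinite radius of convergence.

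Next I would construct the majorant. Define
\begin{equation*}
\tilde q(y):=(T+1)\Bigl(|q|\bigl(3e^{|b|T}y\bigr)+|q|\bigl(3\|w\|_\infty\bigr)\Bigr).
\end{equation*}
By Lemma~\ref{lemB.12}(iii) the series $y\mapsto |q|(3e^{|b|T}y)$ is negligible to double factorial, and by Lemma~\ref{lemB.12}(ii) the sum with a constant and multiplication by $T+1$ preserve this property; since the coefficients of $\tilde q$ are non-negative, $|\tilde q|=\tilde q$. The pointwise bound and the monotonicity of $|q|$ on $\R^+$ then give, for every $t\in[0,T]$,
\begin{equation*}
\bigl|q(e^{bt}x+w(t)+\widetilde W_t)\bigr|\leq |\tilde q|(|x|)+|\tilde q|(\widetilde W^*_T),
\end{equation*}
so the second expression in \eqref{eq:quantitiesq} is estimable. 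Integrating the same pointwise bound over $[0,t]\subseteq[0,T]$ absorbs the resulting factor $T$ into the $(T+1)$-prefactor of $\tilde q$, producing the identical estimate for the first expression. Hence both quantities are estimable with the same $\tilde q$.

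To conclude, I would invoke Lemma~\ref{B.15} twice. Applied to each of the two estimable families it supplies a (possibly enlarged) power series $\tilde q$ still negligible to double factorial and a constant $C$ such that the sum of the two expectations is bounded by $|\tilde q|(|x|)+C$, which is exactly the stated inequality. The main obstacle, such as it is, lies in the bookkeeping of Step~2: one must check that the explicit majorant obtained via Lemma~\ref{B.6}(ii) is still negligible to double factorial after the composite operations of scaling by $3e^{|b|T}$, addition of a constant and scalar multiplication by $T+1$. All of this is precisely what the closure items of Lemma~\ref{lemB.12} were designed to provide, so no further probabilistic or analytic input is needed.
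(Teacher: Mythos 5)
Your proof is correct and follows essentially the same route as the paper: apply Lemma~\ref{B.6} to obtain a pointwise majorant, package it into an explicit power series $\tilde q$ whose membership in the negligible-to-double-factorial class is secured by Lemma~\ref{lemB.12}, and conclude via Lemma~\ref{B.15}. The only differences from the paper's argument are cosmetic, namely the specific constants used in building $\tilde q$ (the paper uses $3(C_T+1)$ and $T\vee 1$ where you use $3e^{|b|T}$ and $T+1$).
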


\begin{proof} 
Fix $t\in [0,T]$ and $x\in \R$. 
We  first bound $ q\left(e^{bt}x+w(t)+\widetilde W_t\right)$:
\begin{align}
\left| q\left(e^{bt}x+w(t)+\widetilde{W}_t\right)\right|
&\leq |q|\left(e^{bt}|x|+|w(t)|+|\widetilde{W}_t|\right)\leq\sum_{k=0}^{\infty}|q|_k\left((3e^{bt}|x|)^k+|3w(t)|^k+|3\widetilde{W}_t|^k\right)
\end{align}
by Lemma \ref{B.6}. Since $e^{bt}$ and $w(t)$ are both bounded in $[0,T]$,  there exists a constant $ C_T\geq 0$ such that
\begin{align}
\sum_{k=0}^{\infty}|q|_k\left((3e^{bt}|x|)^k+|3w(t)|^k+|3\widetilde{W}_t|^k\right)  \leq& \sum_{k=0}^{\infty}|q|_k\left(|3C_Tx|^k+(3C_T)^k+|3\widetilde{W}_t|^k\right) \\
\leq&|q|(|3C_Tx|)+|q|(|3C_T|)+|q|(3\widetilde W^*_T).
\end{align}
Setting $\tilde{q}(x)=|q|(3(C_T+1)x)+|q|(|3C_T|)$, we get $\left|q\left(e^{bt}x+w(t)+\widetilde{W}_t\right)\right|\leq \tilde{q}(|x|)+\tilde{q}(\widetilde W^*_T)=|\tilde{q}|(|x|)+|\tilde{q}|(\widetilde W^*_T)$. By Lemma \ref{lemB.12}, $|q|$ is also  negligible to double factorial, and $
\tilde{q}$ is still  negligible to double factorial. So  $q\left(e^{bt}x+w(t)+\widetilde W_t\right)$ is estimable.  
An application of Lemma \ref{B.15} yields the bound for  the expectation. Furthermore, the term $\int_{0}^t q\left(e^{bs}x+w(s)+\widetilde W_sds\right)$ is clearly  bounded by $T|\tilde{q}|(|x|)+T|\tilde{q}|(\widetilde W^*_T)$.
So we can update $\tilde q$ to  $\tilde{q}(x)=T\vee 1|q|(3(C_T+1)x)+T\vee 1|q|(|3C_T|)$ to obtain the claimed bound and end the proof. 
\end{proof}



We have now  that $|\int_{0}^{t} q\left(e^{bs}(x+\theta\Delta x)+w(s)+\widetilde W_s\right) d s|\leq |\tilde{q}|(|x+\theta\Delta x|)+|\tilde{q}|(\widetilde W^*_T)\leq |\tilde{q}|(|x|+1)+|\tilde{q}|(\widetilde W^*_T)$ since we can choose that $|\Delta x|\leq 1$, $\theta \in [0,1]$ with coefficient of $|\tilde{q}|$ all positive. The dominating function $|\tilde q|(|x|+1)+|\tilde q|(\widetilde W^*_T)$ is integrable. For the term $q\left(e^{bt}(x+\theta\Delta x)+w(t)+\widetilde W_t\right)$, we can build a dominating function in a similar way.


Going back to \eqref{eq:partialfx}, we now have all the ingredients to apply the dominated convergence theorem when $\Delta x \rightarrow 0$ on 
$$\frac{f(t,x+\Delta x)-f(t,x)}{\Delta x
}=\mathbb{E}\left[\frac{Z(t,x+\Delta x)-Z(t,x)}{\Delta x}\right]=\mathbb{E}\left[h(x+\theta\Delta x)Z(t,x+\theta\Delta x)\right],$$
where $\theta \in [0,1]$. Recall $h(x)$ from \eqref{h_func}, and that
 $|Z(t,x)|\leq 1$ and $u_i,v$ are bounded, $e^{bs}$ is bounded in $[0,T]$, $p_i',q'$ are negligible to double factorial. Therefore, \eqref{eq:partialfx} holds. Of course, we would like to prove by induction that 
$$\frac{\partial^kf}{\partial x^k}(t,x)=\mathbb{E}\left[H_k(t,x)Z(t,x) \right]$$
holds for all $k \in \mathbb N$ to show that $f$ is $C^{\infty}$ in $x$ , where 
\begin{align}\label{eq:gk}
    H_{k+1 }(t,x)=\frac{\partial H_k(t,x)}{\partial x}+H_k(t,x)h(x), \quad  H_0(t,x)=1, \quad  H_1(t,x)=h(x).
\end{align}

To achieve this, we need to prove first that $H_k$ are well-defined. Then, similarly to before, we will bound $H_{k}(x)$ so that we can apply by induction the dominated convergence theorem to 
\begin{align}
&\lim_{\Delta x \rightarrow 0}
 \frac{\frac{\partial^kf}{\partial x^k}(t,x+\Delta x)-\frac{\partial^kf}{\partial x^k}(t,x)}{\Delta x
}\\
=&\lim_{\Delta x \rightarrow 0}\mathbb{E}\left[\frac{H_{k}(t,x+\Delta x)Z(t,x+\Delta x)-H_{k}(t,x)Z(t,x)}{\Delta x}\right]\\
=&\lim_{\Delta x \rightarrow 0}\mathbb{E}\left[H_{k+1}(t,x+\theta\Delta x)Z(t,x+\theta\Delta x)\right].
\end{align}








For $H_2(t,x)$, we need to compute $\frac{\partial h(x)}{\partial x}$, recall that $h(x)$ is just a sum of the Riemann integral of the continuously differentiable function and a differentiable function with respect to $x$ for a fixed $\omega$ outside a null-set. Therefore,
\begin{align}
\frac{\partial h(x)}{\partial x}&=\int_{0}^{T-t}\sum_{i=1}^3u_i(t+s) e^{2bs}p_i''\left(e^{bs}x+w(s)+\widetilde{W}_s\right) d s \\
&+v(T) e^{2b(T-t)}q''\left(e^{b(T-t)}x+w(T-t)+\widetilde W_{T-t}\right).
\end{align}

Thus $H_2$ can be obtained explicitly, and similarly for $H_3,H_4,\ldots, H_k,\ldots$. This procedure will differentiate many times the function $h(x)$, so it is useful to define:
\begin{align}
h_k(x):&=\int_{0}^{T-t}\sum_{i=1}^3u_i(t+s) e^{kbs}p_i^{(k)}\left(e^{bs}x+w(s)+\widetilde{W}_s\right) d s\\
&+v(T) e^{kb(T-t)}q^{(k)}\left(e^{b(T-t)}x+w(T-t)+\widetilde W_{T-t}\right).
\end{align}


\begin{definition}
\label{B.20}
We define the set 
$\mathcal{A}_{x}(h)$:
\begin{align}\label{eq:defAx}
    \mathcal {A}_x(h):=  \left\{\sum_{s=0}^{l}c_{s}h_{s,1}h_{s,2}\cdots h_{s,m_s}: l\in \mathbb{N},\ m_s\in\mathbb{N},\ c_s\in \R,\ h_{s,i}\in \{h\}_k \text{ for every }i\text{ and }s\right\}.
\end{align}
as the $\R$-algebra generated by higher order derivatives of the function $h(x)$. We call $h_k$  generating elements of $\mathcal{A}_{x}(h)$.
\end{definition}

Notice that if $m_s=0$, then by convention the product $h_{s,1}h_{s,2}\cdots h_{s,m_s}=1$.











In the next Lemma \ref{B.23}, we will characterize $H_k$ defined in \eqref{eq:gk}.

\begin{lemma}
\label{B.23} { For every $k\in \mathbb N$}, 
the function $H_k$ in \eqref{eq:gk} is well-defined. In addition, $H_k$ is differentiable in $x$, continuous in $(t,x)$ and estimable.
\end{lemma}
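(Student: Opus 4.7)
\textbf{Proof plan for Lemma \ref{B.23}.} I would proceed by induction on $k$, strengthening the claim to: $H_k$ lies in the algebra $\mathcal{A}_x(h)$ from Definition \ref{B.20}, and every element of $\mathcal{A}_x(h)$ is well-defined, continuous in $(t,x)$, differentiable in $x$, and estimable. The base cases are immediate: $H_0 = 1$ trivially satisfies all properties, while $H_1 = h = h_1$ is manifestly continuous and differentiable in $x$ (its integrand is continuous with bounded deterministic factor $e^{bs}$), and it is estimable by applying Proposition \ref{B.19} to the power series $p_i'$ and $q'$, which are negligible to double factorial by Lemma \ref{lemB.12}(ii), combined with the boundedness of $u_i, v$ on $[0,T]$ and the closure properties in Lemma \ref{B.15}.

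The heart of the induction is the identity $\partial_x h_j = h_{j+1}$. The formal computation is trivial, but justifying the interchange of $\partial_x$ with $\int_0^{T-t} ds$ requires a dominating function; I would produce one by applying Proposition \ref{B.19} to $p_i^{(j+1)}$ and $q^{(j+1)}$, which remain negligible to double factorial by repeated use of Lemma \ref{lemB.12}(ii), and then invoke Lemma \ref{B.15} to secure integrability. Once $\partial_x h_j = h_{j+1}$ is established, the algebra $\mathcal{A}_x(h)$ is stable under $\partial_x$ via the product rule, and trivially stable under multiplication by $h = h_1$. Since $H_{k+1} = \partial_x H_k + H_k \cdot h$, the inductive hypothesis immediately gives $H_{k+1} \in \mathcal{A}_x(h)$.

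It then remains to transfer the four properties from the generators $h_j$ to arbitrary elements of $\mathcal{A}_x(h)$. Continuity in $(t,x)$ and differentiability in $x$ of each $h_j$ follow from continuity of $u_i, v$, entireness of $p_i^{(j)}, q^{(j)}$, and a standard dominated convergence argument applied to the Riemann integral in $s$; both properties are preserved by finite sums and products. For estimability of each $h_j$, I would again invoke Proposition \ref{B.19}, since $p_i^{(j)}, q^{(j)}$ inherit the negligible-to-double-factorial property from Lemma \ref{lemB.12}(ii). Because Lemma \ref{B.15} ensures that the class of estimable processes is closed under addition, multiplication, and scalar multiplication, every element of $\mathcal{A}_x(h)$ — and in particular every $H_k$ — is estimable.

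The main obstacle I anticipate is the clean bookkeeping needed to justify differentiation under the integral sign for $\partial_x h_j = h_{j+1}$, together with verifying that the estimability bound, which grows polynomially in the number of factors, remains integrable after the products appearing in $H_{k+1} = \partial_x H_k + H_k \cdot h$; both ultimately reduce to the negligible-to-double-factorial property of the successive derivatives of $p_i$ and $q$, which is preserved under the relevant operations by Lemma \ref{lemB.12}.
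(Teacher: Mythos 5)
Your proof mirrors the paper's argument: both show $H_k \in \mathcal{A}_x(h)$ by induction, establish $\partial_x h_j = h_{j+1}$ for the generators, and then transfer continuity, differentiability, and estimability to all of $\mathcal{A}_x(h)$ via the closure properties in Lemmas \ref{lemB.12} and \ref{B.15} together with Proposition \ref{B.19}. One small simplification you could make: justifying $\partial_x h_j = h_{j+1}$ does not require Proposition \ref{B.19}; since for a.e.\ fixed $\omega$ the integrand $s \mapsto u_i(t+s)\,e^{jbs}\,p_i^{(j)}\bigl(e^{bs}x + w(s) + \widetilde{W}_s\bigr)$ and its $x$-derivative are jointly continuous on the compact $[0,T-t]\times K$ for any compact $K \subset \R$, the standard Leibniz rule for differentiation under a compactly supported Riemann integral applies pathwise with no further domination argument, which is what the paper invokes when it cites ``Leibniz's rule.''
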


\begin{proof}
First, note the following three facts:
\begin{enumerate}[(i)]
    \item $h(x)=h_1(x)$ is a generating element of $\mathcal{A}_x(h)$,
    \item $H_0=1\in \mathcal{A}_x(h)$, 
    \item All generating elements $h_k(x)$ of $\mathcal{A}_x(h)$ are differentiable in $x$ by applying the Leibniz's rule, and $\frac{\partial h_k(x)}{\partial x}=h_{k+1}(x)$. So if $g \in \mathcal{A}_x(h)$, then $\frac{\partial g}{\partial x}$ exists and $ \in \mathcal{A}_x(h)$.
\end{enumerate}

By induction and noticing that $\mathcal{A}_x(h)$ is closed under linear sum and multiplication, $H_k$ are well-defined and in $\mathcal{A}_x(h)$. For fixed $\omega \in \Omega$ outside a null-set, the generating elements $h_k$ are continuous in $(t,x)$ and differentiable in $x$. This implies that $H_k$ is also continuous in $(t,x)$ and differentiable in $x$.

We know $u_i,v$ are continuous (and thus bounded), $e^{kbs}$ is bounded in $[0,T]$, $p_i^{(k)},q^{(k)}$ are negligible to double factorial by the statement 2 of Lemma \ref{lemB.12}. By Proposition \ref{B.19},  generating elements $h_k$ of $\mathcal{A}_x(h)$ are estimable. Since $H_k\in \mathcal{A}_x(h)$ is a linear sum of finite product of generating elements of $\mathcal{A}_x(h)$,
by Lemma \ref{B.15} $H_k$ is also estimable.
\end{proof}

We are now ready to prove Theorem \ref{B.24}.

\begin{proof}[Proof of Theorem \ref{B.24}]
Take $H_k$ as defined in \eqref{eq:gk} which are estimable by Lemma \ref{B.23}. By applying Lemma \ref{B.15},
the term $H_k(t,x)Z(t,x)$ is estimable with its expectation bounded by $|q_k|(|x|)+C_k$. Using induction, we first notice that for the case $k=0$, $\left|\frac{\partial^k f}{\partial x}\right| = |f(t,x)|\leq \E[|Z(t,x)|]\leq 1$ which trivially satisfies the inequality in \eqref{eq:boundfk}. Next, suppose that it is true for up to case $k$, recall the definition of $Z$ from \eqref{def_z} and that $Z_x=hZ$, where $h$ is defined in  \eqref{h_func}. Thus $(H_kZ)_x=H_{k+1}Z$. Since both $Z$ and $H_{k+1}$ are differentiable in $x$ as per \ref{B.23}, by the Mean value theorem:
\[\frac{1}{\Delta x}\Big( H_k(t,x+\Delta x)Z(t,x+\Delta x)-H_k(t,x)Z(t,x) \Big)
=H_{k+1}(t,x+\theta\Delta x)Z(t,x+\theta\Delta x),\]
where $\theta$ is  $\in [0,1]$. Strictly speaking, since $H_kZ$ is a complex-valued functions, we need to apply the Mean value on both the real part and imaginary part separately with different $\theta$. However, this does not change the proof at all, so to simplify the notation and discussion, only $\theta$ is used.

By Lemma \ref{B.6}, Lemma \ref{B.23} and the fact that $|Z|\leq1$, and $H_{k+1}(t,x+\theta\Delta x)$ 
is dominated by $|q_{k+1}|(|x|+1)+|q_{k+1}|(\widetilde W^*_T)$ with bounded expectation by Lemma \ref{B.15} if we choose that $|\Delta x|\leq1$, where  $q_{k+1}$ is  negligible to double factorial. 
Applying dominated convergence theorem, for $\Delta x\rightarrow 0$ we have:
\begin{align}
\frac{\partial^{k+1}f}{\partial x^{k+1}}(t,x)&=\mathbb{E}\Bigg[H_{k+1}(t,x)Z(t,x)\Bigg].
\end{align}

Therefore, for all $k\in \mathbb{N}$, where $q_k$ negligible to double factorial and  a constant $C_k$, we have that  \eqref{eq:boundfk} holds. Finally, for the continuity of 
$\frac{\partial^kf}{\partial x^k}(t,x)$, it suffices to notice that before taking the expectation, the random variable $H_{k}(t,x)Z(t,x)$ is continuous with respect to $(t,x)$, Then fixing $(t_0,x_0)$, again by  Proposition \ref{B.19}, Lemma \ref{B.23},  its expectation is uniformly bounded with $t\leq T$, and $|x|\leq |x_0|+1$ bounded. So again by the dominated convergence theorem and taking the limit at $(t_0,x_0)$, the continuity holds.\end{proof}

\subsection{Feynman-Kac}
\label{section6.2}
In this subsection, we derive the Feynman-Kac formula in Theorem~\ref{thmB.1}. Since we do not have that $f_t$ exists a priori in our setting, we shall prove its existence and obtain the Feynman-Kac formula at the same time. 

First, we introduce a lemma which will be useful later.

\begin{lemma}
\label{expz}
Let $z\in \mathbb{C}$ such that $\Re (z)\leq0$, then $|\exp(z)-1|\leq 3|z|$. 
\end{lemma}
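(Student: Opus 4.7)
The plan is to use the integral representation
\[
e^{z}-1 \;=\; \int_{0}^{1} \frac{d}{dt}\bigl(e^{tz}\bigr)\,dt \;=\; z\int_{0}^{1} e^{tz}\,dt,
\]
which is valid for every $z\in\mathbb{C}$ since $e^{tz}$ is entire. Taking moduli,
\[
|e^{z}-1| \;\leq\; |z|\int_{0}^{1} |e^{tz}|\,dt \;=\; |z|\int_{0}^{1} e^{t\,\Re(z)}\,dt,
\]
and the hypothesis $\Re(z)\leq 0$ forces $e^{t\,\Re(z)}\leq 1$ for every $t\in[0,1]$. Hence $|e^{z}-1|\leq |z|\leq 3|z|$, which proves the claim (the constant $3$ is not tight, but it is presumably the factor needed downstream).

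If one prefers to avoid complex path integration, an equally short argument is a case split. For $|z|\leq 1$, expand as a power series to obtain
\[
|e^{z}-1| \;\leq\; \sum_{k=1}^{\infty}\frac{|z|^{k}}{k!} \;\leq\; |z|\sum_{k=0}^{\infty}\frac{|z|^{k}}{(k+1)!} \;\leq\; |z|\,e^{|z|} \;\leq\; e\,|z| \;<\; 3|z|.
\]
For $|z|>1$, the hypothesis $\Re(z)\leq 0$ yields $|e^{z}|=e^{\Re(z)}\leq 1$, so by the triangle inequality $|e^{z}-1|\leq |e^{z}|+1\leq 2 < 3|z|$.

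There is no real obstacle here; the only subtlety is to remember that the $\Re(z)\leq 0$ condition must be used somewhere (without it, the bound fails as $|z|\to\infty$ along the positive real axis). Either of the two arguments above uses the condition transparently, so I would present the one-line integral proof as the main proof.
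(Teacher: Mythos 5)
Both of your arguments are correct, and your first one even proves the sharper bound $|e^{z}-1|\le|z|$ under $\Re(z)\le 0$, so the constant $3$ is superfluous for you. Your route is genuinely different from the paper's: the paper writes $z=x+iy$ with $x\le 0$ and decomposes
\[
|e^{z}-1| \le |e^{x}-1| + |\cos y - 1| + |\sin y|,
\]
then bounds each of the three summands separately by $|z|$ (using $|e^{x}-1|\le |x|$ for $x\le 0$, $|\cos y - 1|\le |y|$, and $|\sin y|\le |y|$), which is where the factor $3$ originates. That argument is entirely elementary (only real one-variable calculus), whereas yours leans on the fundamental theorem of calculus for the entire function $t\mapsto e^{tz}$, or alternatively on the power series expansion with a case split on $|z|\lessgtr 1$. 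What your integral argument buys is a one-line proof and a tighter constant; what the paper's decomposition buys is transparency about where $\Re(z)\le 0$ enters (only in the term $|e^{x}-1|\le|x|$) without needing to parametrize a complex path. Since the lemma is only used to bound a quantity by a constant multiple of the increment, either constant ($1$ or $3$) serves equally well downstream, so your version is a clean improvement.
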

\begin{proof}
We write $z=x+yi$, $x\leq 0$ and $y \in \R$. Then $|\exp(z)-1|=|\exp(x+yi)-1|=|\exp(yi)(\exp(x)-1)+\exp(yi)-1|\leq |\exp(x)-1|+|\cos(y)-1|+|\sin(y)|$.

If $x=0$, then $|\exp(x)-1|=0$. If $x<0$, $|\frac{\exp(x)-1}{z}|\leq |\frac{\exp(x)-1}{x}| \leq 1$.

If $y=0$, then $|\cos(y)-1|=0,|\sin(y)|=0$. If $y\neq 0$, $|\frac{\cos(y)-1}{z}|\leq |\frac{\cos(y)-1}{y}|\leq 1$, $|\frac{\sin(y)}{z}|\leq |\frac{\sin(y)}{y}|\leq 1$.
\end{proof}
\begin{theorem}
\label{thmB.1} The function $f$ given by
\eqref{eq:defsmallf} and equivalently \eqref{eq:smallf2} is $C^\infty$ in $x$, $C^1$ in $t$, and solves the following partial differential equation (PDE):
\begin{equation}\label{eq:PDE}
    \begin{cases}
      f_t(t,x)+f_x(t,x)(a+bx)+\frac{1}{2}c^2f_{xx}(t,x)+f(t,x)\sum_{i=1}^3u_i(t)p_i(x)=0,\\
      f(T,x)=\exp(v(T)q(x)).
    \end{cases}       
\end{equation}
\end{theorem}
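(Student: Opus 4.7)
The plan is to derive the PDE and the $C^1$ regularity in $t$ simultaneously by combining the Markov property of $X$ with an Itô-formula expansion that relies only on the already-established spatial regularity. The $C^\infty$ regularity in $x$ and the pointwise bound \eqref{eq:boundfk} come from Theorem~\ref{B.24}, and the terminal condition $f(T,x)=\exp(v(T)q(x))$ is immediate by setting $t=T$ in \eqref{eq:defsmallf} and conditioning on $X_T=x$. Only the existence of $f_t$ and the PDE identity remain.

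For fixed $t<T$, $x\in\R$ and $h>0$ with $t+h\leq T$, the Markov property applied to \eqref{eq:smallf2} gives
\[
f(t,x) \;=\; \mathbb{E}\!\left[\,e^{Y_h}\,f(t+h,X_{t+h})\,\Big|\,X_t=x\right],\qquad Y_h := \int_t^{t+h}\sum_{i=1}^3 u_i(s)\,p_i(X_s)\,ds.
\]
I would subtract $f(t+h,x)$ on both sides and split
\[
e^{Y_h}\,f(t+h, X_{t+h}) - f(t+h, x) \;=\; (e^{Y_h}-1)\,f(t+h, X_{t+h}) + \bigl(f(t+h,X_{t+h}) - f(t+h,x)\bigr).
\]
Since $f(t+h,\cdot)\in C^2$ by Theorem~\ref{B.24}, Itô's formula applied to $s\mapsto f(t+h,X_s)$ on $[t,t+h]$ is legitimate and yields
\[
f(t+h,X_{t+h}) - f(t+h,x) \;=\; \int_t^{t+h}\!\mathcal{L} f(t+h,X_s)\,ds \;+\; \int_t^{t+h}\! c\,f_x(t+h,X_s)\,dW_s,
\]
with $\mathcal{L} g(x) := g_x(x)(a+bx)+\tfrac{c^2}{2}g_{xx}(x)$. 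The bound $|f_x(t,x)|\leq |q_1|(|x|)+C_1$ from Theorem~\ref{B.24}, combined with Proposition~\ref{B.19}, makes the stochastic integral a true martingale with vanishing conditional expectation.

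Dividing by $h$ and letting $h\downarrow 0$, I would apply dominated convergence to each piece. Lemma~\ref{expz} gives $|e^{Y_h}-1|\leq 3|Y_h|$, and the continuity of $u_i,p_i$ together with the estimability bounds of Proposition~\ref{B.19} allow the conclusion
\[
\tfrac{1}{h}\,\mathbb{E}\!\left[(e^{Y_h}-1)\,f(t+h,X_{t+h})\,\Big|\,X_t=x\right]\;\longrightarrow\; f(t,x)\sum_{i=1}^3 u_i(t)\,p_i(x).
\]
Similarly, the joint continuity of $f_x,f_{xx}$ in $(t,x)$ from Theorem~\ref{B.24}, together with their \eqref{eq:boundfk}-type bounds composed with $X_s$, justify dominated convergence on the drift piece and produce $\mathcal{L}f(t,x)$ in the limit. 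Combining the two, the right derivative $\partial_t^+ f(t,x)$ exists and equals $-\mathcal{L} f(t,x) - f(t,x)\sum_{i=1}^3 u_i(t)\,p_i(x)$. Running the symmetric argument on $[t-h,t]$ yields the matching left derivative, so $f_t(t,x)$ exists, and the joint continuity of its expression in $(t,x)$ then gives $f\in C^1$ in $t$ together with \eqref{eq:PDE}.

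The main obstacle will be providing the uniform integrability needed to justify each dominated-convergence step and the true-martingale property of the stochastic integral. This is handled by composing the spatial bounds \eqref{eq:boundfk} with the trajectory of $X$ and invoking the estimability framework of Subsection~\ref{section6.1}: any $q(X_s)$ with $q$ negligible to double factorial is estimable in the sense of Definition~\ref{estime} and therefore has expectation uniformly controlled on $s\in[0,T]$ by Lemma~\ref{B.15}, which is exactly what the domination requires. No new technical obstruction beyond careful bookkeeping is expected.
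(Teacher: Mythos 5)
Your proof is correct and follows the paper's approach: both rest on the Markov/tower decomposition $f(t,x)=\mathbb{E}^x\bigl[e^{Y_h}f(t+h,X_{t+h})\bigr]$, Itô's formula in the spatial variable applied to $f(t+h,\cdot)$ (made legitimate by Theorem~\ref{B.24}), Lemma~\ref{expz} to control $e^{Y_h}-1$, and the estimability machinery of Subsection~\ref{section6.1} to justify the dominated-convergence and true-martingale steps. The only difference is organizational: the paper computes the infinitesimal generator $\lim_{r\to 0^+}\tfrac1r\,\mathbb{E}^x[f(t_0,X_r)-f(t_0,x)]$ in two independent ways (via Itô, then via the Markov/tower decomposition) and equates them to extract the one-sided time derivative, whereas you split the time-difference quotient directly into the ``potential'' piece and the ``drift'' piece — an algebraically equivalent rearrangement of the same computation.
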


\begin{proof} By Theorem~\ref{B.24}, $f$ is $C^\infty$ in $x$. We will compute the infinitesimal generator 
$$\lim_{r\rightarrow 0^+}\frac{1}{r}(\mathbb{E}^{x}
\left[f(t_0,X_r)-f(t_0,x)\right])$$
in two ways to obtain the PDE and the existence of $f_t$ at the same time. Here $\mathbb{E}^{x}$ means the conditional expectation with $X_0=x$.

\underline{First way}: fix $t_0$, we compute the quantity using  the classical definition of generator. By Itô's formula:
\begin{align}
df(t_0,X_t)=f_xdX_t+\frac{1}{2}f_{xx}d\langle X\rangle_t=cf_xdW_t+\left(\frac{1}{2}c^2f_{xx}+(a+bX_t)f_{x}\right)dt.
\end{align}
Note that $f_x,f_{xx}$ are evaluated at $(t_0,X_t)$ here. By Theorem \ref{B.24}, statement $(ii)$ of Lemma \ref{lemB.12}, $f_x(t,x),(a+bx)f_x(t,x),f_{xx}(t,x),f_x^2(t,x)$ are all bounded by $|q|(|x|)+C$, where $q$ is  negligible to double factorial. Combining with Proposition \ref{B.19}, $\mathbb{E}^x\left[\int_0^{r}f_x^2(t_0,X_t)dt\right]$ is bounded by $|\tilde{q}|(|x|)+C$ for $r\in[0,T]$, where $\tilde{q}$ is negligible to double factorial. Thus $\int_0^{r}cf_x(t_0,X_t)dW_t$ is in $L^2$ and is a true martingale for $t\in [0,T]$. Similarly, the Riemann integral 
$\int_0^r (\frac{1}{2}c^2f_{xx}(t_0,X_t)+(a+bX_t)f_{x}(t_0,X_t))dt$ is estimable with finite expectation. In addition, we have
\[
\left \vert \frac{1}{r}\int_0^r (\frac{1}{2}c^2f_{xx}(t_0,X_t)+(a+bX_t)f_{x}(t_0,X_t)dt \right \vert \leq \sup_{s\in[0,r]}\frac{1}{2}c^2|f_{xx}(t_0,X_{s})|+|(a+bX_s)f_{x}(t_0,X_{s})|,
\]
where the right hand side is still estimable and dominated by a random variable with finite expectation. By dominated convergence theorem,
\begin{align}
&\lim_{r\rightarrow 0^+}\frac{1}{r}(\mathbb{E}^{x}
\left[f(t_0,X_r)-f(t_0,x)\right])\\ =&\lim_{r\rightarrow 0^+} \frac{1}{r}\mathbb{E}^x\left[\int_0^r(\frac{1}{2}c^2f_{xx}(t_0,X_t)+(a+bX_t)f_{x}(t_0,X_t))dt\right]
\\=&\frac{1}{2}c^2f_{xx}(t_0,x)+(a+bx)f_{x}(t_0,x)).\label{first_way}
\end{align} 

\underline{Second way}: compute the generator
$\lim_{r\rightarrow 0^+}\frac{1}{r}(\mathbb{E}^{x}
\left[f(t_0,X_r)-f(t_0,x)\right])$ directly by applying the Markov property of $X_t$. Define $Z_t:=\exp( \int_{0}^{t} \sum_{i=1}^3u_i(T-t+s)p_i\left(X_{s}\right) d s)$ and $Y_t=v(T)q(X_{t})$, we apply Markov property of $X_t$ to the representation of $f(t,x)$ in \eqref{eq:defsmallf}, i.e.
\begin{align}
f(t,x)=\mathbb{E} \left[  \exp\left(\int_0^{T-t} \sum_{i=1}^3u_i(t+s)p_i(X_s)ds+v(T)q(X_{T-t})\right) \Mid X_0=x \right],
\end{align}
and by Markov property for $r\in [0,t]$,
\begin{align}
f(t,X_r)=&\mathbb{E} \left[  \exp\left(\int_r^{T-t+r} \sum_{i=1}^3u_i(t-r+s)p_i(X_s)ds+v(T)q(X_{T-t+r})\right) \Mid X_r \right].
\end{align}
 By the tower property of conditional expectation, we have
\begin{align}
\mathbb{E}^x\left[f(t,X_r)\right]=&\mathbb{E}^x\left[\mathbb{E} \left[ \exp\left(\int_r^{T-t+r} \sum_{i=1}^3u_i(t-r+s)p_i(X_s)ds+v(T)q(X_{T-t+r})\right) \Mid  X_r \right] \right]\\
=&\mathbb{E}^x \left[\exp\left(\int_r^{T-t+r} \sum_{i=1}^3u_i(t-r+s)p_i(X_s)ds+v(T)q(X_{T-t+r})\right)\right]\\
=&\mathbb{E}^{x}\left[Z_{T-t+r} \cdot \exp \left(-\int_{0}^{r}\sum_{i=1}^3 u_i(t-r+s)p_i\left(X_{s}\right) d s\right) \exp\left(Y_{T-t+r}\right)\right].
\end{align}

Therefore,
\begin{align}
   &\frac{1}{r}\left(\mathbb{E}^x\left[f\left(t_0, X_{r}\right)-f(t_0, x)\right]\right)\\
   =&\frac{1}{r} \mathbb{E}^x\left[Z_{T-t_0+r} \exp \left(-\int_{0}^{r}\sum_{i=1}^3 u_i(t_0-r+s)p_i\left(X_{s}\right) d s\right) \exp\left(Y_{T-t_0+r}\right)-Z_{T-t_0} \exp(Y_{T-t_0})\right]\\
   =&\frac{1}{r} \mathbb{E}^x\Bigg[\exp\left(Y_{T-t_0+r}\right) Z_{T-t_0+r}-\exp\left(Y_{T-t_0}\right) Z_{T-t_0}\Bigg]\\+&\frac{1}{r} \mathbb{E}^x\left[\exp\left(Y_{T-t_0+r}\right) Z_{T-t_0+r} \left(\exp \left(-\int_{0}^{r}\sum_{i=1}^3 u_i(t_0-r+s)p_i\left(X_{s}\right) d s\right)-1\right)\right]\\
   =&\frac{1}{r}(f(t_0-r,x)-f(t_0,x))\\
   +&\frac{1}{r} \mathbb{E}^x\left[\exp\left(Y_{T-t_0+r}\right) Z_{T-t_0+r} \left(\exp \left(-\int_{0}^{r} \sum_{i=1}^3 u_i(t_0-r+s)p_i\left(X_{s}\right) d s\right)-1\right)\right].\label{computation_2}
\end{align}

Now we want to apply dominated convergence theorem  when $r\rightarrow 0^+$ to the term 
$$
   \frac{1}{r} \mathbb{E}^x\left[\exp\left(Y_{T-t_0+r}\right) Z_{T-t_0+r} \left(\exp \left(-\int_{0}^{r} \sum_{i=1}^3 u_i(t_0-r+s)p_i\left(X_{s}\right) d s\right)-1\right)\right].$$

To dominate the term inside the expectation, notice that at first $\Re (Y_{t})=\Re (v(T)q(X_t))=0$, thus $|\exp(Y_{T-t_0+r})|=1$. In addition,
\begin{align}
&\left|\frac{1}{r}Z_{T-t_0+r}\left(\exp \left(-\int_{0}^{r} \sum_{i=1}^3 u_i(t_0-r+s)p_i\left(X_{s}\right) d s\right)-1\right)\right| \\ =&\left|\frac{1}{r}\exp \left(\int_{r}^{T-t_0+r} \sum_{i=1}^3 u_i(t_0-r+s)p_i\left(X_{s}\right) d s\right)\left(1-\exp \left(\int_{0}^{r} \sum_{i=1}^3 u_i(t_0-r+s)p_i\left(X_{s}\right) d s\right)\right)\right|.
\end{align}

Since $\Re (u_1)\leq0, \Re (u_2),\Re (u_3)=0, p_1=p^2$, therefore
$$\Re\Big(\int_{r}^{T-t_0+r} \sum_{i=1}^3 u_i(t_0-r+s)p_i\left(X_{s}\right) d s\Big)=\int_{r}^{T-t_0+r} \Re\Big(u_1(t_0-r+s)p_1\left(X_{s}\right)\Big) d s\leq 0$$
and thus $\Big\vert\exp \left(\int_{r}^{T-t_0+r} \sum_{i=1}^3 u_i(t_0-r+s)p_i\left(X_{s}\right) d s\right) \Big\vert \leq 1$. Therefore we only need to dominate the term 
$$\frac{1}{r}\left(\exp \left(\int_{0}^{r} \sum_{i=1}^3 u_i(t_0-r+s)p_i\left(X_{s}\right) d s\right)-1\right).$$ 

Since $\Re(\int_{0}^{r} \sum_{i=1}^3 u_i(t_0-r+s)p_i\left(X_{s}\right) d s)\leq 0$, by Lemma \ref{expz} we just need to dominate the term
\[
\left \vert \frac{1}{r}\int_{0}^{r} \sum_{i=1}^3 u_i(t_0-r+s)p_i\left(X_{s}\right) d s \right \vert \leq \sum_{i=1}^3 ||u_i||_{\infty}\sup_{s\in[0,r]}|p_i(X_s)|.
\]
Notice $\vert u_i(s) \vert$ are bounded, $p_i$ are negligible to double factorial so $\vert p_i(X_s) \vert, s\in[0,T]$ is dominated when conditioned $X_0=x$ by Proposition \ref{B.19}. Thus by the dominated convergence theorem :
\begin{align}
\lim_{r\rightarrow0^+} &\frac{1}{r} \mathbb{E}^x\left[\exp\left(Y_{T-t_0+r}\right) Z_{T-t_0+r}\left(\exp \left(-\int_{0}^{r} \sum_{i=1}^3 u_i(t_0-r+s)p_i\left(X_{s}\right) d s\right)-1\right)\right]\\
&=-\sum_{i=1}^3u_i(t_0)p_i(x)f(t_0,x)
\end{align}
which is finite. Equating with  \eqref{first_way}, the term $\lim_{r\rightarrow 0^+}\frac{1}{r}(f(t_0-r,x)-f(t_0,x))$ is thus well-defined. Of course, one can also replace $t_0$ by $t_0+r$ and perform similar computations as per the above two methods on the quantity
$$\lim_{r\rightarrow 0^+}\frac{1}{r}(\mathbb{E}^{x}
\left[f(t_0+r,X_r)-f(t_0+r,x)\right])$$ 
to obtain the existence of $f_t$ and also the PDE in \eqref{eq:PDE}. {A subtle point to note is that for the first method, we do not apply Itô's formula to $f(t_0+r,X_r)$ for the variable $r$, which requires the existence of the partial derivative $f_t$ a priori. Instead, we obtain the existence of $f_t$ by
{applying Itô's formula to $X_r$ in $f(t_0+u,X_r)$} and evaluate at $u=r$}: 
\begin{align}
&\frac{1}{r}\Big(\mathbb{E}^{x}
\left[f(t_0+r,X_r)-f(t_0+r,x)\right]\Big)\\
=& \frac{1}{r}\mathbb{E}^x\left[\int_0^r(\frac{1}{2}c^2f_{xx}(t_0+r,X_t)+(a+bX_t)f_{x}(t_0+r,X_t))dt\right],
\end{align} 
and taking the limit for $r\rightarrow 0^+$ as above. The continuity of $f_t$ is directly obtained  from the continuity of the other terms in this equation, with boundary condition at $T$ comes from the continuity of $f$ and its definition.
\end{proof}

\subsection{Infinite dimensional ODE}
\label{section6.3}

In this subsection, we obtain a solution of the system of ODE by comparing the derivatives of both sides of the PDE \eqref{eq:PDE} when $x=0$.

Since $f$ is continuous by Theorem \ref{B.24}, in addition if  $f$ does not vanish, we can define ${\log f}$ such that $\exp(\log f)=f$ and $\log f(T,0)=v(T)q(0)$, see  Lemma \ref{log}. Note $\log f$ is also  continuous in $(t,x)$, and $\exp(\log f)(T,x)=f(T,x)=\exp(v(T)q(x))$ with $\log f(T,x)=v(T)q(x)$.

\begin{theorem}
\label{thmB.2}
If $f$ does not vanish, then $$
\phi_{k}(t):=\left.\frac{1}{k !} \partial_{x}^{k} \log f({T-t}, x)\right|_{x=0}, \quad t \geq 0 
$$
solves the system of ODE
\begin{align}
\phi_{k}^{\prime}(t) & =\sum_{i=1}^3u_i(T-t)(p_i)_{k}\\
& +b k \phi_{k}(t)+a(k+1) \phi_{k+1}(t)+\frac{c^{2}(k+2)(k+1)}{2} \phi_{k+2}(t) \\
& +\frac{c^{2}}{2}(\widetilde{\phi}(t) * \widetilde{\phi}(t))_{k}, \quad \widetilde{\phi}_{k}(t)=(k+1)\phi_{k+1}(t), \\
\phi_{k}(0) & =v(T)q_k. \label{ode_t613}
\end{align}
\end{theorem}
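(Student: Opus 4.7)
My strategy is to convert the PDE for $f$ from Theorem~\ref{thmB.1} into a PDE for $g := \log f$, then extract the ODEs for the Taylor coefficients $\phi_k(t) = \tfrac{1}{k!}\partial_x^k g(T-t,x)\big|_{x=0}$ by differentiating in $x$ at $x=0$. Since $f$ is nowhere-vanishing by assumption and continuous by Theorem~\ref{B.24}, Lemma~\ref{log} lets us define $g$ as a continuous function on $[0,T]\times\R$. Using $f=e^g$, so that $f_t = g_t f$, $f_x = g_x f$, and $f_{xx} = (g_{xx} + g_x^2) f$, dividing the PDE~\eqref{eq:PDE} by the non-vanishing $f$ yields
\begin{equation}
g_t + (a+bx)g_x + \tfrac{c^2}{2}\bigl(g_{xx} + g_x^2\bigr) + \sum_{i=1}^3 u_i(t) p_i(x) = 0, \qquad g(T,x) = v(T) q(x).
\end{equation}
Under the time reversal $\tilde g(t,x) := g(T-t,x)$ this becomes $\tilde g_t = (a+bx)\tilde g_x + \tfrac{c^2}{2}(\tilde g_{xx} + \tilde g_x^2) + \sum_i u_i(T-t) p_i(x)$ with $\tilde g(0,x) = v(T)q(x)$.

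Next I would apply $\tfrac{1}{k!}\partial_x^k\cdot\big|_{x=0}$ to both sides. The left-hand side yields $\phi_k'(t)$. For the linear terms, Leibniz's rule gives $\tfrac{1}{k!}\partial_x^k[(a+bx)\tilde g_x]\big|_{x=0} = a(k+1)\phi_{k+1}(t) + bk\,\phi_k(t)$ and $\tfrac{1}{k!}\partial_x^k \tilde g_{xx}\big|_{x=0} = (k+1)(k+2)\phi_{k+2}(t)$, while the source gives $\sum_i u_i(T-t)(p_i)_k$. The key computation is for the quadratic term $\tilde g_x^2$: Leibniz yields
\begin{equation}
\partial_x^k(\tilde g_x^2)\big|_{x=0} = \sum_{j=0}^k \binom{k}{j}\,\partial_x^{k-j+1}\tilde g(t,0)\,\partial_x^{j+1}\tilde g(t,0) = \sum_{j=0}^k \binom{k}{j}(k-j+1)!(j+1)!\,\phi_{k-j+1}(t)\phi_{j+1}(t),
\end{equation}
and dividing by $k!$ reduces this to $\sum_{j=0}^k (k-j+1)(j+1)\phi_{k-j+1}(t)\phi_{j+1}(t) = (\tilde\phi*\tilde\phi)_k$ with $\tilde\phi_k := (k+1)\phi_{k+1}$. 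Assembling the pieces recovers the claimed Riccati ODE, and the initial condition $\phi_k(0) = v(T) q_k$ follows immediately by differentiating the terminal data $g(T,x) = v(T)q(x)$ at $x=0$.

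The main obstacle is justifying that each $\phi_k$ is differentiable in $t$, so that the above manipulations actually produce a pointwise ODE. Theorem~\ref{B.24} gives $f\in C^\infty$ in $x$, which transfers to $g=\log f$ on the non-vanishing region. For time regularity, Theorem~\ref{thmB.1} supplies $f_t$ continuous in $(t,x)$, and applying $\partial_x^k$ directly to the PDE~\eqref{eq:PDE} expresses $\partial_x^k f_t$ as a finite linear combination of products of bounded functions of $x$ and of continuous $x$-derivatives of $f$; this is continuous in $(t,x)$, so $\partial_t$ commutes with $\partial_x^k$ on $f$, and after dividing by $f$ the same holds for $g$. This grants $\phi_k\in C^1([0,T])$ and legitimises the coefficient-wise identification above.
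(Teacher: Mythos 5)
Your proof follows essentially the same route as the paper: pass from the Feynman--Kac PDE for $f$ to a Riccati-type PDE for $g=\log f$ by dividing through by the non-vanishing $f$, reverse time, and extract the ODE for $\phi_k$ by applying $\tfrac{1}{k!}\partial_x^k|_{x=0}$ with the Leibniz rule producing the convolution $(\widetilde\phi*\widetilde\phi)_k$; your justification for commuting $\partial_t$ and $\partial_x^k$ via continuity of $\partial_x^k\partial_t f$ (read off the PDE) is exactly what the paper's Lemma~\ref{B.26} formalizes. The argument and the computations are correct.
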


In order to prove the Theorem, we first link the PDE \eqref{eq:PDE} to the system of ODE \eqref{ode_t613}. Suppose $f$ does not vanish and set $g(t,x)=\log f(T-t,x)$, then the following PDE holds:
\begin{equation}
    \begin{cases}
      g_t(t,x)  =\sum_{i=1}^3u_i(T-t) p_i(x)+ g_x(t,x) \left(a+bx\right)+\frac{1}{2} c^{2}g_{xx}(t,x) +\frac{1}{2}c^2g_x^{2}(t,x),\\
      g(0,x)=v(T)q(x).
    \end{cases}       
\end{equation}

Since by Theorem \ref{thmB.1}, $f$ is $C^1$ in $t$ and $C^\infty$ in $x$, by Lemma \ref{C1Cinf}, $\log f$ is also $C^1$ in $t$ and $C^\infty$ in $x$. Then so is $g$.



To prove this theorem, we will need the following lemma.

\begin{lemma}
\label{B.26} 

If a function $h$ is $C^1$ in $(t,x)$, and one of the  partial derivatives $\frac{\partial^2 h}{\partial t\partial x}$ and $\frac{\partial^2 h}{\partial x\partial t}$ exists and is continuous, then they both exist and are equal.
\end{lemma}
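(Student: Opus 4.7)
This is a variant of Schwarz's theorem on the equality of mixed partial derivatives. By the symmetry of the statement, I may assume without loss of generality that $h_{xt}:=\frac{\partial^2 h}{\partial t\partial x}=\frac{\partial}{\partial t}\left(\frac{\partial h}{\partial x}\right)$ exists and is continuous, and aim to prove that $h_{tx}:=\frac{\partial}{\partial x}\left(\frac{\partial h}{\partial t}\right)$ also exists and equals $h_{xt}$.

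The strategy is to apply the fundamental theorem of calculus twice, using $C^1$ regularity first in $x$ and then the existence of $h_{xt}$ in $t$, to obtain an integral representation for a second-order increment of $h$. Concretely, since $h$ is $C^1$, for $r, s$ small I would write
\begin{align}
h(t+s, x+r) - h(t+s, x) - h(t, x+r) + h(t, x)
&= \int_0^r \bigl[h_x(t+s, x+v) - h_x(t, x+v)\bigr]\,dv \\
&= \int_0^r \int_0^s h_{xt}(t+u, x+v)\,du\,dv,
\end{align}
where the first equality uses continuity of $h_x$ and the second uses that $u\mapsto h_x(t+u,x+v)$ has derivative $h_{xt}(t+u,x+v)$ in $u$, which is continuous and hence integrable in $u$.

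Next, dividing both sides by $s$ and letting $s\to 0^+$: the left-hand side tends to $h_t(t, x+r) - h_t(t, x)$ by definition of $h_t$ and continuity of $h_t$ (noting that the difference quotient in $s$ converges); the right-hand side tends to $\int_0^r h_{xt}(t, x+v)\,dv$ by continuity of $h_{xt}$ (uniform convergence of the inner average on the compact set $v\in[0,r]$). Therefore
\begin{equation}
h_t(t, x+r) - h_t(t, x) = \int_0^r h_{xt}(t, x+v)\,dv.
\end{equation}
Now, since the integrand $v\mapsto h_{xt}(t,x+v)$ is continuous, a final application of the fundamental theorem of calculus, after dividing by $r$ and letting $r\to 0$, yields
\begin{equation}
\frac{\partial}{\partial x}h_t(t,x) = h_{xt}(t,x),
\end{equation}
which is precisely the claim that $h_{tx}$ exists and coincides with $h_{xt}$.

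The only subtle step is justifying the interchange of limit and inner integral in passing to the limit $s\to 0$ in the double integral; this should not present any real difficulty since $h_{xt}$ is continuous and the domain of integration is a compact rectangle, so uniform convergence of the inner mean $\frac{1}{s}\int_0^s h_{xt}(t+u,x+v)\,du$ to $h_{xt}(t,x+v)$ is immediate. No further assumption beyond continuous existence of one mixed partial is required.
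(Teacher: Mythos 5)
Your proof is correct. Note first that the paper states Lemma~\ref{B.26} without supplying a proof, treating it as a known variant of Schwarz's theorem on mixed partial derivatives; so there is no argument in the paper to compare against, and you are filling a genuine (if small) gap. Your route is the standard one: two applications of the fundamental theorem of calculus (valid because $h$ is $C^1$, so $h_x$ is continuous, and because $h_{xt}$ exists and is continuous by hypothesis) yield the double-integral identity for the second-order increment; dividing by $s$ and passing to the limit $s\to 0^+$ is legitimate on the right because $h_{xt}$ is uniformly continuous on the compact rectangle $[t,t+\epsilon]\times[x,x+r]$, so the inner averages $\frac{1}{s}\int_0^s h_{xt}(t+u,x+v)\,du$ converge to $h_{xt}(t,x+v)$ uniformly in $v\in[0,r]$, while the left side converges because $h_t$ exists pointwise (indeed is continuous) by the $C^1$ assumption; the final step is again the fundamental theorem of calculus since $v\mapsto h_{xt}(t,x+v)$ is continuous. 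The two-sided limit in $s$ (not just $s\to 0^+$) is handled identically, and the two cases of the hypothesis are symmetric under $t\leftrightarrow x$ as you claim. No gaps.
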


Now we can prove Theorem \ref{thmB.2}.

\begin{proof}[Proof of Theorem \ref{thmB.2}]

Since $F(t,x)$ does not vanish, we can define $\log F$ as per Definition \ref{defB.2}, i.e. 
\[
\exp(\log F(t,x))=F(t,x), \quad \log F(T,0)=0.
\]
We can now define $\log f(t,x)=\log F(t,x)+v(t)q(x)$ such that $\exp(\log f(t,x))=f(t,x)$. By Theorem \ref{thmB.1}, $f$ is $C^\infty$ in $x$ and $C^1$ in $t$, then since $\log f$ is continuous, by Lemma \ref{C1Cinf}, $\log f$ is also $C^\infty$ in $x$ and $C^1$ in $t$. In addition, the PDE below can also be deduced: 
$$
\begin{aligned}
-(\log f)_t(t,x)&=  \sum_{i=1}^3u_i(t) p_i(x)+ (\log f)_x(t,x) \left(a+bx\right)+\frac{1}{2} c^{2}(\log f)_{xx}(t,x) +\frac{1}{2}c^2(\log f)_x^{2}(t,x) \\\log f(T,x)&=v(T)q(x).
\end{aligned}
$$

The boundary condition comes from the fact $\log f(T,0)=v(T)q(0)$, with the continuity of $\log f$ and the fact that $\exp(\log f))(T,x)=f(T,x)=\exp(v(T)q(x))$.

Now substitute $g(t,x)=\log f(T-t,x)$:
\begin{align}
&g_t(t,x)  =\sum_{i=1}^3u_i(T-t) p_i(x)+ g_x(t,x) \left(a+bx\right)+\frac{1}{2} c^{2}g_{xx}(t,x) +\frac{1}{2}c^2g_x^{2}(t,x)\label{equag} \\&g(0,x)=v(T)q(x).
\end{align}

Since the right side of \eqref{equag} is $C^\infty$ in $x$, so is $g_t$. Notice that by definition $\frac{\partial^k }{\partial x^k}g(t,x)\vert_{x=0}=k!\phi_k(t)$. We aim to take the $k^{th}$ partial derivative with respect to $x$ on both sides of \eqref{equag} at $x=0$ to deduce the equation in \eqref{ode_t613}. For the left side of \eqref{ode_t613}, Lemma \ref{B.26} allows the interchangeability of $\partial t,\partial x$ and the following equality can be proven by induction:
$$\frac{\partial^k }{\partial x^k}\frac{\partial }{\partial t}g(t,x) \vert_{x=0}=\frac{\partial }{\partial t}\frac{\partial^k }{\partial x^k}g(t,x)\vert_{x=0}=k!\phi_k'(t).$$ 
The right side of \eqref{ode_t613} can be deduced by noticing  that $\frac{\partial^k }{\partial x^k}g(t,x)\vert_{x=0}=k!\phi_k(t)$ for all $k$. To obtain that  
$$\frac{\partial^k }{\partial x^k}g_x^2(t,x)\vert_{x=0}=k!(\widetilde{\phi}(t) * \widetilde{\phi}(t))_{k}, \quad \widetilde{\phi}_{k}(t)=(k+1)\phi_{k+1}(t),$$
first notice that $\frac{\partial^k }{\partial x^k}g(t,x)\vert_{x=0}=k!\phi_k(t)$ implies $ \frac{\partial^k }{\partial x^k}g_x(t,x)\vert_{x=0}=k!\widetilde{\phi}_k(t)$. The convolution follows naturally by applying the general Leibniz rule.

Lastly, the initial condition of \eqref{ode_t613} can be easily deduced from the fact that $g(0,x)=v(T)q(x).$

\end{proof}

\subsection{Putting everything together}
\label{section6.4}

\begin{proof}[Proof of Theorem \ref{thm1.2}]
Since $p$ is negligible to double factorial by assumption, and $g_0:[0,T]\to \R$, $g_1,g_2:[0,T]\to \mathbb C$ are continuously differentiable such that $\Re(g_1)=0, \Re(g_2) \leq 0$, then $F(t,x)$ in \eqref{eq:defF} is well-defined. We 
recall  from Lemma \ref{2.16} that
$$F(t,x)=\mathbb{E} \left[ \exp\left(\int_t^T \sum_{i=1}^3u_i(s)p_i(X_s)ds+v(T)q(X_T)-v(t)q(X_t)\right)  \Mid  X_t=x \right],$$
with $u_i,v$ are continuous in $[0,T]$, $ \Re(u_1)\leq0, \Re(u_2),\Re(u_3),\Re(v)= 0$, $p_1=p^2$, and $p_i,q$ are  negligible to double factorial. Notice that $F(t,x)=f(t,x)\exp(-v(t)q(x))$.  Theorem \ref{B.24} yields that  $F$ is continuous.

If in addition, $F$ does not vanish, then so does $f$. By Lemma \ref{log}, we can define $\log f, \log F$ such that $\log F=\log f -v(t)q(x)$. By Theorems \ref{B.24} and \ref{thmB.1}, $f$ is $C^\infty$ in $x$ and $C^1$ in $t$, and so is $F$. By Lemma \ref{C1Cinf}, $\log f, \log F $ are also $C^\infty$ in $x$ and $C^1$ in $t$. Then, we can take the $k^{th}$ partial derivative with respect to $x$ for $\log F=\log f -v(t)q(x)$ around $x=0$. Then, the system of ODE \eqref{ode_t613} in Theorem \ref{thmB.2} induces the system of ODE \eqref{eq:Ric_init} by setting $\psi_k:=\phi_k-v(T-t)q_k$ with $\phi_k$ defined in \eqref{ode_t613}, with the coefficients of the ODE \eqref{eq:Ric_init} coming from the precise definition of $(u_i, p_i)_{i\in \{1,2,3\}}$ as defined in Appendix \ref{sectionA}.
\end{proof}

{\section{Proof of Theorem \ref{T:Discrete}} \label{section7}

Before proving Theorem \ref{T:Discrete}, 
we introduce the following definition of the  function $f_n$:
\begin{definition}\label{deffn}

For $\mu_n=\frac{T}{n}\sum_{i=0}^{n-1} \delta_{\frac{iT}{n}}$,  with $\delta_{t}$ the Dirac measure at point $t$, we define
\begin{align}\label{eq:defsmallfn}
    f_n(t,x)&=\mathbb{E} \left[ \exp\left(\int_t^T \sum_{i=1}^3u_i(s)p_i(X_s)\mu_n(ds)+v(T)q(X_T)\right) \Mid  X_t=x \right]\\
    &=F_n(t,x)\exp(v(t)q(x)), \quad t\leq T, x\in \R,
\end{align}
with $F_n$ as defined in Definition~\ref{defFn}.
\end{definition}
By the Markov property of $X$, we can remove the conditioning and write $f_n(t,x)$ equivalently as
 \begin{align}
 f_n(t,x)&=\mathbb{E} \Bigg[ \exp\Bigg(\int_0^{T-t} \sum_{i=1}^3u_i(s+t)p_i(e^{bs}x+w(s)+\widetilde{W}_s)\mu_n(ds+t)\\&+v(T)q(e^{T-t}x+w(T-t)+\widetilde{W}_{T-t})\Bigg) \Bigg].\label{def_fn}
 \end{align}

Notice that  $X_t = e^{bt}x+w(t) +\widetilde{W}_t$ for $X_0 = x$, $w(t):=a\int_{0}^{t} e^{{b} (t-u)}du$ and ${\widetilde{{W}}_{t}}:=c\int_{0}^{t} e^{{b}(t-u)}  d W_{u}$.

The function $f_n$ is well-defined, since $\int_0^{T-t} \Re \left(u_1(s)p^2(e^{bs}x+w(s)+\widetilde{W}_s)\right)\mu(ds+t)\leq 0$ given the condition of $u_i, p_i, v$ and $q$ from Lemma \ref{2.16}.

The main advantage of considering functionals $(f_n, F_n)$ over $(f,F)$ is their {entire property} in $x$ shown in Theorem \ref{analytic} below. As we shall see, the results of $(f,F)$ in Section \ref{section6} can be easily extended to to $(f_n,F_n)$.  
We now layout the following steps to prove Theorem \ref{T:Discrete}:

\begin{itemize}
    \item In Subsection \ref{subsection7.1}, we prove that $f_n$ is entire {in $x \in \R$}, without any additional conditions imposed on $f_n$ or $F_n$.
\item In Subsection \ref{subsection7.2}, we prove discretized versions of the results in Section \ref{section6}, in the case when $f_n(t,\cdot)$ (or  equivalently $F_n(t,\cdot)$) does not vanish on $\mathbb R$.
\item In Subsection \ref{subsection7.3}, we prove Theorem \ref{T:Discrete} 
 under the assumption that the extension of $(F_n(t,\cdot))$ to the complex plane $(F_n(t,\cdot))_c$ does not vanish.
\end{itemize}

\begin{lemma}
\label{C.5}
 $f_n(t,x)\rightarrow f(t,x)$, when $n\rightarrow \infty$, for all $t\leq T$ and $x\in \mathbb R$. 
\end{lemma}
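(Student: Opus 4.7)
The plan is a straightforward application of the dominated convergence theorem, once we identify a pointwise limit of the integrands and a uniform bound. By the Markov property (as used to derive \eqref{def_fn}), we may write
\begin{align}
f_n(t,x) &= \mathbb E\bigl[\exp\bigl(I_n(\omega) + J(\omega)\bigr)\bigr], \\
f(t,x)   &= \mathbb E\bigl[\exp\bigl(I(\omega)   + J(\omega)\bigr)\bigr],
\end{align}
where $J(\omega):=v(T)q(e^{b(T-t)}x+w(T-t)+\widetilde W_{T-t}(\omega))$ is common to both, and the only difference lies between
\begin{align}
I(\omega)   &:= \int_0^{T-t} G(s,\omega)\,ds, \\
I_n(\omega) &:= \int_0^{T-t} G(s,\omega)\,\mu_n(ds+t),
\end{align}
with $G(s,\omega):=\sum_{i=1}^3 u_i(s+t)\,p_i\bigl(e^{bs}x+w(s)+\widetilde W_s(\omega)\bigr)$.

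First I would establish pointwise (a.s.) convergence $I_n(\omega)\to I(\omega)$. Fix $\omega$ outside the usual null set where $s\mapsto \widetilde W_s(\omega)$ is continuous on $[0,T-t]$. Since each $u_i$ is continuous by Lemma~\ref{2.16}, each $p_i$ is entire (hence continuous on $\R$), and $s\mapsto e^{bs}x+w(s)+\widetilde W_s(\omega)$ is continuous, the map $s\mapsto G(s,\omega)$ is continuous on the compact interval $[0,T-t]$. The measure $\mu_n$ is precisely the left-endpoint Riemann discretisation of Lebesgue measure on $[0,T]$ with step $T/n$, so restricted to $[t,T]$ the Riemann sum $I_n(\omega)$ converges to $I(\omega)$ as $n\to\infty$ by the standard convergence of Riemann sums of continuous functions. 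By continuity of $\exp$,
\begin{equation}
\exp(I_n(\omega)+J(\omega)) \;\xrightarrow[n\to\infty]{}\; \exp(I(\omega)+J(\omega)) \quad \text{a.s.}
\end{equation}

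Next I would produce the uniform bound. The hypotheses of Lemma~\ref{2.16} give $\Re(u_2)=\Re(u_3)=\Re(v)=0$ and $\Re(u_1)\leq 0$, with $p_1=p^2\geq 0$ on $\R$. Hence the real part of $I(\omega)+J(\omega)$ equals $\int_0^{T-t}\Re(u_1(s+t))\,p^2(\cdots)\,ds\leq 0$, and likewise for the discretised version, so
\begin{equation}
|\exp(I_n(\omega)+J(\omega))|\leq 1, \qquad |\exp(I(\omega)+J(\omega))|\leq 1,
\end{equation}
for every $n$ and a.e.~$\omega$. The constant function $1$ is integrable, so the dominated convergence theorem applies and yields $f_n(t,x)\to f(t,x)$, as claimed.

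The only subtlety worth checking carefully is the pointwise convergence of the left-Riemann sum $I_n(\omega)$ to $I(\omega)$: one must notice that the support of $\mu_n$ restricted to $[t,T]$ consists of the grid points $\{iT/n\}$ falling in $[t,T]$, with step $T/n\to 0$, and that continuous functions on a compact interval are Riemann-integrable with convergent left sums regardless of where the grid aligns relative to $t$; the possibly missing contribution from the interval $[t, \lceil nt/T\rceil T/n)$ has length at most $T/n$ and vanishes since $G(\cdot,\omega)$ is bounded on $[0,T-t]$. Everything else is a direct application of dominated convergence.
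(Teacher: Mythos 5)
Your proof is correct and follows essentially the same route as the paper's: the paper's argument is the one-liner ``this follows from a direct application of the bounded convergence theorem: the Riemann sum converges to the Riemann integral, and the exponential has a module at most $1$,'' and your write-up simply fills in the details (a.s.\ continuity of $s\mapsto G(s,\omega)$ from continuity of $u_i$, $p_i$, and the Brownian paths; the uniform bound $1$ from $\Re(u_1)\le 0$, $p_1=p^2\ge 0$, and the other real parts vanishing).
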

\begin{proof} This follows from      a direct application of the bounded convergence theorem: the Riemann sum converges to the Riemann integral, and the exponential has a module at most $ 1$.
\end{proof}

\subsection{Entire property}
\label{subsection7.1}

In this subsection, we do not need any additional condition imposed on $f_n$ or $F_n$ about their zeros.

\begin{theorem}
\label{analytic}
For fixed $t$, $f_n(t,\cdot)$ is entire in $x \in \mathbb R$.
\end{theorem}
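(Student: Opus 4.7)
The plan is to exploit the finite-dimensional Gaussian reduction coming from the fact that $\mu_n$ has finite support: via the Markov property as in \eqref{def_fn}, the randomness entering $f_n(t,x)$ reduces to finitely many jointly Gaussian random variables, namely the $\widetilde W_{s_j-t}$ for grid points $s_j = jT/n \in [t,T)$ together with $\widetilde W_{T-t}$. This turns the problem into a finite-dimensional Gaussian integral with explicit analytic dependence on $x$, to which standard analyticity-under-integration criteria apply.

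First I would separate any deterministic contribution coming from indices with $s_j-t=0$ (at most one such index, and only when $t = jT/n$), yielding a factor $e^{H(x)}$ with $H$ entire in $x$ since $p_i, q$ are entire. Collecting the remaining genuine random variables into a Gaussian vector $\mathbf G \in \mathbb R^N$ with positive-definite covariance $\Sigma$ (following from distinct strictly positive times in the Ornstein-Uhlenbeck structure) and density $\rho_\Sigma$, then performing the affine change of variable $\mathbf z = \mathbf G + \mathbf L(x)$ with $\mathbf L(x)$ having coordinates $e^{b(s_j-t)}x + w(s_j-t)$, I would obtain the representation
\begin{align}
f_n(t,x) = e^{H(x)} \int_{\mathbb R^N} e^{\Psi(\mathbf z)}\, \rho_\Sigma(\mathbf z - \mathbf L(x))\, d\mathbf z,
\end{align}
where $\Psi(\mathbf z) = \frac{T}{n}\sum_j\sum_i u_i(s_j) p_i(z_j) + v(T)q(z_T)$ is independent of $x$ and, by Lemma~\ref{2.16}, satisfies $\Re\Psi \leq 0$ on $\mathbb R^N$, so $|e^{\Psi(\mathbf z)}| \leq 1$.

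The crucial observation is that the integrand extends entirely to $x \in \mathbb C$ with a uniform integrable majorant on any compact $K \subset \mathbb C$. A direct separation of real and imaginary parts yields $|\rho_\Sigma(\mathbf z - \mathbf L(x))| = \rho_\Sigma(\mathbf z - \mathbf L(\Re x))\exp(\tfrac12(\Im x)^2\mathbf A^\top\Sigma^{-1}\mathbf A)$ with $\mathbf A$ the linear part of $\mathbf L$; for $x\in K$ this is bounded by $C_K\sup_{a\in\Re K}\rho_\Sigma(\mathbf z - \mathbf L(a))$, itself integrable on $\mathbb R^N$ (as the supremum of shifted Gaussians with means in a bounded set), and combined with $|e^\Psi|\leq 1$ gives the desired uniform majorant. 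Morera's theorem together with Fubini then closes the argument: for any closed triangle $\gamma\subset K$, the uniform majorant lets me interchange $\oint_\gamma$ with $\int_{\mathbb R^N}$, and the inner contour integral vanishes pointwise in $\mathbf z$ by Cauchy's theorem since the integrand is entire in $x$; hence the integral is entire on $\mathbb C$, multiplication by the entire prefactor $e^{H(x)}$ preserves this, and restricting to $\mathbb R$ yields that $f_n(t,\cdot)$ is (real)-entire.

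The main obstacle I anticipate is the careful bookkeeping needed to strip the deterministic components so that the residual $\Sigma$ is positive-definite, together with trivial boundary cases ($t = T$, where $f_n(T,x) = e^{v(T)q(x)}$ is entire directly, or $t \in \{iT/n\}$, where $\widetilde W_0 = 0$ contributes a known deterministic factor). The conceptual core of the argument is the uniform Gaussian majorant, which is available precisely because $\mu_n$ has finite support---this is why entirety works easily for $f_n$ but is more delicate for $f$.
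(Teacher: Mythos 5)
Your proposal is correct and establishes the same conclusion, but by a genuinely different route from the paper. The paper's proof passes through Lemma~\ref{C.6}: it exploits the independent-increment decomposition of $e^{-bt}\widetilde W_t = c\int_0^t e^{-bu}dW_u$ to write the joint Gaussian density as a product in which, after the shift $y_1 \mapsto y_1 - x$, the entire $x$-dependence is concentrated in a single one-dimensional Gaussian factor $\exp(-(y_1-x)^2/(2v_1))$. It then factors out $\exp(-x^2/(2v_1))$, expands $\exp(xy_1/v_1)$ as a power series, and invokes Fubini to obtain an explicit power series representation in $x$ with infinite radius of convergence, i.e.\ real-entirety. Your argument skips the increment decomposition entirely and works with the full joint covariance $\Sigma$ of the Gaussian vector: after the shift $\mathbf z = \mathbf G + \mathbf L(x)$ you isolate a factor $|\rho_\Sigma(\mathbf z - \mathbf L(x))| = \rho_\Sigma(\mathbf z - \mathbf L(\Re x))\exp(\tfrac12(\Im x)^2 \mathbf A^\top\Sigma^{-1}\mathbf A)$, which combined with $|e^\Psi|\leq 1$ gives a uniform Gaussian majorant on compact $K\subset\mathbb C$, and then Morera plus Fubini gives complex-holomorphy directly. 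Both hinge on the same two facts — finite support of $\mu_n$ and non-degeneracy of the OU increments (which is exactly what makes $\Sigma$ positive-definite) — so neither is strictly more general; but your Morera/dominated-convergence route is the standard parametric-integral argument and delivers complex-entirety without detouring through the coefficient-wise Fubini estimate, while the paper's route yields an explicit series for the coefficients at the cost of the one-dimensional bookkeeping with increments. Your attention to the degenerate case $t\in\{jT/n\}$ (where $\widetilde W_0=0$ contributes a deterministic entire prefactor) and $t=T$ is exactly the bookkeeping the paper leaves implicit, and your verification that $\Re\Psi\le0$ uses Lemma~\ref{2.16} in the same way the paper does to bound the integrand by $1$.
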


We first introduce a lemma that will be useful in proving Theorem \ref{analytic}. The idea of the proof is standard and comes from exploiting the fact that the  Gaussian density is entire. A similar argument has been used for single marginals  in \cite[Lemma 6.1]{cuchiero2023signature}.

\begin{lemma}
\label{C.6}
Suppose that $(W_t)_{t\geq 0}$ is a centered Gaussian Process with independent increments such that $W_t-W_s$ has non-zero variance when $t>s$. Let $0=t_0<t_1<\ldots<t_n=T$. Take $g$ a bounded measurable function from $\R^{n}$ to $\mathbb{C}$, then $\mathbb{E}\left[g(x+W_{t_1},x+W_{t_2},\ldots,x+W_{t_n})\right]$ is entire in $x$.
\end{lemma}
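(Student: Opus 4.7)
The plan is to reduce the expectation to a Lebesgue integral against a (non-degenerate) Gaussian density, absorb the parameter $x$ into the density, and then expand that piece of the integrand as a power series in $x$ before swapping sum and integral.

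First I would observe that, because $(W_{t_1},\ldots,W_{t_n})$ is a linear image under a lower-triangular matrix with non-zero diagonal entries of the increments $(W_{t_1},W_{t_2}-W_{t_1},\ldots,W_{t_n}-W_{t_{n-1}})$, and each increment is a centered Gaussian with strictly positive variance by assumption, the covariance matrix $\Sigma$ of $(W_{t_1},\ldots,W_{t_n})$ is strictly positive definite. Denote its density by $\rho$, so that after the change of variable $z_i = x + y_i$,
\begin{align}
\mathbb{E}\!\left[g(x+W_{t_1},\ldots,x+W_{t_n})\right] = \int_{\mathbb{R}^n} g(z)\,\rho(z - x\mathbf{1})\,dz,
\end{align}
where $\mathbf{1}=(1,\ldots,1)^\top$. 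Expanding the quadratic form $(z-x\mathbf{1})^\top \Sigma^{-1}(z-x\mathbf{1})$, one gets
\begin{align}
\rho(z - x\mathbf{1}) = \rho(z)\exp\!\left(xA(z) + x^2 B\right),\quad A(z):=\mathbf{1}^\top \Sigma^{-1}z,\quad B:=-\tfrac{1}{2}\mathbf{1}^\top\Sigma^{-1}\mathbf{1}<0.
\end{align}

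Next I would write $\exp(xA(z)+x^2 B) = \sum_{k\ge 0} c_k(z)\,x^k$, where the coefficients $c_k(z)$ are polynomials in $A(z)$ and $B$ obtained from expanding both exponentials and collecting powers of $x$. A direct computation shows the pointwise bound
\begin{align}
\sum_{k\ge 0}|c_k(z)|\,|x|^k \le \exp\!\left(|x|\,|A(z)|+x^2|B|\right).
\end{align}
Since $g$ is bounded and $\rho(z)$ has Gaussian decay in $z$ which dominates the linear growth of $|A(z)|$, we have for every fixed $x\in\mathbb{R}$,
\begin{align}
\int_{\mathbb{R}^n}|g(z)|\,\rho(z)\exp\!\left(|x||A(z)|+x^2|B|\right)dz < \infty.
\end{align}
Hence Fubini--Tonelli applies and
\begin{align}
\mathbb{E}\!\left[g(x+W_{t_1},\ldots,x+W_{t_n})\right]
= \sum_{k\ge 0} a_k\,x^k,\qquad a_k := \int_{\mathbb{R}^n} g(z)\,\rho(z)\,c_k(z)\,dz.
\end{align}
The same dominating bound, applied with $|x|$ replaced by any $R>0$, shows that $\sum_k |a_k| R^k <\infty$ for all $R>0$, so the power series has infinite radius of convergence. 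By the paper's convention this means the map is real-entire in $x$, which extends to a complex-entire function as noted after Definition of entire functions.

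The only step requiring care is the justification of Fubini, i.e.\ the uniform-in-$z$ bound on $\sum_k|c_k(z)||x|^k$ and the integrability of $\rho(z)\exp(|x||A(z)|)$ against the bounded $g$; this is where non-degeneracy of $\Sigma$ is used, since it guarantees that $\rho$ has genuine Gaussian decay in every direction of $\mathbb{R}^n$. Everything else is bookkeeping.
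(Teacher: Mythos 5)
Your proof is correct, and it follows essentially the same strategy as the paper: express the expectation as an integral of $g$ against a Gaussian density, translate the parameter $x$ into the density, factor off the $x$-dependent part of the exponent, expand it as a power series, and justify the interchange of sum and integral by a dominating Gaussian bound. The only difference is cosmetic: the paper first changes variables to the increments $(W_{t_1}, W_{t_2}-W_{t_1},\ldots)$, so that $x$ appears only through the single scalar $y_1$ and the quantity to expand is just $\exp(x y_1/v_1)$ multiplied by the entire prefactor $\exp(-x^2/(2v_1))$; you instead keep the original coordinates, expand the full quadratic form $(z-x\mathbf 1)^\top\Sigma^{-1}(z-x\mathbf 1)$, and absorb both the linear and quadratic terms in $x$ into the series coefficients $c_k(z)$. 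Both routes rely on the non-degeneracy of the increments for the density to exist and decay, and both Fubini justifications are identical in substance; yours is slightly heavier on notation but requires no extra ideas.
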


\begin{proof} Denote the variance of $W_{t_{i}}-W_{t_{i-1}}$ as $v_i>0$, we have
\begin{align}
&\mathbb{E}\left[g(x+W_{t_1},x+W_{t_2},\ldots,x+W_{t_n})\right]\\
=& \Big(\prod_{i=1}^{n}\frac{1}{\sqrt{2\pi v_i}}\Big) \int_{\mathbb{R}^n} g(y_1,y_1+y_2,\ldots,y_1+\ldots+y_n) \exp \left(-\frac{\left(y_1-x\right)^{2}}{2 v_1}-\sum_{i=2}^n \frac{y_i^2}{2v_i}\right) dy_1dy_2\ldots dy_n.
\end{align}

Denote $g(y)=g(y_1,y_1+y_2,\ldots,y_1+\ldots+y_n)$, we have
\begin{align}
&\mathbb{E}\left[g(x+W_{t_1},x+W_{t_2},\ldots,x+W_{t_n})\right]\\
 =&\exp \left(\frac{-x^{2}}{2 v_1}\right) \Big(\prod_{i=1}^{n}\frac{1}{\sqrt{2\pi v_i}} \Big) \int_{\mathbb{R}^n} g(y) \exp \left(-\sum_{i=1}^n \frac{y_i^2}{2v_i} \right) \exp \left(\frac{x y_1}{v_1}\right) dy_1 dy_2 \ldots dy_n\\
=&\exp \left(\frac{-x^{2}}{2 v_1}\right) \Big(\prod_{i=1}^{n}\frac{1}{\sqrt{2\pi v_i}}\Big) \int_{\mathbb{R}^n}  g(y) \exp \left(-\sum_{i=1}^n \frac{y_i^2}{2v_i}\right) \sum_{k=0}^{\infty} \frac{1}{k !}\left(\frac{x y_1}{v_1}\right)^{k} dy_1dy_2\ldots dy_n\\
=&\exp \left(\frac{-x^{2}}{2 v_1}\right) \Big(\prod_{i=1}^{n}\frac{1}{\sqrt{2\pi v_i}}\Big) \sum_{k=0}^{\infty} \frac{1}{k !}\left(\int_{\mathbb{R}^n}  g(y) \exp \left(-\sum_{i=1}^n \frac{y_i^2}{2v_i}\right)\left(\frac{y_1}{v_1}\right)^{k} dy_1 dy_2 \ldots dy_n\right) x^{k},
\end{align}

where we applied Fubini in the last step since
$$
\begin{aligned}
&\int_{\mathbb{R}^n} \sum_{k=0}^{\infty} \left \vert g(y) \right \vert \exp \left(-\sum_{i=1}^n \frac{y_i^2}{2v_i}\right) \frac{1}{k !}\left(\frac{x y_1}{v_1}\right)^{k} dy_1dy_2\ldots dy_n \\ \leq &||g||_\infty\int_{\mathbb{R}^n} \exp \left(-\sum_{i=1}^n \frac{y_i^2}{2v_i}+\frac{\left|x y_1\right|}{v_1}\right) dy_1dy_2\ldots dy_n<\infty.
\end{aligned}
$$
This implies an infinite radius of convergence for
$$ {\exp \left(\frac{-x^{2}}{2 v_1}\right)} \Big(\prod_{i=1}^{n}\frac{1}{\sqrt{2\pi v_i}}\Big)  \sum_{k=0}^{\infty} \frac{1}{k !}\left(\int_{\mathbb{R}^n}g(y) \exp \left(-\sum_{i=1}^n \frac{y_i^2}{2v_i}\right)\left(\frac{y_1}{v_1}\right)^{k} dy_1dy_2\ldots dy_n\right) x^{k}
$$
i.e. $\mathbb{E}\left[g(x+W_{t_1},x+W_{t_2},\ldots,x+W_{t_n})\right]$ is equal to a power series with an infinite radius of convergence, so it is entire in $x\in \R$.

\end{proof}

\begin{proof}[Proof of Theorem \ref{analytic}]
We define  $t_j=\frac{jT}{n},j=0,1,\ldots,n-1$, then
\begin{align}
    f_n(t,x)&=\mathbb{E} \Bigg[\exp\Bigg(\sum_{i=1}^3\sum_{t\leq t_j\leq T}u_i(t_j)p_i(e^{b(t_j-t)}x+w(t_j-t)+\widetilde W_{t_j-t})\\
    &+v(T)q(e^{b(T-t)}x+w(T-t)+\widetilde W_{T-t})\Bigg) \Bigg].
 \end{align}

The exponential is bounded by 1. And notice that $e^{b(t_j-t)}x+\widetilde W_{t_j-t}=e^{b(t_j-t)}(x+e^{-b(t_j-t)}\widetilde W_{t_j-t})$. Recall that  ${\widetilde{{W}}_{t}}=\int_{0}^{t} e^{{b}(t-u)} c d W_{u}$. Therefore, $e^{-bt}{\widetilde{{W}}_{t}}=c\int_{0}^{t} e^{-bu} d W_{u}$ is a Gaussian Process satisfying the condition of Lemma \ref{C.6}, since $c\neq 0$. Then, an application of Lemma \ref{C.6} yields the result.\end{proof}

\subsection{Discretized versions of the theorems in Section \ref{section6}}
\label{subsection7.2}

In this subsection, we prove discretized versions of the results in Section \ref{section6}, in the case when $f_n(t,\cdot)$ (or  equivalently $F_n(t,\cdot)$) does not vanish on $\mathbb R$.

\begin{remark}
Recalling Corollary \ref{Cor:Riccati}, the  conditions that $g_1=0$ and  $g_2$ $\R$-valued with $g_2\leq 0$ guarantee that  $f_n(t,\cdot)$ and   $F_n(t,\cdot)$ do not vanish on $\mathbb R$.    
\end{remark}

\begin{theorem}\label{lC.5} (Discretized version of Theorem \ref{B.24}) The function $f_n$ defined in \eqref{def_fn} is $C^\infty$ in $x$ with $\frac{\partial^kf_n}{\partial x^k}$ continuous in $(t,x)$ in the region $(\frac{jT}{n},\frac{(j+1)T}{n})\times \R$ for $j=0,1,\ldots,n-1$. In addition, there exists a power series $q_k$ negligible to double factorial and  constant $C_k$ independent of $(t,x)$, such that $|\frac{\partial^kf_n}{\partial x^k}|\leq |q_k|(|x|)+C_k$.
\end{theorem}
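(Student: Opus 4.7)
The plan is to mirror the proof of Theorem~\ref{B.24} almost verbatim, replacing the Riemann integral over $[t,T]$ by the finite weighted sum $\frac{T}{n}\sum_{t_j\in[t,T]}$ coming from the atomic measure $\mu_n$. Writing $f_n(t,x)=\mathbb{E}[Z_n(t,x)]$ with
\begin{align}
Z_n(t,x):=\exp\!\Bigl(\tfrac{T}{n}\!\!\sum_{\substack{0\le j\le n-1\\ t_j\in[t,T]}}\!\!\sum_{i=1}^3 u_i(t_j)\,p_i\bigl(e^{b(t_j-t)}x+w(t_j-t)+\widetilde W_{t_j-t}\bigr)+v(T)q\bigl(e^{b(T-t)}x+w(T-t)+\widetilde W_{T-t}\bigr)\Bigr),
\end{align}
I would formally differentiate inside the expectation and define, as in \eqref{eq:gk}, a recursive family $H_{n,k+1}=\partial_x H_{n,k}+H_{n,k}h_n$, $H_{n,0}=1$, where
\begin{align}
h_n(t,x)=\tfrac{T}{n}\!\!\sum_{\substack{t_j\in[t,T]}}\!\sum_{i=1}^3 u_i(t_j)e^{b(t_j-t)}p_i'(\cdots)+v(T)e^{b(T-t)}q'(\cdots).
\end{align}
The aim is to show by induction that $\partial_x^k f_n(t,x)=\mathbb{E}[H_{n,k}(t,x)Z_n(t,x)]$ with the requisite bound.

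The key technical input is the analogue of Lemma~\ref{B.23}: for each $k$, $H_{n,k}\in\mathcal{A}_x(h_n)$ is estimable in the sense of Definition~\ref{estime}. This follows because each generating element $\partial_x^k h_n$ is itself a finite linear combination of terms of the form $q(e^{bs}x+w(s)+\widetilde W_s)$ for power series $q$ negligible to double factorial (Lemma~\ref{lemB.12}$(ii)$ applied to $p_i$ and $q$), and Proposition~\ref{B.19} shows each such term is estimable. Since estimable families are closed under finite sums and products (Lemma~\ref{B.15}), the whole algebra $\mathcal{A}_x(h_n)$ consists of estimable families. Note the finite sum is actually easier to handle than the Riemann integral of Theorem~\ref{B.24}: there is no need to verify integrability in $s$. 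Combined with $|Z_n|\le 1$ (guaranteed by $\Re(u_1)\le 0$, $\Re(u_2)=\Re(u_3)=\Re(v)=0$ from Lemma~\ref{2.16}), the mean value theorem plus dominated convergence pushes $\partial_x$ inside the expectation at each induction step, yielding $\partial_x^k f_n(t,x)=\mathbb{E}[H_{n,k}(t,x)Z_n(t,x)]$ and the bound $|\partial_x^k f_n(t,x)|\le |q_k|(|x|)+C_k$ via Lemma~\ref{B.15}.

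For the joint continuity in $(t,x)$ on the strips $(\tfrac{jT}{n},\tfrac{(j+1)T}{n})\times\mathbb{R}$: on each such strip the set $\{j':t_{j'}\in[t,T]\}$ is constant in $t$, so $h_n(t,x)$ and every element of $\mathcal{A}_x(h_n)$ depends continuously on $(t,x)$ for a.e.\ $\omega$; another application of dominated convergence (using the same estimable domination, uniform in $(t,x)$ over a compact subset of the strip with $|x|\le R$) delivers joint continuity of $\partial_x^k f_n$. The restriction to open strips is sharp: at $t=\tfrac{jT}{n}$ the measure $\mu_n$ picks up or loses an atom depending on the direction of approach, which introduces a jump that the estimable framework cannot smooth out.

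There is no real obstacle; the only subtlety is book-keeping the domain restriction to the open strips, which is handled as above. The rest is a mechanical transcription of the proof of Theorem~\ref{B.24} with integrals replaced by finite sums.
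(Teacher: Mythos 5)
Your proof is correct and follows exactly the approach the paper intends: the paper's own proof is a one-line remark that the arguments of Theorem~\ref{B.24} carry over with integrals replaced by finite sums, and your proposal is the faithful, fully worked-out version of that transcription, including the correct identification of why joint continuity only holds on the open strips $(\tfrac{jT}{n},\tfrac{(j+1)T}{n})\times\R$ (the set of atoms of $\mu_n$ in $[t,T]$ changes as $t$ crosses a grid point).
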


\begin{proof}
Notice that all the lemmas and proofs used in Section \ref{section6} dealing with $f$ being $C^{\infty}$ in $x$ and the bound of $\frac{\partial^kf_n}{\partial x^k}$ can be easily adopted to $f_n$ by replacing the integral with a finite sum.
\end{proof} 

\begin{theorem}
\label{C.2} (Discretized version of Theorem \ref{thmB.1}) For $ j=0,1,\ldots,n-1$,  the function $f_n$ 
\begin{enumerate}[(i)]
    \item is continuous in the region $(\frac{jT}{n},\frac{(j+1)T}{n}]\times \R$. In addition, $f_n$ is $C^1$ in $t$ and $C^\infty$ in $x$ in the  region $(\frac{jT}{n},\frac{(j+1)T}{n})\times \R$, $j = 0,1,\ldots,n-1$, 
 \item 
 solves  the following PDE
\begin{align}
&f_t+f_x(a+bx)+\frac{1}{2}c^2f_{xx}=0,  
\quad  t \notin \mathrm{supp}(\mu_n){\cup\{T\}},
 \\ &\lim_{s\rightarrow t^+}f(s,x)=f(t,x)\exp\left(-\frac{T}{n}\sum_{i=1}^3u_i(t)p_i(x)\right),\quad  t\in \mathrm{supp}(\mu_n),
 \\ &
 f(T,x)=\exp(v(T)q(x)),
\end{align}
with $\mathrm{supp}(\mu_n) =\{0,\frac{T}{n},\frac{2T}{n},\ldots,\frac{(n-1)T}{n}\}$.
\end{enumerate}
 \end{theorem}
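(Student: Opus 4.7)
The plan is to adapt the proofs of Theorems \ref{B.24} and \ref{thmB.1} to the discretized setting, exploiting the Markov property of $X$ and the observation that on each open strip $(jT/n,(j+1)T/n)\times\R$ the measure $\mu_n$ carries no mass. For joint continuity on the half-open strip $(jT/n,(j+1)T/n]\times\R$, the key remark is that the set of atoms of $\mu_n$ in $[s,T]$ is the fixed finite set $\Lambda_j:=\{kT/n:j+1\leq k\leq n-1\}$ for every $s$ in this strip. Writing $X^{t,x}_u=e^{b(u-t)}x+a\int_t^u e^{b(u-r)}dr+c\int_t^u e^{b(u-r)}dW_r$, the random vector $(X^{t,x}_{t_1},\ldots,X^{t,x}_{t_r},X^{t,x}_T)$ indexed by $t_i\in\Lambda_j\cup\{T\}$ is jointly continuous in $(t,x)$ in $L^2$, and since the integrand in the definition of $f_n$ has modulus at most $1$, bounded convergence yields joint continuity on the half-open strip. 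The $C^\infty$-in-$x$ regularity on each open strip is already the content of Theorem \ref{lC.5}.

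Next, for the PDE and the $C^1$-in-$t$ regularity on the open strips, fix $(t,x)\in(jT/n,(j+1)T/n)\times\R$ and $h>0$ small enough that $t+h$ still lies in that strip. Since $\mu_n$ has no atoms in $(t,t+h]$, the Markov property yields $f_n(t,x)=\mathbb E[f_n(t+h,X^{t,x}_{t+h})]$. Applying It\^o's formula to $f_n(t+h,\cdot)$ along $X^{t,x}$---justified by the polynomial bounds on $f_{n,x}$ and $f_{n,xx}$ from Theorem \ref{lC.5} combined with Proposition \ref{B.19}---and taking expectation gives
$$\frac{f_n(t,x)-f_n(t+h,x)}{h}=\frac{1}{h}\mathbb E\left[\int_t^{t+h}\Bigl((a+bX^{t,x}_s)f_{n,x}(t+h,X^{t,x}_s)+\frac{c^2}{2}f_{n,xx}(t+h,X^{t,x}_s)\Bigr)ds\right].$$
Dominated convergence, via those same polynomial bounds together with the joint continuity established above, sends the right-hand side to $(a+bx)f_{n,x}(t,x)+\frac{c^2}{2}f_{n,xx}(t,x)$ as $h\to 0^+$; the symmetric argument (using $f_n(t+h,x)=\mathbb E[f_n(t,X^{t+h,x}_t)]$ for $h<0$) yields the same limit, so $f_{n,t}(t,x)$ exists and the stated PDE holds. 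Continuity of $f_{n,t}$ is then inherited from that of the spatial derivatives.

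Finally, for the jump relation at an atom $t=kT/n\in\mathrm{supp}(\mu_n)$, split out the atom in the definition of $f_n$ to write $f_n(t,x)=\exp\bigl(\tfrac{T}{n}\sum_i u_i(t)p_i(x)\bigr)g_n(t,x)$ with $g_n(t,x):=\mathbb E[\exp(\int_{(t,T]}\sum_i u_i(s)p_i(X_s)\mu_n(ds)+v(T)q(X_T))\mid X_t=x]$. For $s\in(kT/n,(k+1)T/n)$, the set of atoms of $\mu_n$ in $[s,T]$ coincides with $\{(k+1)T/n,\ldots,(n-1)T/n\}$, which is exactly the set of masses underlying $g_n(t,\cdot)$; the same continuity argument as in the first paragraph, applied to $g_n$, therefore gives $\lim_{s\to t^+}f_n(s,x)=g_n(t,x)$, which is the claimed jump relation. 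The terminal condition $f_n(T,x)=\exp(v(T)q(x))$ is immediate since $T\notin\mathrm{supp}(\mu_n)$. The main technical obstacle is the rigorous justification of the It\^o step and limit interchanges in the second paragraph, but these reduce to the polynomial-bound machinery already developed in Section \ref{section6}; in fact the discretized setting is \emph{cleaner} than the continuous case, because on each open strip the backward equation loses its potential term $\sum_i u_i p_i f$ and reduces to the plain Kolmogorov operator associated with the OU process $X$.
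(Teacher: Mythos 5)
Your proposal is correct and follows essentially the same route as the paper: the Markov property combined with It\^o's formula plus the polynomial-bound machinery from Theorem~\ref{lC.5} and Proposition~\ref{B.19}, together with the observation that $\mu_n$ has no atoms on each open strip so the potential term drops out of the backward equation. The paper phrases the time-derivative step as a two-way computation of the infinitesimal generator (as in Theorem~\ref{thmB.1}), whereas you invoke the identity $f_n(t,x)=\E[f_n(t+h,X^{t,x}_{t+h})]$ directly; these are the same mechanism, and your treatment of the half-open-strip continuity and the jump relation is a slightly more explicit write-up of what the paper leaves to ``the definition of $\mu_n$.''
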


 
\begin{proof}
The proof follows along the same lines as the proof of Theorem \ref{thmB.1}. By Theorem \ref{lC.5} and  Lemma \ref{lemB.12},  $(f_n)_x(t,x)$ and $(f_n)^2_x(t,x)$ are bounded by $|q|(|x|)+C$, where $q$ is  negligible to double factorial. Therefore by Proposition \ref{B.19}, the stochastic integral $\int_0^{\cdot}(f_n)_x(s,x)dW_s\in L^2$ and is a true martingale. The Riemann integral is still dominated, allowing the interchange between limit and the expectation and thus computing the the infinitesimal generator $\lim_{r\rightarrow 0^+}\frac{1}{r}(\mathbb{E}^{x}
\left[f_n(t_0,X_r)\right]-f_n(t_0,x))$ as per the first way in Theorem \ref{thmB.1}.

The second way of computing the generator in Theorem \ref{thmB.1} can also be adopted to compute
$\lim_{r\rightarrow 0^+}\frac{1}{r}(\mathbb{E}^{x}
\left[f_n(t_0,X_r)\right]-f_n(t_0,x))$ thanks to the Markov property of $X_t$. Indeed, define
$$Z^n_t=\exp\left( \int_{0}^{t} \sum_{i=1}^3u_i(T-t+s)p_i\left(X_{s}\right) \mu_n(T-t+ds)\right),\quad Y_t=v(T)q(X_{t}).$$

we have
\begin{align}
   &\frac{1}{r}\left(\mathbb{E}^x\left[f_n\left(t_0, X_{r}\right)\right]-f_n(t_0, x)\right)\\
   =&\frac{1}{r} \mathbb{E}^x\left[Z^n_{T-t_0+r} \cdot \exp \left(-\int_{0}^{r}\sum_{i=1}^3 u_i(t_0-r+s)p_i\left(X_{s}\right) \mu_n(t_0-r+ds)\right) \exp\left(Y_{T-t_0+r}\right)-Z^n_{T-t_0} \exp(Y_{T-t_0})\right]\\
   =&\frac{1}{r} \mathbb{E}^x\left[\exp\left(Y_{T-t_0+r}\right) Z^n_{T-t_0+r}-\exp\left(Y_{T-t_0}\right) Z^n_{T-t_0}\right]\\+&\frac{1}{r} \mathbb{E}^x\left[\exp\left(Y_{T-t_0+r}\right) Z^n_{T-t_0+r} \cdot\left(\exp \left(-\int_{0}^{r}\sum_{i=1}^3 u_i(t_0-r+s)p_i\left(X_{s}\right) \mu_n(t_0-r+ds)\right)-1\right)\right]\\
   =&\frac{1}{r}(f(t_0-r,x)-f(t_0,x))\\
   +&\frac{1}{r} \mathbb{E}^x\left[\exp\left(Y_{T-t_0+r}\right) Z^n_{T-t_0+r} \cdot\left(\exp \left(-\int_{0}^{r} \sum_{i=1}^3 u_i(t_0-r+s)p_i\left(X_{s}\right) \mu_n(t_0-r+ds)\right)-1\right)\right].
\end{align}

Notice that, if $t_0\notin \mathrm{supp}(\mu_n)\cup \{T\}$, where $\mathrm{supp}(\mu_n)\cup \{T\}$ is a finite set, then there exists $r_0>0$ such that $r_0<|t_0-t|$ for all $t\in \mathrm{supp}(\mu_n)\cup \{T\}$, thus for $0<r<r_0$,

$ \mathbb{E}^x\left[\frac{1}{r}\exp\left(Y_{T-t_0+r}\right) Z^n_{T-t_0+r} \cdot\left(\exp \left(-\int_{0}^{r} \sum_{i=1}^3 u_i(t_0-r+s)p_i\left(X_{s}\right)  \mu_n (t_0-r+ds)\right)-1\right)\right]=0.$

So similar the proof of Theorem \ref{thmB.1}, $(f_n)_t$ exists and $0=(f_n)_t+(f_n)_x(a+bx)+\frac{1}{2}c^2(f_n)_{xx}$. The boundary condition when $t \notin \mathrm{supp}(\mu_n) \cup\{T\}$ can be deduced by applying the definition of $\mu_n$. 

\end{proof}
 
\begin{remark}
Theorem \ref{C.2} shows that for fixed $x$, $f_n(t,x)$ is left continuous for $t \in [0,T]$ and is discontinuous at $t\in \mathrm{supp}(\mu_n)$.
\end{remark}

We now define $\log f_n$ in the following Lemma \ref{logfn}.

\begin{lemma}
\label{logfn}
If $f_n(t,\cdot)$ does not vanish for $x \in \mathbb R$, then there exists ${\log f}_{{n}}$ such that $\exp(\log f_n)=f_n$, and in the regions $(\frac{jT}{n},\frac{(j+1)T}{n}]\times \R,j=0,1,\ldots,n-1$,
$\log f_n(t,x)$ is continuous in $(t,x)$, and
\begin{align}
 \lim_{s\rightarrow t^+}\log f_n(s,x)&=\log f_n(t,x)-\frac{T}{n}\sum_{i=1}^3u_i(t)p_i(x), \quad t\in \mathrm{supp}(\mu_n), 
 \\
 \log f_n(T,x)&=v(T)q(x).
\end{align}

\end{lemma}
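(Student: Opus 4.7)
The plan is to construct $\log f_n$ piecewise on the $n$ half-open strips $S_j := (jT/n, (j+1)T/n] \times \R$, working backwards from $j = n-1$ down to $j = 0$. The only non-trivial ingredient is the following lifting fact, formalised via Definition \ref{defB.2} in Appendix \ref{A:complexlog}: every continuous non-vanishing $\mathbb{C}$-valued function on a simply connected subset of $\R^2$ admits a continuous $\mathbb{C}$-valued logarithm, unique once its value is pinned at one point. I will apply this on each closed strip $[jT/n, (j+1)T/n] \times \R$ (simply connected) after first extending $f_n$ continuously to the left endpoint via its right-limit, and insert the prescribed discontinuity of $\log f_n$ at each point of $\mathrm{supp}(\mu_n)$ by hand.

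For the base case on $S_{n-1}$, Theorem \ref{C.2} ensures that $f_n$ is continuous and non-vanishing on $S_{n-1}$ and admits a continuous non-vanishing extension to $[(n-1)T/n, T] \times \R$ via the right-limit formula. Lifting this extension while pinning its value at $(T, 0)$ to $v(T) q(0)$ produces a continuous $\overline{\log f}_n$; since $f_n(T, \cdot) = \exp(v(T) q(\cdot))$ by the terminal condition, the continuous $2\pi i \mathbb{Z}$-valued map $x \mapsto \overline{\log f}_n(T, x) - v(T) q(x)$ is constant and equals zero at $x = 0$, giving $\overline{\log f}_n(T, x) = v(T) q(x)$. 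Set $\log f_n := \overline{\log f}_n$ on $S_{n-1}$, and at the jump point $t = (n-1)T/n$ define
\[
\log f_n\!\left(\tfrac{(n-1)T}{n}, x\right) := \overline{\log f}_n\!\left(\tfrac{(n-1)T}{n}, x\right) + \tfrac{T}{n} \sum_{i=1}^{3} u_i\!\left(\tfrac{(n-1)T}{n}\right) p_i(x);
\]
exponentiating and using the multiplicative jump of $f_n$ from Theorem \ref{C.2}(ii) verifies both $\exp(\log f_n) = f_n$ at this point and the prescribed jump relation. Iterating the same construction on $S_{n-2}, S_{n-3}, \dots, S_0$, pinning each strip's lift at a single point on its right edge to match the value of $\log f_n$ already defined there, yields $\log f_n$ on all of $[0,T] \times \R$.

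The main technical hurdle is the covering-space lifting itself: one needs that a simply connected subset of $\R^2$ supports a continuous logarithm of any continuous non-vanishing complex-valued function, unique up to an additive element of $2\pi i \mathbb{Z}$ fixed by the initial point. I expect this to be exactly what is developed in Appendix \ref{A:complexlog}. Once it is available, the iteration is bookkeeping, and matching the additive jump of $\log f_n$ with the multiplicative jump of $f_n$ follows immediately from Theorem \ref{C.2}(ii).
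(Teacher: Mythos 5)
Your proposal is correct, but it takes a genuinely different route from the paper's. The paper does a single \emph{global} lift: it defines
\[
\tilde{f}_n(t,x)=f_n(t,x)\exp\Big(-\tfrac{T}{n}\sum_{\substack{s\in \mathrm{supp}(\mu_n) \\ s\geq t}}\sum_{i=1}^3u_i(s)p_i(x)\Big),
\]
observes that the step-function correcting factor exactly cancels the multiplicative jumps of $f_n$ (via Theorem \ref{C.2}(ii)), so that $\tilde{f}_n$ is continuous and non-vanishing on all of $[0,T]\times\R$, lifts $\tilde{f}_n$ once by Lemma \ref{log} with the value pinned at $(T,0)$, and then defines $\log f_n := \tilde g + \tfrac{T}{n}\sum_{s\geq t}\sum_i u_i(s)p_i(x)$. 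The jump relation, strip-wise continuity, and terminal condition all fall out by inspection of this single formula. Your approach instead lifts strip by strip on $[jT/n,(j+1)T/n]\times\R$, pinning each lift at a right-edge point and gluing; this is perfectly valid, but it requires one to repeat on every interior edge the argument you only make explicit for the terminal edge — namely that two continuous logarithms of the same non-vanishing function on a connected set, agreeing at one point, agree everywhere because their difference is a continuous $2\pi i\mathbb{Z}$-valued function. The paper's construction avoids that $n$-fold gluing entirely and is therefore a bit cleaner; yours makes the role of simply connectedness of each individual strip more transparent and is more mechanical to verify. Both approaches equally rely on the (implicit, and not detailed in either argument) fact that the right-limit in Theorem \ref{C.2}(ii) is a \emph{joint} limit in $(t,x)$, so that the extension of $f_n$ (resp.\ the corrected $\tilde f_n$) across each mass point of $\mu_n$ is jointly continuous; this is the one place where you could tighten the write-up, though you are at the same level of rigor as the paper itself.
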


\begin{proof}
    
We want to use Lemma \ref{log} to define $\log f_n$ from $f_n$. However, Theorem \ref{C.2} shows that $f_n$ is discontinuous when $t\in \mathrm{supp}(\mu_n)$, with
$$\lim_{s\rightarrow t^+}f_n(s,x)=f_n(t,x)\exp\left(-\frac{T}{n}\sum_{i=1}^3u_i(t)p_i(x)\right), t\in \mathrm{supp}(\mu_n),$$
we need to first define 
\[
\displaystyle  
\tilde{f}_n(t,x)=f_n(t,x)\exp\left(-\frac{T}{n}\sum_{\substack{s\in \mathrm{supp}(\mu_n) \\ s\geq t}}\sum_{i=1}^3u_i(s)p_i(x)\right),
\]
where $\tilde{f}_n$ is continuous on $[0,T]\times \R$ and $\tilde{f}_n(T,x)=f_n(T,x)=\exp(v(T)q(x)).$ And since $f_n$ is non zero, so is $\tilde{f}_n$. We can now apply Lemma \ref{log} to yield the existence of  $\tilde{g}$ continuous on $[0,T]\times \R$, such that $\exp(\tilde{g})=\tilde{f}_n$, and to specify $\tilde{g}$, we choose that  $\tilde{g}(T,0)=v(T)q(0)$. We can now define: 
 \begin{align}
 \log f_n(t,x)=\tilde{g}(t,x)+\frac{T}{n}\sum_{\substack{s\in \mathrm{supp}(\mu_n) \\ s\geq t}}\sum_{i=1}^3u_i(s)p_i(x).\label{def_log_fn}
\end{align}
Since $\sum_{s\in \mathrm{supp}(\mu_n), s\geq t}\frac{T}{n}\sum_{i=1}^3u_i(s)p_i(x)$ is continuous in  the regions $(\frac{jT}{n},\frac{(j+1)T}{n}]\times \R$, for $j=0,1,\ldots,n-1$,
the definition of $\log f_n$ in \eqref{def_log_fn} yields that 
\begin{enumerate}
    \item $\exp(\log f_n )=f_n.$
    \item In  the regions $(\frac{jT}{n},\frac{(j+1)T}{n}]\times \R$, $j=0,1,\ldots,n-1$, $\log f_n(t,x)$ is continuous in $(t,x).$
    \item $\lim_{s\rightarrow t^+}\log f_n(s,x) =\log f_n(t,x)-\frac{T}{n}\sum_{i=1}^3u_i(t)p_i(x), t\in \mathrm{supp}(\mu_n).$
\end{enumerate}
Furhtermore, since $\exp(\log f_n(T,x))=\exp(\tilde g(T,x))=\tilde{f}_n(T,x)=f_n(T,x)=\exp(v(T)q(x))$, $\log f_n(T,x)-v(T)q(x)\in \{2k\pi i,k\in \mathbb{Z}\}$. Noticing that $\log f_n(T,x)=\tilde{g}(T,x)$ is continuous in $x$, and $\log f_n(T,0)-v(T)q(0)=\tilde{g}(T,0)-v(T)q(0)=0$, we deduce that $\log f_n(T,x)=v(T)q(x)$.
\end{proof}

We now introduce the discretized version of Theorem \ref{thmB.2}.

\begin{theorem}
\label{C.3}
Suppose that $f_n(t,\cdot)$ does not vanish on $ \R$, then
$$
\phi_{n,k}(t)=\left.\frac{1}{k !} \partial_{x}^{k} \log f_n({T-t}, x)\right|_{x=0}, \quad t \geq 0 
$$
solves the following  system of ODE:
\begin{align}
\phi_{n,k}^{\prime}(t) & =b k \phi_{n,k}(t)+a(k+1) \phi_{n,k+1}(t)+\frac{c^{2}(k+2)(k+1)}{2} \phi_{n,k+2}(t) \\
& +\frac{c^{2}}{2}(\widetilde{\phi}_n(t) * \widetilde{\phi}_n(t))_{k}, \quad \widetilde{\phi}_{n,k}=\phi_{n,k+1}(k+1), \quad t\in \{T-u: u \notin \mathrm{supp} (\mu_n){ \cup \{T\}} \}\\\phi_{n,k}(t) & =\lim_{s\rightarrow t^-}\phi_{n,k}(s)+\frac{T}{n}\sum_{i=1}^3u_i(T-t)(p_i)_{k}, \quad t\in \{T-u: u \in \mathrm{supp}(\mu_n)\}\\
\phi_{n,k}(0) & =v(T)q_k. \label{disc_ode}
\end{align}
\end{theorem}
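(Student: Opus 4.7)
The plan is to run the same strategy as in the proof of Theorem \ref{thmB.2}, but piecewise in the time variable: derive the Riccati ODE on each open interval where $f_n$ is smooth and satisfies a PDE, then handle the jumps separately using the discontinuity of $\log f_n$ captured by Lemma \ref{logfn}. The key structural difference from Theorem \ref{thmB.2} is that the PDE for $f_n$ inside each open interval is homogeneous (no $\sum u_i p_i$ forcing term), so the entire $(u_i,p_i)$ data ends up in the jump condition rather than in the differential equation.

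First, on each region $(\tfrac{jT}{n},\tfrac{(j+1)T}{n})\times\R$, $f_n$ is $C^1$ in $t$ and $C^\infty$ in $x$ by Theorems \ref{lC.5} and \ref{C.2}, and satisfies $(f_n)_t + (a+bx)(f_n)_x + \tfrac{1}{2}c^2(f_n)_{xx}=0$. Since $f_n$ does not vanish, Lemma \ref{logfn} together with Lemma \ref{C1Cinf} gives the same regularity for $\log f_n$ on each such region, and the chain rule yields
\begin{equation}
(\log f_n)_t + (a+bx)(\log f_n)_x + \tfrac{c^2}{2}\bigl[(\log f_n)_{xx} + ((\log f_n)_x)^2\bigr] = 0.
\end{equation}
Setting $g_n(t,x):=\log f_n(T-t,x)$, so that $\phi_{n,k}(t)=\tfrac{1}{k!}\partial_x^k g_n(t,x)\vert_{x=0}$, I use Lemma \ref{B.26} to commute $\partial_t$ with $\partial_x^k$ and apply $\tfrac{1}{k!}\partial_x^k$ at $x=0$ on both sides. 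By the same Leibniz computation as in Theorem \ref{thmB.2}, the linear pieces $(a+bx)(g_n)_x$ and $\tfrac{c^2}{2}(g_n)_{xx}$ produce respectively $a(k+1)\phi_{n,k+1}(t)+bk\phi_{n,k}(t)$ and $\tfrac{c^2(k+2)(k+1)}{2}\phi_{n,k+2}(t)$, while the quadratic $((g_n)_x)^2$ produces $\tfrac{c^2}{2}(\widetilde\phi_n*\widetilde\phi_n)_k$ with $\widetilde\phi_{n,k}=(k+1)\phi_{n,k+1}$. This is exactly the claimed ODE for $t\in\{T-u : u\notin\mathrm{supp}(\mu_n)\cup\{T\}\}$.

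For the jump, Lemma \ref{logfn} gives, for any $t$ with $T-t\in\mathrm{supp}(\mu_n)$,
\begin{equation}
\lim_{s\to(T-t)^+}\log f_n(s,x) = \log f_n(T-t,x) - \tfrac{T}{n}\sum_{i=1}^3 u_i(T-t)p_i(x).
\end{equation}
Applying $\tfrac{1}{k!}\partial_x^k$ at $x=0$ and reversing time via $s=T-\tilde s$, the one-sided limit $s\to(T-t)^+$ corresponds to $\tilde s\to t^-$, and the relation translates into $\lim_{\tilde s\to t^-}\phi_{n,k}(\tilde s)=\phi_{n,k}(t)-\tfrac{T}{n}\sum_i u_i(T-t)(p_i)_k$, which rearranges into the stated jump condition. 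The initial condition $\phi_{n,k}(0)=v(T)q_k$ is immediate from $\log f_n(T,x)=v(T)q(x)$ stated in Lemma \ref{logfn}.

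The main delicate point is justifying the exchange of $\tfrac{1}{k!}\partial_x^k$ with the one-sided limit in the jump step, i.e.~passing the $x$-derivatives inside $\lim_{s\to(T-t)^+}$. This reduces to uniform control on the $x$-derivatives of $\log f_n(s,\cdot)$ as $s$ approaches the jump time from the right, which follows from the uniform-in-$t$ estimates for $\partial_x^k f_n$ in Theorem \ref{lC.5}, the non-vanishing assumption, and the continuity of $\log f_n$ on the half-open regions $(\tfrac{jT}{n},\tfrac{(j+1)T}{n}]\times\R$ established in Lemma \ref{logfn}.
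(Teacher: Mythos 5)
Your proof follows essentially the same route as the paper's: derive the homogeneous Riccati ODE on each open sub-interval from the discretized Feynman--Kac PDE of Theorem~\ref{C.2}, use Lemma~\ref{B.26} to commute $\partial_t$ with $\partial_x^k$, and read the jump condition off the explicit discontinuity relation for $\log f_n$ in Lemma~\ref{logfn} after time reversal. One small remark on the ``delicate point'' you flag: the uniform estimate $|\partial_x^k f_n|\leq |q_k|(|x|)+C_k$ from Theorem~\ref{lC.5} by itself does not justify interchanging $\lim_{s\to(T-t)^+}$ with $\partial_x^k$ (a uniform bound is not equicontinuity); the cleaner route is to note that Lemma~\ref{logfn} represents $\log f_n = \tilde g + \frac{T}{n}\sum_{s\in\mathrm{supp}(\mu_n),\,s\geq t}\sum_i u_i(s)p_i(x)$ with $\tilde g$ (and hence $\partial_x^k\tilde g$) continuous across the jump time, so the jump in $\partial_x^k \log f_n(\cdot,0)$ comes entirely from the explicit piecewise-constant-in-$t$ second term, which trivially commutes with $\partial_x^k$ -- the paper's terse ``comparing the derivatives of both sides'' implicitly leans on this structure rather than on the bounds.
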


\begin{proof}
    By Theorem \ref{C.2}, in the regions $(\frac{jT}{n},\frac{(j+1)T}{n})\times \R$, for $j=0,1,\ldots,n-1$, $f_n$ is $C^1$ in $t$ and $C^\infty$ in $x$. Since $f_n\neq 0$, thus by Lemma \ref{C1Cinf}, $\log f_n$ in \ref{logfn} is $C^1$ in $t$ and $C^\infty$ in $x$ in these regions. Set $g_n(t,x)=\log f_n(T-t,x)$, $g_n$ then satisfies the following PDE:
    \begin{align}
    &g_t(t,x) =g_x(t,x) \left(a+bx\right)+\frac{1}{2} c^{2}g_{xx}(t,x) +\frac{1}{2}c^2g_x^{2}, t\in \{T-u: u \notin \mathrm{supp}(\mu_n){ \cup \{T\}} \} \\
 &\lim_{s\rightarrow t^-}g(s,x)=g(t,x)-\frac{T}{n}\sum_{i=1}^3u_i(T-t)p_i(x), t\in \{T-u: u \in \mathrm{supp}(\mu_n)\} 
 \\
 &g(0,x)=v(T)q(x).
\end{align}
Next, we apply Lemma \ref{B.26} as in the proof of Theorem \ref{thmB.2}, and then comparing the derivatives of both sides in $x$ yields the system of ODE. Lastly, the initial condition of \eqref{disc_ode} can be easily deduced from the fact that $g_n(0,x)=v(T)q(x)$.
\end{proof}

\subsection{Putting everything together}
\label{subsection7.3}

In this subsection, we will use the condition that the extension of $f_n$ in the the complex plane $(f_n(t,\cdot))_c$, defined in \ref{extension} does not vanish for $x\in \mathbb C$ as opposed to $\in \R$.
This condition is both necessary and sufficient to guarantee that $\log f_n$ is entire in $x\in \R$, see Lemma \ref{hololift} and Remark \ref{remarkB.4}.

\begin{lemma}
\label{C.4}
 If $(f_n)_c(t,\cdot)$  does not vanish on $\mathbb C$, then $f_n(t,x)=\exp\left(\sum_{k\geq 0}\phi_{n,k}(T-t)x^k \right)$, where $\phi_{n,k}$ are defined as in Theorem \ref{C.3}.
\end{lemma}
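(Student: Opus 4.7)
The plan is to lift the real logarithm $\log f_n$ constructed in Lemma \ref{logfn} to an entire function on $\mathbb{C}$ using the hypothesis that $(f_n(t,\cdot))_c$ does not vanish, and then to identify its Taylor coefficients at $x=0$ with the family $\phi_{n,k}(T-t)$ from Theorem \ref{C.3}.

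First, by Theorem \ref{analytic}, for each fixed $t\leq T$ the function $f_n(t,\cdot)$ is entire on $\mathbb{R}$, and Definition \ref{extension} then yields an entire complex extension $(f_n(t,\cdot))_c$ on $\mathbb{C}$. Under the assumption that this extension is nowhere zero on $\mathbb{C}$, the standard fact that a non-vanishing holomorphic function on a simply connected domain admits a holomorphic logarithm (this is precisely the content of Lemma \ref{hololift}) produces an entire function $L(t,\cdot):\mathbb{C}\to\mathbb{C}$ with $\exp(L(t,z))=(f_n(t,\cdot))_c(z)$ for every $z\in\mathbb{C}$.

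Next, I reconcile $L(t,\cdot)$ with the real logarithm $\log f_n(t,\cdot)$ supplied by Lemma \ref{logfn}. Both functions exponentiate to $f_n(t,\cdot)$ on $\mathbb{R}$ and are continuous there, so their difference is a continuous $2\pi i\mathbb{Z}$-valued map on $\mathbb{R}$, hence a constant that can be absorbed by choosing the normalisation $L(t,0)=\log f_n(t,0)$ (permitted since $L$ is only determined up to an additive integer multiple of $2\pi i$). With this choice, $\log f_n(t,\cdot)$ coincides on $\mathbb{R}$ with the restriction of the entire function $L(t,\cdot)$, so $\log f_n(t,\cdot)$ is itself real-entire.

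Finally, as an entire function, $\log f_n(T-t,\cdot)$ equals its Taylor series at $x=0$ with infinite radius of convergence, i.e.\ $\log f_n(T-t,x) = \sum_{k\geq 0} \phi_{n,k}(T-t)\, x^k$ for every $x\in\mathbb{R}$, by the very definition $\phi_{n,k}(T-t) = \tfrac{1}{k!}\partial_x^k\log f_n(T-t,x)|_{x=0}$ from Theorem \ref{C.3}; exponentiating yields the stated identity for $f_n(t,x)$. The main obstacle is the global step from a merely local logarithm---which would follow straightforwardly from Theorem \ref{C.3} and the Taylor expansion of $\log f_n$ around $x=0$---to an entire one with infinite radius of convergence, and this is exactly what the complex non-vanishing condition provides; as noted in Remark \ref{remarkB.4}, non-vanishing on $\mathbb{R}$ alone would not suffice.
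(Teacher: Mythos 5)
Your proof is correct and follows essentially the same route as the paper, which is a one-line application of Lemma \ref{hololift} to $g=\log f_n$ (note the lemma already asserts, for \emph{any} continuous $g$ with $\exp(g)=f$, that $g$ is real-entire once $f_c$ is non-vanishing). Your steps constructing the complex logarithm $L$ and then reconciling it with $\log f_n$ on $\mathbb{R}$ simply unpack the $\Leftarrow$ direction of the proof of Lemma \ref{hololift}, so the argument is sound but a bit more verbose than a direct invocation of that lemma.
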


\begin{proof}
By Lemma \ref{hololift}, $\log f_n$ is entire on $\R$, and notice that $\phi_{n,k}(t)=\left.\frac{1}{k !} \partial_{x}^{k} \log f_n({T-t}, x)\right|_{x=0}.$      
\end{proof}

\begin{proof}[Proof of Theorem \ref{T:Discrete}]

By Lemma \ref{C.4}, $f_n(t,x)=\exp\left(  \sum_{k\geq 0}\phi_{n,k}(T-t)x^k \right)$, where $\phi_{n,k}$ are defined as in Theorem \ref{C.3}.

We set $\psi_{n,k}(t)=\phi_{n,k}(t)-v(T-t)q_k$. By Lemma \ref{2.16} and Lemma \ref{lem3.1}, this is equivalent to 
\begin{align}
    \psi_{n,k}(t) &= \phi_{n,k}(t) - \rho g_1(t)g_0(T-t)\frac{p_{k-1}}{ck},  \quad k\geq 1,\\ \psi_{n,0}(t)&=\phi_{n,0}(t).
\end{align}
Therefore, by Lemma \ref{C.5},
\begin{align}
F(t,x)=&f(t,x)\exp(-v(t)q(x))\\ =&\lim_{n\rightarrow\infty}f_n(t,x)\exp(-v(t)q(x)) 
\end{align}
and, by Lemma \ref{C.4},
\begin{align}
 f_n(t,x)\exp(-v(t)q(x)) =&\exp\left(  \sum_{k\geq 0}\phi_{n,k}(T-t)x^k \right)\exp(-v(t)q(x)) \quad
\\=&\exp\left(  \sum_{k\geq 0}(\phi_{n,k}(T-t)-v(t)q_k)x^k \right)
\\=&\exp\left(  \sum_{k\geq 0}\psi_{n,k}(T-t)x^k \right).
\end{align}
Recalling that $F_n(t,x)=f_n(t,x)\exp(-v(t)q(x)) $ by Definition~\ref{deffn} ends the proof. 
\end{proof}

\appendix
\section{Proof of Lemma \ref{2.16}}
\label{sectionA}

\begin{lemma} \label{lem3.1}
Let $p$ be a power series with an infinite radius of convergence. 
Define the power series $r_1,r_2$ by 
\begin{align*}
    r_1(x)=-\frac{1}{c}((a+bx)p(x)+\frac{1}{2}c^2p'(x)), \quad  r_2(x)=\frac{1}{c}\int_0^x p(y)dy, \quad x\in \R, 
\end{align*}
which again have an infinite radius of convergence. Then,   for any continuously differentiable function $h:[0,T]\to \mathbb C$, it holds that
$$\int_t^T h(s)p(X_s) dW_s =\int_t^T \left( h(s)r_1(X_s) - h'(s)r_2(X_s)\right) ds+h(T)r_2(X_T)-h(t)r_2(X_t).$$

\end{lemma}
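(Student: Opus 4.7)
The identity is essentially a Romano-Touzi type representation: the stochastic integral $\int_t^T h(s)p(X_s)dW_s$ is rewritten as a Lebesgue integral plus boundary terms by integrating against the antiderivative $r_2$. My plan is to apply Itô's formula to $r_2(X_\cdot)$ and then perform integration by parts against the deterministic weight $h$.

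\textbf{Step 1: Regularity of $r_2$.} Since $p$ has infinite radius of convergence, so does $r_2(x) = c^{-1}\int_0^x p(y)\,dy$ (term-by-term integration preserves infinite radius of convergence). In particular $r_2$ is entire, with $r_2'(x) = p(x)/c$ and $r_2''(x) = p'(x)/c$, both continuous on $\mathbb{R}$, so $r_2 \in C^2(\mathbb{R})$.

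\textbf{Step 2: Itô's formula on $r_2(X_s)$.} Using $dX_s = (a+bX_s)ds + c\,dW_s$ and $d\langle X\rangle_s = c^2 ds$, the $C^2$ version of Itô's formula yields
\begin{align}
dr_2(X_s) &= \frac{p(X_s)}{c}\bigl((a+bX_s)ds + c\,dW_s\bigr) + \frac{1}{2}\cdot\frac{p'(X_s)}{c}\cdot c^2\,ds \\
&= p(X_s)\,dW_s + \frac{1}{c}\Bigl((a+bX_s)p(X_s) + \tfrac{c^2}{2}p'(X_s)\Bigr)ds \\
&= p(X_s)\,dW_s - r_1(X_s)\,ds,
\end{align}
which rearranges to $p(X_s)\,dW_s = dr_2(X_s) + r_1(X_s)\,ds$. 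Note that both sides make sense pathwise: the stochastic integral is well-defined because $p(X_\cdot)$ is continuous in $s$ (hence in $L^2_{\mathrm{loc}}$ almost surely), so that $\int_t^T h^2(s)p^2(X_s)ds<\infty$ a.s.

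\textbf{Step 3: Integration by parts against $h$.} Since $h\in C^1([0,T];\mathbb{C})$ is of finite variation, standard integration by parts (or Itô's product rule with a finite-variation factor) gives
\begin{align}
d\bigl(h(s)r_2(X_s)\bigr) = h'(s)r_2(X_s)\,ds + h(s)\,dr_2(X_s).
\end{align}
Solving for $h(s)\,dr_2(X_s)$ and substituting Step~2 multiplied by $h(s)$,
\begin{align}
h(s)p(X_s)\,dW_s = h(s)r_1(X_s)\,ds - h'(s)r_2(X_s)\,ds + d\bigl(h(s)r_2(X_s)\bigr).
\end{align}

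\textbf{Step 4: Integrate from $t$ to $T$.} Integrating the previous identity over $[t,T]$ produces the claimed formula, the boundary term $h(T)r_2(X_T) - h(t)r_2(X_t)$ coming from the exact differential. There is no real obstacle here since $r_2 \in C^2$ and $h\in C^1$ supply exactly the regularity needed for both Itô's formula and the integration-by-parts identity; the only point to mention carefully is that although $p$ is only assumed to have an infinite radius of convergence (and not necessarily negligible to double factorial), the identity is purely pathwise/in the sense of semimartingale calculus, so it holds as an equality of random variables without any additional integrability assumption.
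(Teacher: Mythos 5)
Your proof is correct and is essentially the same argument the paper gives: the paper applies Itô's formula directly to the product $h(s)r_2(X_s)$, which in one step produces exactly the combination of your Step~2 (Itô on $r_2(X_\cdot)$) and Step~3 (integration by parts against the finite-variation factor $h$). Your sign bookkeeping for $r_1$ and the final rearrangement are both right, and your remark that the identity is pathwise and needs no integrability beyond $r_2\in C^2$, $h\in C^1$ is an accurate observation.
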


\begin{proof}
    This follows from a straightforward application of Itô's Lemma on the process $(h(s)r_2(X_s))_{s\leq T}$ between $t $ and $T$.  Recall the dynamics of $X$ in \eqref{polynomial_model}. 
\end{proof}

\begin{proof}[Proof of Lemma \ref{2.16}]

Using $d\log S_t = -\frac{\sigma^2_t}{2}dt + \sigma_t dB_t$, we have:
\begin{align}
\int_t^T g_1(T-s) d\log S_s = &\int_t^T -\frac{1}{2}g_1(T-s)g_0^2(s)p^2(X_s)ds +\rho g_1(T-s)g_0(s)p(X_s)dW_s \\
&+\sqrt{1-\rho^2} g_1(T-s)g_0(s)p(X_s)dW^{\perp}_s.
\end{align}
Conditioning on $\mathcal{F}_t \vee \mathcal{F}^W_T$,the integral $\int_t^T\sqrt{1-\rho^2} g_1(T-s)g_0(s)p(X_s)dW^{\perp}_s$ is Gaussian with conditional variance $\int_t^T(1-\rho^2) g_1^2(T-s)g_0^2(s)p^2(X_s)ds$. Therefore,
\begin{align}
\E &\left[\exp \left(\int_t^T\sqrt{1-\rho^2} g_1(T-s)g_0(s)p(X_s)dW^{\perp}_s\right)\Mid \mathcal{F}_t \vee \mathcal{F}^W_T \right]\\
&=\exp\left(\frac{1}{2}\int_t^T(1-\rho^2) g_1^2(T-s)g_0^2(s)p^2(X_s)ds\right),
\end{align}
so that
\begin{align}
F(t,X_t)&=\mathbb{E} \left[ \exp \left( \int_t^T g_1(T-s) d\log S_s + \int_t^T g_2(T-s) \sigma_s^2 ds\right) \Mid  \mathcal{F}_t \right]\\
&=\mathbb{E} \Bigg[ \exp\Bigg(\int_t^T \left(\frac{1}{2}(1-\rho^2)g_1^2(T-s)-\frac{1}{2}g_1(T-s)+g_2(T-s)\right)g_0^2(s)p^2(X_s)ds\\
&+ \int_t^T \rho g_1(T-s)g_0(s)p(X_s) dW_s\Bigg) \Mid \mathcal{F}_t \Bigg].
\end{align}
By applying Lemma \ref{lem3.1} and choosing $h(t)= \rho g_1(T-t)g_0(t)$, we can rewrite the following:
\[
\int_t^T \rho g_1(T-s)g_0(s)p(X_s) dW_s=\int_t^T h(s)r_1(X_s)-h'(s)r_2(X_s) ds+h(T)r_2(X_T)-h(t)r_2(X_t).
\]
Next, define $u_1(s)=(\frac{1}{2}(1-\rho^2)g_1^2(T-s)-\frac{1}{2}g_1(T-s)+g_2(T-s))g_0^2(s),p_1=p^2,
u_2(s)=h(s),p_2=r_1,u_3(s)=h'(s),p_3=-r_2,v(s)=h(s),q=r_2$. { Since $p$ is  negligible to double factorial, $g_0,g_1,g_2$ are continuously differentiable,  and $\Re(g_1)=\Im(g_0)=0,\Re(g_2)\leq 0$}, we deduce that $u_i,v$ are continuous in $[0,T]$, $ \Re(u_1)\leq 0, \Re(u_2),\Re(u_3),\Re(v)=0$, $p_1=p^2$, $p_i,q$ are  negligible to double factorial by Lemma \ref{lemB.12}, and 
\[
F(t,x)=\mathbb{E} \left[  \exp\left(\int_t^T \sum_{i=1}^3u_i(s)p_i(X_s)ds+v(T)q(X_T)-v(t)q(X_t)\right) \Mid X_t=x \right].
\]
\end{proof}

\section{A small remark on the complex logarithm}\label{A:complexlog}

For  complex-valued functions, such as  $F$ in \eqref{eq:defF}, the definition of $\log F$ is not trivial, especially if the range of $F$ is not simply connected. We  use the lifting property in Algebraic Topology.

\begin{lemma}
\label{log}
If $f:I\times \mathbb{R} \to  \mathbb C\setminus \{0\}$ is a continuous function where $I$ is an interval of $\R$. Then, there exists a continuous function $g:I\times \mathbb{R} \to  \mathbb C$  such that $\exp(g)=f$. And if the value of $g$ at one point is {specified}, then $g$ is {entirely} specified, i.e.~there exists only one $g$ that satisfies these conditions in this case.
\end{lemma}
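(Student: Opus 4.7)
The plan is standard: invoke the fact that $\exp:\mathbb{C}\to\mathbb{C}\setminus\{0\}$ is a covering map with fibers $2\pi i\,\mathbb{Z}$, and that $I\times\mathbb{R}$ is convex, hence simply connected. The lifting theorem from covering space theory then yields, for any chosen base value $z_0\in\mathbb{C}$ with $\exp(z_0)=f(t_0,x_0)$, a unique continuous lift $g:I\times\mathbb{R}\to\mathbb{C}$ satisfying $\exp\circ g=f$ and $g(t_0,x_0)=z_0$.

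I would treat uniqueness first, since it is the cleaner direction. Given two continuous lifts $g_1,g_2$, define $h:=g_1-g_2$. Then $\exp(h)\equiv 1$, so $h$ takes values in the discrete set $2\pi i\,\mathbb{Z}$. Since $h$ is continuous and $I\times\mathbb{R}$ is connected, $h$ must be constant. Agreement of $g_1$ and $g_2$ at a single point then forces $h\equiv 0$, giving $g_1=g_2$ throughout $I\times\mathbb{R}$.

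For existence, rather than quoting the general lifting lemma, I would give a direct construction, since the domain is as simple as can be. Fix $(t_0,x_0)\in I\times\mathbb{R}$ and $z_0\in\mathbb{C}$ with $\exp(z_0)=f(t_0,x_0)$. For any $(t,x)\in I\times\mathbb{R}$, consider the rectilinear path $\gamma_{(t,x)}:[0,1]\to I\times\mathbb{R}$ going first from $(t_0,x_0)$ to $(t,x_0)$ and then from $(t,x_0)$ to $(t,x)$. The composition $f\circ\gamma_{(t,x)}$ is continuous and non-vanishing on the compact interval $[0,1]$, so one can subdivide $[0,1]$ finely enough that the image of each subinterval lies in an open disc of $\mathbb{C}\setminus\{0\}$ on which a holomorphic branch of the logarithm is defined. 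Patching these local branches, starting from $z_0$ at $s=0$, defines $g(t,x)$ unambiguously as the value of the lift at $s=1$.

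The main obstacle will be verifying that the resulting $g$ is continuous in $(t,x)$. The argument is routine but must be done carefully: for $(t',x')$ close to $(t,x)$, the rectilinear paths $\gamma_{(t',x')}$ and $\gamma_{(t,x)}$ are uniformly close, so by the uniform continuity of $f$ on the (compact) union of their images, the same subdivision and the same local branches of the logarithm can be used, yielding $|g(t',x')-g(t,x)|\to 0$ as $(t',x')\to(t,x)$. Independence of $g(t,x)$ from the specific subdivision follows from the analogous two-path comparison on a small rectangle, using simple connectedness of the rectangle. Once continuity is in hand, the identity $\exp(g)=f$ holds by construction, and uniqueness from the first paragraph pins down $g$ given $g(t_0,x_0)=z_0$.
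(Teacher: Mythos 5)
Your proof is correct, and it follows the same broad strategy as the paper (covering-space lifting for $\exp:\mathbb{C}\to\mathbb{C}\setminus\{0\}$), but it deliberately replaces the paper's black-box citations with self-contained arguments. The paper simply invokes the lifting criterion (Hatcher, Prop.~1.33) for existence and the unique lifting property (Hatcher, Prop.~1.34) for uniqueness. Your uniqueness argument is a direct unpacking of the latter: the difference $g_1-g_2$ is a continuous map into the discrete fiber $2\pi i\,\mathbb{Z}$, hence constant on the connected domain $I\times\mathbb{R}$, and a single matching value forces equality. That is clean and needs no topology beyond connectedness. Your existence argument likewise unpacks the path-lifting construction concretely, exploiting the convexity of $I\times\mathbb{R}$ to use rectilinear paths and local branches of $\log$. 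This buys a proof readable by someone without covering-space theory, at the cost of more bookkeeping. The one place your write-up is thinner than it should be is the well-definedness and continuity verification: the phrase ``independence of $g(t,x)$ from the specific subdivision follows from the analogous two-path comparison on a small rectangle, using simple connectedness of the rectangle'' is essentially restating (a local case of) the homotopy-lifting property, which is precisely the nontrivial ingredient you chose not to cite. To make the argument genuinely self-contained you would need to show that refining a subdivision does not change the terminal value of the lift (easy), and that any two homotopic paths give the same lift value via a grid subdivision of the homotopy square -- the standard monodromy argument. As written, this step leans on the same machinery the paper cites, just phrased informally; either make it precise or, like the paper, reference the unique path-lifting and homotopy-lifting theorems directly.
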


\begin{proof}
    Notice that $p=\exp(z)$ is a covering projection from $\mathbb C$ to $\mathbb C \setminus \{0\}$, i.e.~a local homeomorphism. And $I\times \mathbb{R}$ is path-connected, locally path-connected, and with a trivial 
fundamental group. Then by Proposition 1.33  of \cite{hatcher2002algebraic}, the lifting criterion, $f$ as a continuous function from $I\times \mathbb{R}$ to $\mathbb C \setminus \{0\}$, can be lifted to a continuous function $g$ from $I\times \mathbb{R}$ to $\mathbb C$, i.e. $f=p\circ g=\exp(g)$. And by Proposition of  \cite[1.34]{hatcher2002algebraic}, the unique lifting property, and the fact that $I\times \mathbb{R}$ is connected, if the value of $g$ at one point is given, then $g$ is specified.
\end{proof}

\begin{lemma}\label{C1Cinf}
Let $f:I\times \mathbb{R}\to \mathbb C \setminus \{0\}$  be a continuous function, where $I$ is an interval of $\R$. Assume that  $f$ is $C^{1}$ in the first variable  $t$ and $C^{\infty}$ in the second variable $x$. Then, the function $g$ as defined in Lemma \ref{log} is also $C^{1}$ in $t$ and $C^{\infty}$ in $x$.
\end{lemma}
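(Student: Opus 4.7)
The plan is to work locally: around any point $(t_0,x_0)\in I\times \mathbb R$, reduce the regularity question for $g$ to a composition of $f$ with a \emph{holomorphic} local branch of the logarithm, and then appeal to the chain rule.

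First, I would fix $(t_0,x_0)$, set $w_0:=f(t_0,x_0)\in \mathbb C\setminus\{0\}$, and choose an open disk $D\subset \mathbb C\setminus\{0\}$ centered at $w_0$ and contained in a simply connected subset of $\mathbb C\setminus\{0\}$. On such a $D$ there exists a holomorphic branch $L:D\to\mathbb C$ of the logarithm, i.e.~$\exp(L(w))=w$ for all $w\in D$. By continuity of $f$, there is a connected product-type neighborhood $U\subset I\times \mathbb R$ of $(t_0,x_0)$ with $f(U)\subset D$. Define $\tilde g:=L\circ f$ on $U$. Then $\exp(\tilde g)=f=\exp(g)$ on $U$, so $g-\tilde g$ takes values in $2\pi i\mathbb Z$. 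Since $g$ and $\tilde g$ are both continuous on the connected set $U$, their difference is locally constant, hence constant: $g=\tilde g+2\pi i k$ on $U$ for some $k\in \mathbb Z$.

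Next, since $L$ is holomorphic on $D$, it is $C^\infty$ in the real sense with a bounded differential on compact subsets. The chain rule (splitting $f$ into real and imaginary parts and using Wirtinger calculus, or equivalently applying $L'(w)=1/w$) then yields the regularity transfer: for any $k\ge 1$, if $f$ admits a continuous $k$-th partial derivative in $x$ on $U$, then so does $L\circ f$, and the same holds for the single partial derivative in $t$. Concretely, $\partial_t g=\partial_t\tilde g = L'(f)\,f_t = f_t/f$ on $U$, which is continuous because $f,f_t$ are continuous and $f$ is non-vanishing; and iteratively $\partial_x^k g$ is a polynomial expression in $f,f_x,\ldots,f_x^{(k)}$ divided by a power of $f$, hence continuous.

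Because the argument above gives the desired regularity in a neighborhood of every point of $I\times \mathbb R$, the conclusion $g\in C^1$ in $t$ and $C^\infty$ in $x$ follows globally. The only mildly delicate step is the passage from the pointwise identity $g=\tilde g+2\pi ik$ to the conclusion that $g$ inherits the regularity of $\tilde g$; this is immediate once one notes that the constant $2\pi ik$ disappears under any differentiation, so there is no real obstacle. The main (and essentially only) ingredient is thus the existence of a holomorphic local branch of the logarithm on a simply connected neighborhood of any non-zero point, combined with the chain rule.
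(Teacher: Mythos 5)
Your proposal is correct and follows essentially the same local argument as the paper: around each point one chooses a simply connected neighborhood of $f(t_0,x_0)$ in $\mathbb C\setminus\{0\}$, takes a holomorphic local branch of the logarithm there, identifies $g$ with its composition with $f$ up to an additive constant in $2\pi i\mathbb Z$ (the paper instead normalizes the branch so the constant is $0$), and then invokes the chain rule. The only cosmetic difference is that the paper constructs the branch explicitly by a line-segment contour integral and verifies $\exp(h(z))=z$ via $(\exp(h)/z)'=0$, whereas you invoke the existence of the branch abstractly; otherwise the two proofs coincide.
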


\begin{proof}
Suppose that we define $g$ as in the Lemma \ref{log}.
We need only to prove that for all fixed point $(t_0,x_0)\in I\times \R$, $g$ is $C^{1}$ in $t$, $C^{\infty}$ in $x$ in neighborhood of $(t_0,x_0)$. Since $f(t_0,x_0)\neq 0$, there exists a $\delta>0$, for all $(t,x)\in I\times \R$ such that $|t-t_0|+|x-x_0|<\delta$, we have that $|f(t,x)-f(t_0,x_0)|<|f(t_0,x_0)|$. Therefore, for all $(t,x)\in I\times \R$ such that $|t-t_0|+|x-x_0|<\delta$, $f(t,x)$ is contained in an open ball $B$, with center $f(t_0,x_0)$ and radius $|f(t_0,x_0)|$. Particularly, $B$ is simply connected and does not contain $0$. 

We define $h$ from $B$ to $\mathbb{C}$ as $h(z)=\int_{l_{f(t_0,x_0),z}} \frac{1}{w}dw+g(t_0,x_0)$, where $l_{f(t_0,x_0),z}$ is the line segment between $0$ and $z$. We aim to prove that $g(t,x)=h(f(t,x))$ for all $(t,x)\in I\times \R$ such that $|t-t_0|+|x-x_0|<\delta$.

By the definition of $h$, $h'(z)=\frac{1}{z}$, thus $(\frac{\exp(h(z))}{z})'=0$, so $\frac{\exp(h(z))}{z}$ is a constant on $B$. What is more, $\frac{\exp(h(f(t_0,x_0))}{f(t_0,x_0)}=\frac{\exp(g(t_0,x_0))}{f(t_0,x_0)}=1$. So $\exp(h(z))=z$, and thus $\exp(h(f(t,x)))=f(t,x)=\exp(g(t,x))$ for all $(t,x)\in I\times \R$ such that $|t-t_0|+|x-x_0|<\delta$. Thus $h(f(t,x))-g(t,x)\in \{2k\pi i, k\in \mathbb{Z}\}$. Notice again that $h(f(t_0,x_0))-g(t_0,x_0)=0$, and by continuity, we have that $g(t,x)=h(f(t,x))$ for all $(t,x)\in I\times \R$ such that $|t-t_0|+|x-x_0|<\delta$. Notice that $h'(z)=\frac{1}{z}$, so $h$ is holomorphic
on $B$. 
Therefore $h$ is $C^\infty$ on $B$ when regarding $B$
as a subset of $\R^2$. Then since $g$ is a composition of $f$ and $h$, $g$ is also $C^{1}$ in $t$, $C^{\infty}$ in $x$.
\end{proof}

\begin{lemma}\label{hololift}
{Let $f:\mathbb R\to \mathbb C\setminus {\{0\}}$ be an entire function on $\mathbb{R}$ and $g:\mathbb R\to \mathbb C$ a  continuous function such that  $\exp(g(x))=f(x)$. Then, $g$ is entire on $\mathbb{R}$ if and only if the extension  $f_c$ of $f$ to the complex plane, as defined in Definition \ref{extension}, does not vanish on $\mathbb{C}$.}
\end{lemma}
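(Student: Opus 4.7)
The plan is to establish the two implications separately, relying on the standard fact that $\mathbb{C}$ is simply connected, which gives a holomorphic logarithm for any nowhere-zero entire function.

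For the forward implication, I would assume $g$ is real-entire on $\mathbb{R}$. By Definition \ref{extension} it extends to a complex-entire function $g_c$ on $\mathbb{C}$, and then $\exp(g_c)$ is entire on $\mathbb{C}$. The relation $\exp(g) = f$ holds on $\mathbb{R}$, so $\exp(g_c)$ and $f_c$ are two complex-entire functions agreeing on $\mathbb{R}$; by the identity principle (or by comparison of the power series coefficients at $0$, which are the same for both functions since both equal the Taylor coefficients of $x\mapsto f(x)$ on $\mathbb{R}$), they coincide on $\mathbb{C}$. Hence $f_c = \exp(g_c)$ is nowhere zero on $\mathbb{C}$.

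For the reverse implication, I would assume $f_c$ is entire on $\mathbb{C}$ and does not vanish on $\mathbb{C}$. Since $\mathbb{C}$ is simply connected, a standard result from complex analysis (existence of a holomorphic logarithm on a simply connected domain where the function does not vanish, obtained by integrating $f_c'/f_c$ along paths from $0$) yields a holomorphic function $h:\mathbb{C}\to \mathbb{C}$ such that $\exp(h) = f_c$ on $\mathbb{C}$. We have $\exp(h(0)) = f_c(0) = f(0) = \exp(g(0))$, so $h(0) - g(0) \in 2\pi i\, \mathbb{Z}$, and by subtracting the appropriate integer multiple of $2\pi i$ we may arrange $h(0) = g(0)$.

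It remains to identify $h|_{\mathbb{R}}$ with $g$. On $\mathbb{R}$, both $h|_\mathbb{R}$ and $g$ are continuous and satisfy $\exp(h(x)) = f(x) = \exp(g(x))$, so the continuous map $x \mapsto h(x) - g(x)$ takes values in the discrete set $2\pi i\,\mathbb{Z}$, hence is constant; evaluating at $x=0$ shows the constant is $0$. Therefore $g = h|_{\mathbb{R}}$ is the restriction to $\mathbb{R}$ of an entire function on $\mathbb{C}$, so its power series at $0$ has infinite radius of convergence, and $g$ is real-entire by the definition recalled in the Notations section. No step here is a genuine obstacle; the only mild care point is ensuring the normalization of the holomorphic logarithm matches $g(0)$ rather than differing by a $2\pi i$-multiple, which is why the initial value adjustment is done before the $\mathbb{R}$-identification argument.
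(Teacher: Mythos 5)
Your proof is correct and matches the paper's own argument in both directions: the forward direction compares $\exp(g_c)$ and $f_c$ via the identity principle on $\mathbb{C}$, and the reverse direction builds a holomorphic logarithm $h$ by integrating $f_c'/f_c$, normalizes $h(0)=g(0)$, and uses that the continuous function $h-g$ takes values in the discrete set $2\pi i\,\mathbb{Z}$. The only cosmetic difference is that the paper bakes the normalization $+g(0)$ directly into the definition of $h$, whereas you perform a $2\pi i$-shift afterward; the substance is identical.
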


\begin{proof}
$\Leftarrow$ Assume that $f_c:\mathbb C \to \mathbb C$ does not vanish. Then, for $z\in \mathbb C$, we can define  $h(z)=\int_{l_{0,z}}\frac{f_c'(w)}{f_c(w)}dw+g(0)$, where $l_{0,z}$ is the line segment between $0$ and $z$. It follows that $h'(z)=\frac{f_c'(z)}{f_c(z)}$, showing that  $h$ is holomorphic on $\mathbb{C}$. What is more, consider the function $F(z)=f_c(z)\exp(-h(z))$, we know that  $F(0)=f_c(0)\exp(-h(0))=f(0)\exp(-g(0))=1$, $F'(z)=0$, thus $F(z)=1 
 $, for all  $z\in \mathbb{C}$. Thus  $\exp(g(x)-h(x))=\exp(g(x))\exp(-h(x))=f(x)\exp(-h(x))=F(x)=1$ for all $x\in \R$. Thus there exists a $k\in Z$, such that $g(x)-h(x)=2k\pi i$. And notice that $h(0)=g(0)$, we have that $g(x)=h(x)$ for  $x \in \R$. Particularly, $h$ is holomorphic on $\mathbb{C}$, so it is entire on $\mathbb{C}$ and thus on $\R$, then so is $g$.
 
$\Rightarrow$ Assume $g:\mathbb \R\to \mathbb C$ is entire on $\R$. Then, it can be written in the form $g(x)=\sum_{n=0}^\infty a_nx^n$, which is a power series with infinite radius of convergence. Then $g_c(z)=\sum_{n=0}^\infty a_nz^n$ is entire on $
\mathbb{C}$, and so is $\exp(g_c(z))$. Notice that $\exp(g_c(z))$ and $f_c(z)$ are both entire on $\mathbb{C}$, and $\exp(g_c(x))=\exp(g(x))=f(x)=f_c(x)$ for all $x\in \R$. Since the zeros of an entire function are isolated, except for the zero function, we have that $f_c(z)=\exp(g_c(z))$, and hence $f_c(z)\neq 0$ for all $z\in \mathbb{C}$.

\end{proof}

\begin{remark}\label{remarkB.4}
 {We point out that for a real entire function $f:\mathbb R\to \mathbb C$ that does not vanish    on $\mathbb R$, there exists a continuous function  $\log f:\mathbb R\to \mathbb C$ such that $\exp(\log f)=f$, (choosing $I=[0,0]=\{0\}$ in Lemma \ref{log}). However $\log f$ is not necessarily entire on $\R$. For instance  the function  $f(x)=1+ix$, is entire on $\R$ and does not vanish on $\mathbb R$. However,  $\log f=\frac{1}{2}\log (1+x^2)+\arctan(x)i+2k\pi i, k\in \mathbb{Z}$ is no longer entire on $\mathbb R$. In fact, the Taylor series at $x=0$ for both $\log (1+x^2)$ and $\arctan(x)$ have a finite convergence radius of 1 (can easily be checked) and not $\infty$.  
    }
\end{remark}

\begin{definition}
\label{defB.2}
If the joint characteristic functional $F(t,x)$ is continuous on $[0,T]\times \mathbb R$ and $F(t,x)\neq 0$, for all $(t,x) \in [0,T]\times \R$, we define $\bm{\log F}$ by Lemma \ref{log}, as the function $g$ such that $\exp(g)=F$ and $g(T,0)=0$.
\end{definition}

\begin{remark}
Notice that by the definition of $F$ in \ref{eq:defF}, $F(T,x)=1$, so the condition $g(T,0)=0$ is satisfied.
\end{remark}

\section{Another example of model calibration via Fourier }\label{C:_more_calib_results}

{

  \begin{figure}[H]
    \centering
    \includegraphics[width=0.65\textwidth]{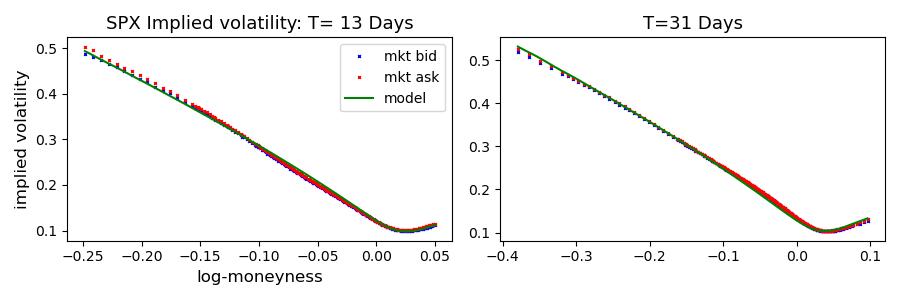}%
    \includegraphics[width=0.65\textwidth]{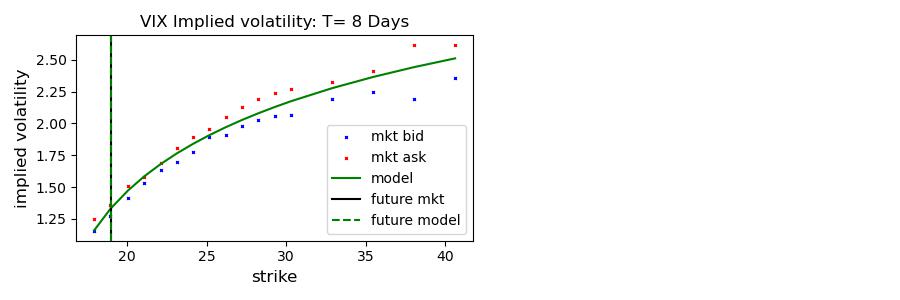}
    \vspace{-0.4cm}
    \caption{Quintic OU model (green lines) jointly calibrated to the SPX and VIX smiles (bid/ask in blue/red) on 9 November 2021 via Fourier using the Nelder-Mead optimization algorithm. The truncation level of the Riccati solver is set at $M=32$, with calibrated parameters $\rho = -0.6838, \alpha = -0.3914, (p_0, p_1, p_3, p_5) = (0.0062, 0.4964, 0.0939, 0.0654)$. $\varepsilon$ is fixed upfront at 1/52 without calibration. }\label{quintic_spx_calib_2}
  \end{figure}

  \begin{figure}[H]
    \centering    \includegraphics[width=0.8\textwidth]{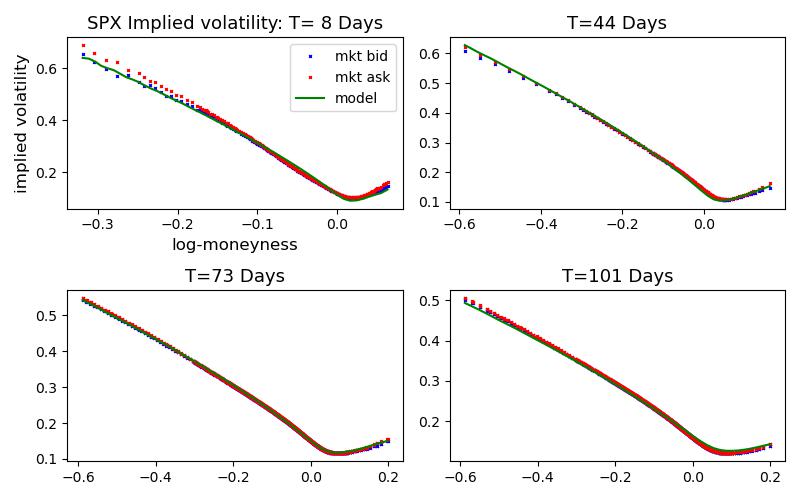}%
    \vspace{-0.3cm}
    \caption{One-factor Bergomi model (green lines) calibrated to the SPX smiles (bid/ask in blue/red) on 9 November 2021 via Fourier using the Nelder-Mead optimization algorithm. The truncation level of the Riccati solver is set at $M=32$, with calibrated parameters $\eta = 1.6002, \rho = -0.7214, \alpha = -0.5992$. $\varepsilon$ is fixed upfront at 1/52 without calibration. }\label{bergomi_calib_2}
  \end{figure}

}

\bibliographystyle{plainnat}
\bibliography{bibl.bib}

\begin{thebibliography}{32}
\providecommand{\natexlab}[1]{#1}
\providecommand{\url}[1]{\texttt{#1}}
\expandafter\ifx\csname urlstyle\endcsname\relax
  \providecommand{\doi}[1]{doi: #1}\else
  \providecommand{\doi}{doi: \begingroup \urlstyle{rm}\Url}\fi

\bibitem[Abi~Jaber(2019)]{abi2019lifting}
Eduardo Abi~Jaber.
\newblock Lifting the heston model.
\newblock \emph{Quantitative finance}, 19\penalty0 (12):\penalty0 1995--2013,
  2019.

\bibitem[Abi~Jaber(2022)]{abi2022characteristic}
Eduardo Abi~Jaber.
\newblock The characteristic function of gaussian stochastic volatility models:
  an analytic expression.
\newblock \emph{Finance and Stochastics}, 26\penalty0 (4):\penalty0 733--769,
  2022.

\bibitem[Abi~Jaber and G{\'e}rard(2024)]{abi2024signature}
Eduardo Abi~Jaber and Louis-Amand G{\'e}rard.
\newblock Signature volatility models: pricing and hedging with fourier.
\newblock \emph{Available at SSRN 4714535}, 2024.

\bibitem[Abi~Jaber and Li(2024)]{abi2024volatility}
Eduardo Abi~Jaber and Shaun~Xiaoyuan Li.
\newblock Volatility models in practice: Rough, path-dependent or markovian?
\newblock \emph{Available at SSRN 4684016}, 2024.

\bibitem[Abi~Jaber et~al.(2019)Abi~Jaber, Larsson, and Pulido]{abi2019affine}
Eduardo Abi~Jaber, Martin Larsson, and Sergio Pulido.
\newblock Affine volterra processes.
\newblock \emph{The Annals of Applied Probability}, 29\penalty0 (5):\penalty0
  3155--3200, 2019.

\bibitem[Abi~Jaber et~al.(2022)Abi~Jaber, Illand, and Li]{abi2022joint}
Eduardo Abi~Jaber, Camille Illand, and Shaun~Xiaoyuan Li.
\newblock Joint {SPX}--{VIX} calibration with gaussian polynomial volatility
  models: deep pricing with quantization hints.
\newblock \emph{Available at SSRN 4292544}, 2022.

\bibitem[Abi~Jaber et~al.(2023)Abi~Jaber, Illand, and Li]{jaber2022quintic}
Eduardo Abi~Jaber, Camille Illand, and Shaun~Xiaoyuan Li.
\newblock The quintic ornstein-uhlenbeck volatility model that jointly
  calibrates \text{SPX} \& {VIX} smiles.
\newblock \emph{Risk Magazine, Cutting Edge Section}, 2023.

\bibitem[Andersen and Andreasen(2000)]{andersen2000jump}
Leif Andersen and Jesper Andreasen.
\newblock Jump-diffusion processes: Volatility smile fitting and numerical
  methods for option pricing.
\newblock \emph{Review of derivatives research}, 4:\penalty0 231--262, 2000.

\bibitem[Bergomi(2005)]{bergomi2005smile}
L~Bergomi.
\newblock Smile dynamics {II}.
\newblock \emph{Risk Magazine}, 2005.

\bibitem[Carr and Madan(1999)]{carr1999option}
Peter Carr and Dilip Madan.
\newblock Option valuation using the fast fourier transform.
\newblock \emph{Journal of computational finance}, 2\penalty0 (4):\penalty0
  61--73, 1999.

\bibitem[Cuchiero and Teichmann(2020)]{cuchiero2020generalized}
Christa Cuchiero and Josef Teichmann.
\newblock Generalized feller processes and markovian lifts of stochastic
  volterra processes: the affine case.
\newblock \emph{Journal of evolution equations}, 20\penalty0 (4):\penalty0
  1301--1348, 2020.

\bibitem[Cuchiero et~al.(2023{\natexlab{a}})Cuchiero, Gazzani, M{\"o}ller, and
  Svaluto-Ferro]{cuchiero2023joint}
Christa Cuchiero, Guido Gazzani, Janka M{\"o}ller, and Sara Svaluto-Ferro.
\newblock Joint calibration to spx and vix options with signature-based models.
\newblock \emph{arXiv preprint arXiv:2301.13235}, 2023{\natexlab{a}}.

\bibitem[Cuchiero et~al.(2023{\natexlab{b}})Cuchiero, Svaluto-Ferro, and
  Teichmann]{cuchiero2023signature}
Christa Cuchiero, Sara Svaluto-Ferro, and Josef Teichmann.
\newblock Signature sdes from an affine and polynomial perspective.
\newblock \emph{arXiv preprint arXiv:2302.01362}, 2023{\natexlab{b}}.

\bibitem[Duffie et~al.(2003)Duffie, Filipovi{\'c}, and
  Schachermayer]{duffie2003affine}
Darrell Duffie, Damir Filipovi{\'c}, and Walter Schachermayer.
\newblock Affine processes and applications in finance.
\newblock \emph{The Annals of Applied Probability}, 13\penalty0 (3):\penalty0
  984--1053, 2003.

\bibitem[Dupire(1992)]{dupire1992arbitrage}
Bruno Dupire.
\newblock Arbitrage pricing with stochastic volatility.
\newblock \emph{Soci{\'e}t{\'e} G{\'e}n{\'e}rale}, 1992.

\bibitem[Eberlein et~al.(2010)Eberlein, Glau, and
  Papapantoleon]{eberlein2010analysis}
Ernst Eberlein, Kathrin Glau, and Antonis Papapantoleon.
\newblock Analysis of fourier transform valuation formulas and applications.
\newblock \emph{Applied Mathematical Finance}, 17\penalty0 (3):\penalty0
  211--240, 2010.

\bibitem[Eidelman(1969)]{eidelman1969parabolic}
S.~D. Eidelman.
\newblock \emph{Parabolic Systems}.
\newblock North-Holland Publishing Co., Amsterdam, 1969.
\newblock Translated from the Russian by Scripta Technica, London.

\bibitem[El~Euch and Rosenbaum(2019)]{el2019characteristic}
Omar El~Euch and Mathieu Rosenbaum.
\newblock The characteristic function of rough heston models.
\newblock \emph{Mathematical Finance}, 29\penalty0 (1):\penalty0 3--38, 2019.

\bibitem[Fang and Oosterlee(2009)]{fang2009novel}
Fang Fang and Cornelis~W Oosterlee.
\newblock A novel pricing method for european options based on fourier-cosine
  series expansions.
\newblock \emph{SIAM Journal on Scientific Computing}, 31\penalty0
  (2):\penalty0 826--848, 2009.

\bibitem[Fouque et~al.(2003)Fouque, Papanicolaou, Sircar, and
  Solna]{fouque2003multiscale}
Jean-Pierre Fouque, George Papanicolaou, Ronnie Sircar, and Knut Solna.
\newblock Multiscale stochastic volatility asymptotics.
\newblock \emph{Multiscale Modeling \& Simulation}, 2\penalty0 (1):\penalty0
  22--42, 2003.

\bibitem[Fouque et~al.(2016)Fouque, Lorig, and Sircar]{fouque2016second}
Jean-Pierre Fouque, Matthew Lorig, and Ronnie Sircar.
\newblock Second order multiscale stochastic volatility asymptotics: stochastic
  terminal layer analysis and calibration.
\newblock \emph{Finance and Stochastics}, 20\penalty0 (3):\penalty0 543--588,
  2016.

\bibitem[Gatheral and Keller-Ressel(2019)]{gatheral2019affine}
Jim Gatheral and Martin Keller-Ressel.
\newblock Affine forward variance models.
\newblock \emph{Finance and Stochastics}, 23:\penalty0 501--533, 2019.

\bibitem[Hatcher(2002)]{hatcher2002algebraic}
Allen Hatcher.
\newblock \emph{Algebraic Topology}.
\newblock Cambridge University Press, 2002.

\bibitem[Heston(1993)]{heston1993closed}
Steven~L Heston.
\newblock A closed-form solution for options with stochastic volatility with
  applications to bond and currency options.
\newblock \emph{The review of financial studies}, 6\penalty0 (2):\penalty0
  327--343, 1993.

\bibitem[Hurd and Zhou(2010)]{hurd2010fourier}
Thomas~R Hurd and Zhuowei Zhou.
\newblock A fourier transform method for spread option pricing.
\newblock \emph{SIAM Journal on Financial Mathematics}, 1\penalty0
  (1):\penalty0 142--157, 2010.

\bibitem[Lewis(2001)]{lewis2001simple}
Alan~L Lewis.
\newblock A simple option formula for general jump-diffusion and other
  exponential l{\'e}vy processes.
\newblock \emph{Available at SSRN 282110}, 2001.

\bibitem[Lipton(2001)]{lipton2001mathematical}
Alexander Lipton.
\newblock \emph{Mathematical methods for foreign exchange: A financial
  engineer's approach}.
\newblock World Scientific, 2001.

\bibitem[Lord and Kahl(2006)]{lord2006rotation}
Roger Lord and Christian Kahl.
\newblock Why the rotation count algorithm works.
\newblock \emph{Tinbergen Institute Discussion Paper}, 2006.

\bibitem[Romano and Touzi(1997)]{romano1997contingent}
Marc Romano and Nizar Touzi.
\newblock Contingent claims and market completeness in a stochastic volatility
  model.
\newblock \emph{Mathematical Finance}, 7\penalty0 (4):\penalty0 399--412, 1997.

\bibitem[Sch{\"o}bel and Zhu(1999)]{schobel1999stochastic}
Rainer Sch{\"o}bel and Jianwei Zhu.
\newblock Stochastic volatility with an ornstein--uhlenbeck process: an
  extension.
\newblock \emph{Review of Finance}, 3\penalty0 (1):\penalty0 23--46, 1999.

\bibitem[Sch{\"u}rger(2002)]{schurger2002laplace}
Klaus Sch{\"u}rger.
\newblock Laplace transforms and suprema of stochastic processes.
\newblock \emph{Advances in finance and stochastics: essays in honour of Dieter
  Sondermann}, pages 285--294, 2002.

\bibitem[Stein and Stein(1991)]{stein1991stock}
Elias~M Stein and Jeremy~C Stein.
\newblock Stock price distributions with stochastic volatility: an analytic
  approach.
\newblock \emph{The review of financial studies}, 4\penalty0 (4):\penalty0
  727--752, 1991.

\end{thebibliography}

\end{document}